\documentclass[twoside,leqno,twocolumn]{article}

\usepackage[letterpaper]{geometry}

\usepackage{ltexpprt}

\usepackage{booktabs} 
    
\newlength{\bracewidth}

\usepackage[ruled]{algorithm}
\usepackage[noend]{algpseudocode}
\usepackage{amsmath}
\usepackage{amssymb}
\usepackage{physics}
\usepackage{color}
\usepackage{graphicx}
\usepackage{subfig}
\usepackage{epsfig}
\usepackage{epstopdf}
\usepackage{amsfonts}
\usepackage{latexsym}
\usepackage{amssymb}
\usepackage{amsmath}
\usepackage{multirow}
\usepackage{mathtools}
\usepackage[export]{adjustbox}
\usepackage[T1]{fontenc}
\usepackage{hyperref}
\usepackage{framed}

\usepackage{comment}
\usepackage{fullpage}
\usepackage[table]{xcolor}

\newtheorem{observation}{Observation}[section]

\newcommand{\eps}{\epsilon}

\newcommand{\cone}{\mathcal{C}}
\newcommand{\parset}{\mathcal{P}}

\title{Search and evacuation with a near majority of faulty agents\thanks{This is the full version of the paper which appeared in~\cite{inbookryan}} \thanks{J. Czyzowicz, R. Killick, and E. Kranakis were supported in part by NSERC grants.}
}

\author{J. Czyzowicz\thanks{Dep. d' Informatique, University\'{e} du Quebec en Outaouais, QC, Canada} 
\and R. Killick\thanks{School of Computer Science, Carleton University, Ottawa, ON, Canada} 
\and E. Kranakis\thanks{School of Computer Science, Carleton University, Ottawa, ON, Canada}  
\and G. Stachowiak\thanks{Institute of Computer Science, University of Wroclaw, Wroclaw, Poland}}
\date{}

\begin{document}
\maketitle

\begin{abstract}
        There are $n\geq 3$ unit speed mobile agents placed at the origin of the infinite line. In as little time as possible, the agents must find and evacuate from an exit placed at an initially unknown location on the line. The agents can communicate in the wireless mode in order to facilitate the evacuation (i.e. by announcing the target's location when it is found). However, among the agents are a subset of at most $f$ crash faulty agents who may fail to announce the target when they visit its location.
        
        In this paper we study this aforementioned problem for the specific case that $n=2f+1$. We introduce a novel type of search algorithm and analyze its competitive ratio -- the supremum, over all possible target locations, of the ratio of the time the agents take to evacuate divided by the initial distance between the agents and the target. In particular, we demonstrate that the competitive ratio of evacuation is at most $7.437011$ for $(n,f)=(3,1)$; at most $7.253767$ for $(n,f)=(5,2)$ and $(7,3)$; and at most $7.147026$ for $(n,f)=(9,4)$. For larger values of $n=2f+1$ we prove an asymptotic upper bound of $4+2\sqrt{2}$. We also adapt our evacuation algorithm for $(n,f)=(3,1)$ to the problem of search by three agents with one byzantine fault, i.e. the faulty agent may also lie about finding the target. In doing so we improve the best known upper bound on this search problem from 8.653055 to 7.437011.
\end{abstract}

\section{Introduction}\label{sec:intro}
        Problems of search and exploration are central to many areas of computer science and mathematics and, accordingly, have received much attention in the literature. Perhaps the simplest search type problem considers the optimal trajectory of a single mobile agent tasked with finding a target placed at an unknown location on the infinite line. The goal of the agent is to minimize the competitive ratio -- the supremum over all possible target locations of the ratio of the time the agent takes to find the target and the initial distance between the agent and target. Independently studied by Bellman and Beck in the 1960's, it is now well known that the optimal trajectory for this single agent search uses a doubling strategy whereby the agent, starting at the origin, moves between points on the line at alternating positions $1,-2,4,-8,\ldots$. It is a simple task to show that this trajectory ensures a competitive ratio of $9$.
        
        Search by multiple agents on the line is a natural extension of the single agent search. Of course, with more than one agent also arise questions about how search is affected by the presence of agents with differing capabilities/attributes. For example, one can consider agents with different speeds, and or communication abilities. A particularly interesting and important topic in group search is the development of fault tolerant search algorithms.

        In this paper we study a version of fault tolerant group search on the line. Specifically, we consider the problem of {\em evacuation} by $n=2f+1$ mobile agents when at most $f$ of these agents are faulty. The agents all begin the search at the same time from a common location and the goal is for the agents to find and exit from a target placed at an unknown location on the line. To achieve this goal the agents can co-operate by exchanging messages with one another in wireless mode (i.e. instantaneously and across any distance). This goal is impeded by the presence of $f$ crash faulty agents who may fail to announce that they have found the target when they detect it.

        We also use the results from the crash evacuation problem to improve upper bounds on the problem of search by three agents at most one of which is {\em byzantine} faulty. A byzantine faulty agent is similar to a crash faulty agent except that byzantine agents can also lie about finding the target. When $n=2f+1$ it can happen that all agents are required to reach the target in order for the search to complete and so crash evacuation can be viewed as a sub-problem of the more difficult byzantine search problem.

\subsection{Model}\label{sec:model}
        We have $n=2f+1$ mobile agents with at most $f$ of them faulty. Robots/agents all begin at a common location referred to as the origin. The agents can move up to a maximum unit speed in either the positive direction (referred to as moving to the right) or the negative direction (referred to as moving to the left) and an agent may change its travel direction arbitrarily often. There is no time cost associated with an agent changing its movement direction. 

        The agents are labelled with unique identifiers taken from the set $\{0,\ldots,n-1\}$ and can communicate with each other in the wireless mode. A parallel search algorithm specifies a unique trajectory for each agent and all agents are assumed to have full knowledge of these trajectories. Since agents know of all other trajectories it follows that the only kind of message broadcast by an agent will be a notification that it has detected the target at its current location. If an agent does not broadcast a message while visiting a location then it is assumed that the agent did not detect the target at that location.

        Each agent is aware of the number $f$ of faults, however, the identity of the faulty agents is unknown. The fault model considered for the agents is that of crash or silent faults. In this model an agent may fail to announce the target when it is detected, however it cannot send a message falsely claiming that it has found the target when it has not (this is known as the byzantine model). The presence of faulty agents thus implies that an agent cannot necessarily trust that the target is not at a location previously visited by another agent. In order to be sure that a target is not at a particular location $x$, it will be required that the location $x$ has been visited by at least one provably reliable agent. With at most $f$ faults, at least $f+1$ agents must visit $x$ in order to have this guarantee.

        It is possible that the faulty agents do not follow the trajectories assigned to them. However, as all agents are aware of the trajectories of the other agents, any agent that is found to be not following its assigned trajectory can be reliably identified as faulty. We will assume that the identity and behaviour of the faulty agents is controlled by an adversary who will always act in a way to maximize the competitive ratio. We may therefore safely assume that any such premature identification will not occur. Each agent will therefore follow the trajectory initially assigned to it until they either find the target or they receive an announcement that the target has been found. Since announcements can always be trusted, agents will immediately move to the announced location in order to complete the evacuation.

\subsection{Preliminaries and notation}\label{sec:prelim}
        We begin with some definitions.
        \begin{Definition}\label{def:evac}
                The evacuation time $E^x_{f}$ of a parallel search algorithm for $n=2f+1$ agents, at most $f$ of which are faulty, is the worst case time required until the last reliable agent reaches a target at location $x$.
        \end{Definition}
        \begin{Definition}\label{def:search}
                The search time $S^x_{f}$ of a parallel search algorithm for $n=2f+1$ agents, at most $f$ of which are faulty, is the worst case time required for the first reliable agent to reach $x$.
        \end{Definition}
        \begin{Definition}\label{def:cr}
                The competitive ratio is defined as $R_{f} = \sup_{x} \frac{E^x_{f}}{|x|}$ and represents the worst case ratio of the evacuation time to the lower bound $|x|$ on the time required to find the target.
        \end{Definition}

        Since crash faulty agents fail silently (i.e. they cannot lie about finding the target), it follows that any announcement made by an agent must be truthful. As a result the only sensible thing for the (reliable) agents to do once an announcement has been made is to immediately move to the announced target's location. This observation has two important implications. First, it implies that we can define a parallel evacuation algorithm entirely by the trajectories of the agents. Second, it implies that we can express the evacuation time as the sum of $S_f^x$ and the distance between the target and the agent most distant from the target at the time $S_f^x$. This last point leads us to make the following definition.
        \begin{Definition}\label{def:delta}
                Given a parallel search algorithm for $n=2f+1$ agents, at most $f$ of which are faulty, define $i^x_f$ and $\Delta^x_f$ as the identity of, and distance between, the agent most distant from location $x$ at the time $S_f^x$.
        \end{Definition}
        With this definition we can express the evacuation time as follows
        \begin{equation}\label{eq:evac_gen}
                E = S_f^x + \Delta_f^x.
        \end{equation}

        We consider agent trajectories defined by sets of turning points -- points on the line at which agents change their movement direction, and between which the agents move at constant unit speed. We use the notation $d_{i,j}$ to refer to the turning point $j$ of agent $i$. We will assume that the turning points alternate on either side of the origin with increasing absolute values, i.e. if $d_{i,j} > 0$ then $d_{i,j+1}<0$ and $|d_{i,j+1}| > |d_{i,j}|$. 

        For these types of trajectories one must make additional assumptions in order to achieve a constant competitive ratio. To see why this is, imagine we have a set of trajectories with first turning points $d_{i,0}$ and assume that for the majority of the agents we have $d_{i,0} > \delta$ for some $\delta>0$. Then the target can be placed at location $-\eps$ with $\eps>0$ arbitrarily small and all agents that initially moved to the left are made to be faulty. The first time a reliable agent can reach the target is then $2\delta$ and the competitive ratio is at least $2\delta/\eps$. 
        
        To overcome this problem one usually makes the assumption that the agents are aware of a lower bound on the distance to the target. Then, by making the first turning points much smaller than this lower bound, a finite competitive ratio is possible. Alternatively, one can assume that the agents do not have first turning points. In other words, one assumes that the turning point sequence $d_{i,j}$ extends to $j=-\infty$ and the agents have always been moving. Although less realistic, we find the latter assumption to be more elegant mathematically and we will take this approach here. 
                
        We end this section with a lemma which specifies how the target will be placed in the worst case.
        \begin{lemma}\label{lm:wc}
                The supremum of $E_f^x/|x|$ always occurs when $x$ is a turning point.
        \end{lemma}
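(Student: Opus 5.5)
Fix a side of the origin, say $x>0$; the case $x<0$ is symmetric. The turning points of all $n$ agents on this side form a discrete set accumulating only at $0$ and $\infty$, because the sequences $d_{i,j}$ extend to $j=-\infty$ with $|d_{i,j}|\to 0$. Let $a<b$ be two consecutive positive turning points from the union over all agents. I will show that $E_f^x/|x|$ restricted to $x\in(a,b]$ is approached or attained at an endpoint. The natural form of the claim is that the supremum over $(a,b]$ equals the limit as $x\to a^+$, which is where the worst case sits: just beyond a turning point. So ``occurs at a turning point'' should be read as the supremum being realized in the limit $x\to d_{i,j}^+$. I will state this interpretation explicitly.

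\emph{Step 1.} For $x\in(a,b)$, no agent turns strictly inside this interval. Hence each agent's visits to $x$ happen on monotone passes through $(a,b)$. Any agent on a rightward pass reaches $x$ at a time of the form $t_0+(x-a)$, where $t_0$ is the time of its pass through $a$. An agent that reaches $x$ only on a leftward pass does so at a time $t_1-(x-a)$, where $t_1$ is the time of its pass through $a$. With outward-growing alternating turning points, the first visit to any new point beyond the previous extreme is rightward. The order of visits across agents does not change within the interval, since agents do not cross each other except at turning points. So the identity of the agent whose visit realizes $S_f^x$ (the $(f+1)$-st distinct agent to visit, with the adversary making the first $f$ faulty) is constant on $(a,b)$, and $S_f^x=c+x$ for a constant $c\ge 0$. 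Here $c\ge0$ holds because no agent can reach $x$ before time $x$.

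\emph{Step 2.} At time $S_f^x=c+x$, every agent's position is an affine function of $x$ with slope $\pm1$, because no agent turns in between. The agent $i_f^x$ most distant from $x$ is therefore determined piecewise, and $\Delta_f^x$ is a maximum of functions of the form $|p_k+\sigma_k x - x|$ with $\sigma_k=\pm1$. The most distant agent is typically on the opposite side of the origin. Each such term is either constant or $\alpha+2x$. Hence $E_f^x = S_f^x+\Delta_f^x = \max_k(\beta_k+\gamma_k x)$ with $\gamma_k\in\{1,3\}$ and $\beta_k\ge0$. The nonnegativity of $\beta_k$ must be checked using $E_f^x\ge$ the time to travel to $x$ and back from the far side. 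Then $E_f^x/x=\max_k(\beta_k/x+\gamma_k)$ is a maximum of nonincreasing functions of $x$, so it is nonincreasing on $(a,b)$. Its supremum is therefore the limit as $x\to a^+$, which is a turning point.

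\textbf{Main obstacle.} The delicate part is verifying $\beta_k\ge 0$, i.e.\ that the affine pieces have nonnegative intercept, together with the claim that the ordering of visits and positions is constant on the open interval. I would handle the intercept by writing each term as the actual elapsed time to a specific event, such as an agent at position $-y$ travelling to $x$. Such a term is $\ge x$ and has the form (time elapsed)$+$(distance), both of which are affine in $x$ with nonnegative constant parts. Any possible discontinuity of $E_f^x$ at the endpoints is exactly why the worst case is attained in the limit at a turning point rather than at an interior point.
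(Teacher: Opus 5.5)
There is a genuine gap in Step 2. You claim every affine piece of $E_f^x$ has a nonnegative intercept, so that $E_f^x/|x|$ is nonincreasing on every turning-point-free interval. This is false in general. The bound $E_f^x\ge x$ (or ``travel to $x$ and back'') only controls intercepts of slope-$1$ pieces. For slope-$3$ pieces you would need $E_f^x\ge 3x$ pointwise, and that can fail.

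Suppose the farthest reliable agent left a turning point $P>0$ at time $t_P$ and is still moving left at time $S_f^x=c+x$. Its distance to $x$ is $2x+c-t_P-P$, so the piece is $E_f^x=3x+(2c-t_P-P)$, and the intercept is typically negative. A concrete instance with $(n,f)=(3,1)$:
\begin{itemize}
\item Let $A,B$ oscillate on a tiny scale near the origin and then run right to a far turning point $L>3a$. They first reach each $x\in(a,L)$ at times $x+c_A<x+c_B$, with $c_B$ tiny.
\item Let $C$ follow a doubling pattern up to the turning point $a$, reached at $t_a=3a$, and then run left to a far point $-Y$.
\end{itemize}
For $x\in(a,L)$ with $x+c_B\ge t_a$, making $A$ faulty gives
\[E_1^x=(x+c_B)+\bigl(x-(a-(x+c_B-t_a))\bigr)=3x+2c_B-t_a-a,\]
so $E_1^x/x$ is strictly increasing there.

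Two smaller points. First, positions at time $c+x$ are not affine on all of $(a,b)$. Agents may turn at points outside $(a,b)$ during the time window $[c+a,c+b]$; here $C$ turns at $a$. Second, agents do cross away from turning points. The constant order of first visits still holds, but for your other reason: each first visit is rightward, at time $c_i+x$.

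The missing ingredient is the global fact the paper imports from Kupavskii--Welzl: the competitive ratio exceeds $3$. With it, your slope information is enough.
\begin{itemize}
\item Moving the target left by $u$ within a turning-point-free interval lowers the search time by exactly $u$ and $\Delta_f^x$ by at most $2u$. So $E_f^x$ drops by at most $3u$.
\item Hence if $E_f^x=\alpha' x$ with $\alpha'>3$, the ratio at $x-u$ is at least $\alpha'+(\alpha'-3)u/(x-u)>\alpha'$.
\item Equivalently, on any piece where the ratio exceeds $3$, the intercept is automatically positive: $\beta_k\ge E_f^x-3x>0$.
\end{itemize}
So monotonicity holds exactly where the supremum can live. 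Since the supremum is above $3$, it is approached only as $x$ decreases to a turning point.

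Your reading of the lemma as a limit $x\to d_{i,j}^+$ is the right one. It is also more careful than the paper's ``location $x$ that defines this ratio''. But without the ``$>3$'' input your argument does not close.
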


\subsection{Related work}\label{sec:related_work}
        Search problems are optimization problems generally concerned with minimizing the time required for a set of mobile agents to find a hidden target in a given environment. One usually assumes that the environment is known in advance and the focus is on studying the effects on the search time under different assumptions on the agent capabilities. Searching in an unknown environment implies exploration where quite often there are additional/alternative goals the agents are required to achieve, e.g. mapping and/or positioning the searchers within the environment \cite{AH00,albers2002exploring,DKP91,HIKK01}.

        When the environment is known, search is a pure optimization problem. The study of search by a single agent on the infinite line was initiated independently by Bellman \cite{bellman1963optimal} and Beck \cite{beck1964linear,beck1965more,beck1970yet} where, among other things, the authors demonstrate the now well known result that a single searcher cannot find a hidden target at initial distance $d$ from the searcher in time less than $9d$. The work by Bellman and Beck gave rise to a number of variants of search on the line. Notable is the work of Heath and Fristedt \cite{fristedt_heath_1974}, Fristedt \cite{fristedt1977hide}, and Gal \cite{gal1972general}. Also notable is the works of Baeza-Yates et. al. \cite{baezayates1993searching,baezayates1995parallel} where, among other things, the authors study problems of search by agents in environments different from the line, e.g. in the plane or at the origin of $w$ concurrent rays (known as the ``Lost Cow'' problem). Group search was initiated in \cite{chrobak2015group} where the problem of evacuation by multiple agents that can communicate face-to-face was studied. More recently, search on the line was considered when: the agents have distinct speeds \cite{bampas2019linear}; turning costs are included \cite{demaine2006online}; the concern is to minimize the energy consumed during the search \cite{CGKKKLNOS19wireless,czyzowicz2021time}.

        Search on the line with possibly faulty searchers was initiated in \cite{PODC16} wherein the authors introduce optimal trajectories -- the {\em proportional schedules} -- for search by $n$ agents at most $f$ of which are crash faulty. This work is particularly relevant to the problem we study here. It should also be noted that the optimality of the proportional schedules for search was only established at a later time in \cite{kupavskii2018lower}. Search with byzantine faults was first studied in \cite{isaacCzyzowiczGKKNOS16} wherein the authors prove a number of lower bounds and upper bounds on the problem. Many of these upper bounds were later improved in \cite{Sun2020}, where, in particular, the authors demonstrate that the proportional schedules of \cite{PODC16} can be used to achieve an upper bound of 8.653055 on the problem of search by three agents, one of which is byzantine faulty.

\subsection{Results and outline}\label{sec:outline}

        Our main result is the development and analysis of a novel search type algorithm for the evacuation problem with crash faulty agents. A summary of the resulting upper bounds are listed in Table~\ref{tbl:bounds} along with the best known lower bounds.
        \begin{table}[tbhp]
                \centering
                \caption{Best known upper and lower bounds on the competitive ratio of evacuation by $n=2f+1$ agents. Upper bounds are due to this work. Lower bounds are due to \cite{PODC16} and \cite{kupavskii2018lower}. The last column displays the reference to the theorem from which the upper bounds derive from. \label{tbl:bounds}}
                \begin{tabular}{c|c|c|c|c}
                        $n$ & $f$ & UB & LB & Theorem\\\hline
                        3 & 1 & 7.437011 & 5.233069 \& \ref{thm:31_crash}\\
                        5 & 2 & 7.253767 & 4.434326 & \multirow{3}{*}{\ref{thm:f_crash}}\\
                        7 & 3 & 7.253767 & 4.076343 & \\
                        9 & 4 & 7.147026 & 3.870110 & 
                \end{tabular}
        \end{table}
        We also prove an asymptotic upper bound on the evacuation by $n=2f+1$ agents of $4+2\sqrt{2}$ (Theorem~\ref{thm:asymp}) and improve the upper bound on search by three agents, at most one of which is byzantine faulty, from $8.653055$ to $7.437011$ (Theorem~\ref{thm:31_search}). The best known lower bound on this search problem is $5.233069$.

        In Section~\ref{sec:prop_schedules} we analyze the competitive ratio of evacuation for the proportional schedules -- a family of trajectories first developed in \cite{PODC16} for the purpose of search by crash faulty agents. We use this section to prove our asymptotic upper bound (Theorem~\ref{thm:asymp}) and also to build intuition on how we can improve upon these trajectories. In Section~\ref{sec:gen_prop_schedules} we introduce a generalization of the proportional schedules and analyze separately the cases that $f > 1$ (Subsection~\ref{sec:many_faults}) and $f=1$ (Subsection~\ref{sec:one_fault}). In Section~\ref{sec:search} we show that our evacuation algorithm for three agents also leads to an improvement on the competitive ratio of search by three agents, one of which is byzantine faulty. Finally, in Section~\ref{sec:conclusions} we conclude with a brief discussion of open problems. Due to space considerations, many figures and proofs of lemmas/theorems have been moved to the appendix.

\section{Crash faulty evacuation}\label{sec:crash_evac}

With the evacuation time expressed as in \eqref{eq:evac_gen} it is clear that in order to optimize the evacuation time one needs to consider a trade-off between the search time $S_f^x$ and the distance $\Delta_f^x$. This is in contrast to the normal search problem which aims only to optimize $S_f^x$. Nevertheless, one can imagine that an algorithm that optimizes $S_f^x$ would still provide a good starting point for studying the evacuation problem. Since it just so happens that an optimal algorithm for crash faulty search is known, we will use this approach to study the evacuation problem.

\subsection{Proportional schedules}\label{sec:prop_schedules}
        An optimal algorithm for the crash faulty search problem was introduced\footnote{The algorithm was later proven to be optimal in \cite{kupavskii2018lower}.} in \cite{PODC16}. This algorithm is referred to as a {\em proportional schedule} and is defined by the collection of $n$ trajectories represented by the sequences of turning points
        \begin{equation}\label{eq:dij}
                d_{i,j} = r^{2i/n} (-r)^j
        \end{equation}
        where $r>1$ is a real number parameter. Figure~\ref{fig:example_prop} depicts example proportional schedules for $n=5,7,9$ 
        using a space-time diagram which plots an agent's position on the $x$ axis with time on the $y$-axis.

        Our strategy for analyzing this algorithm derives from equation \eqref{eq:evac_gen}. We will first compute $S_f^x = S_f^x(r)$ and $\Delta_f^x = \Delta_f^x(r)$. Our goal will be to prove the following Theorem~\ref{thm:crash_evac_prop}. Note that we will not analyze the exact evacuation time at this point since as we will eventually describe a better algorithm.
        \begin{theorem}\label{thm:crash_evac_prop}
                For all $\eps > 0$ the evacuation time of the proportional schedules satisfies
                \[R_f \geq 1+\frac{2r}{r+r^{2/n}} + \frac{4r^{2+1/n}}{(1+\eps)(r+r^{2/n})(r-1)},\]
                \[R_f < 1+\frac{2r}{r+1} + \frac{4r^{2+1/n}}{(r+1)(r-1)}.\]          
        \end{theorem}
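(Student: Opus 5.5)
The plan is to evaluate the two terms of \eqref{eq:evac_gen} explicitly for the proportional schedule \eqref{eq:dij}, using Lemma~\ref{lm:wc} to restrict attention to targets at (or, in the limit, immediately beyond) turning points.

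\textbf{Step 1: visiting times.} First record the elementary timing facts. Since the turning points of agent $i$ form a geometric sequence of ratio $-r$ extending to $j=-\infty$, the agent reaches its turning point $d_{i,j}$ at time
\[|d_{i,j}|+2\sum_{k<j}|d_{i,k}| = |d_{i,j}|\,\frac{r+1}{r-1}.\]
A point $x$ with $|d_{i,j-2}|<|x|\le|d_{i,j}|$ on the same side as $d_{i,j}$ is therefore first visited by agent $i$ at time
\[|d_{i,j-1}|\,\frac{r+1}{r-1}+|d_{i,j-1}|+|x| = \frac{2r}{r-1}\,|d_{i,j-1}|+|x|.\]

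Next, collect the turning-point magnitudes over all agents. With $n=2f+1$ odd, those on the positive side are $r^{2i/n+2k}$ and those on the negative side are $r^{2i/n+2k+1}$. Both families are geometric with ratio $r^{2/n}$: as $i$ ranges over $0,\dots,n-1$, the residues of $2i/n$ and $2i/n+1$ modulo $2$ cover $\{0,1/n,\dots,(2n-1)/n\}$, split evenly between the two sides. This lets us list, for a given $x>0$ (the case $x<0$ is symmetric), the order in which the $n$ agents first reach $x$. Their first visiting times are $\frac{2r}{r-1}r^{s}+x$, where $r^{s}$ runs over the last negative turning points of the agents before they come to $x$.

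\textbf{Step 2: $S_f^x$ and $\Delta_f^x$.} The value $S_f^x$ is the $(f+1)$-st smallest of these visiting times. This is because the adversary makes the first $f$ visitors faulty, and no agent that turned more recently has passed $x$ yet. At that moment we locate every agent from the same space-time picture. The agents heading away from $x$ on the far side, having just turned, are the candidates for $i_f^x$. The farthest one is the agent whose current excursion is largest, and its distance gives $\Delta_f^x$.

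Take $x$ a factor $(1+\eps)$ beyond a positive turning point $d$. Then the agent turning at $d$ just misses $x$, and the relevant $f+1$ agents are forced to arrive via their next negative excursions. The ratios $S_f^x/|x|$ and $\Delta_f^x/|x|$ then become explicit sums. I expect them to simplify to the displayed terms $1+\frac{2r}{r+r^{2/n}}$ and $\frac{4r^{2+1/n}}{(1+\eps)(r+r^{2/n})(r-1)}$, which yields the lower bound.

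\textbf{Step 3: upper bound.} For an arbitrary $x$, Lemma~\ref{lm:wc} lets us take $x$ in the half-open interval between consecutive turning points on one side. We bound $S_f^x$ and $\Delta_f^x$ by the same expressions with the interpolation factor $r^{2/n}$ (arising from how far $x$ lies within its interval) replaced by the extreme value $1$. Since $r^{2/n}>1$, this replacement strictly increases each fraction, which gives the strict inequality $R_f<1+\frac{2r}{r+1}+\frac{4r^{2+1/n}}{(r+1)(r-1)}$.

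\textbf{Main obstacle.} The hard step is the bookkeeping in Step 2. We must identify correctly, as a function of the position of $x$ relative to the interleaved turning points, which agent is the $(f+1)$-st visitor and which agent is farthest at time $S_f^x$. Off-by-one errors in the excursion index are easy to make there. I would first work out the case $n=3$ explicitly on a space-time diagram, then generalize via the exponent lattice $\frac{1}{n}\mathbb{Z}$.
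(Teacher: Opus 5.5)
\textbf{There is a genuine gap.} Step~2 never derives the displayed expressions. It only says you \emph{expect} the explicit sums for $S_f^x$ and $\Delta_f^x$ to simplify to them, and in general they do not. The stated lower bound is not the value of $E_f^x/|x|$ at $x=(1+\eps)d_{i,j}$. It equals that value only if the farthest agent sits exactly on the inner crossing cone $\cone_1$ at time $S_f^x$.

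Concrete example: take $n=3$, $r=4$, the target just beyond $d_{i,j}$, and agent $i+1$ faulty.
\begin{itemize}
\item Then $S_1^x\to\bigl(1+\frac{2r^{4/3}}{r-1}\bigr)|d_{i,j}|\approx 5.2331\,|d_{i,j}|$.
\item At that moment agent $i$ is still heading to $d_{i,j+1}$ and is about $2.5664\,|d_{i,j}|$ past the origin on the far side.
\item So $E_1^x/|x|\to\approx 8.7995$, whereas the displayed lower bound is $\approx 7.4211$ and the upper bound is $\approx 9.3729$.
\end{itemize}
Suppose you did resolve the identification of $i_f^x$ that you leave open. It changes with $r$, and your candidate list is incomplete: the outermost far-side agent can also be one that has just turned at its far turning point and is heading back toward $x$. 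You would then obtain a different, piecewise-in-$r$ expression, and you would still need an inequality linking it to the stated one.

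Step~3 fails for a related reason. The $r^{2/n}$ versus $1$ in $(r+r^{2/n})$ versus $(r+1)$ has nothing to do with where $x$ lies in $I_{i,j}$. In both bounds the target is just beyond $d_{i,j}$. The position $z$ enters only through $T_{i,j,f+1}(z)/|x|=1+\frac{2r^{1+1/n}}{z(r-1)}$, which simply decreases in $z$. So ``replacing $r^{2/n}$ by $1$'' is not a justified bounding step.

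\textbf{The missing idea} is the cone sandwich for $\Delta_f^x$, which makes the identity of $i_f^x$ irrelevant.
\begin{itemize}
\item Since $t_{i,j}=\beta_0|d_{i,j}|$ with $\beta_0=\frac{r+1}{r-1}>1$ (Lemma~\ref{lm:tij}) and agents move at unit speed, every agent at time $t$ is within $t/\beta_0$ of the origin.
\item The crossings of agent $i$ leaving $d_{i,j}$ with agent $i+1$ heading to $d_{i+1,j}$ all lie on the cone of slope $\beta_1=\frac{r+r^{2/n}}{r-r^{2/n}}$ (Lemma~\ref{lm:rho_tau}). Hence on each side some agent is always at distance at least $t/\beta_1$ (Observation~\ref{obs:cone_bounds}).
\end{itemize}
This gives $|x|+S_f^x/\beta_1\le\Delta_f^x\le|x|+S_f^x/\beta_0$ (Lemma~\ref{lm:delta}). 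Therefore $E_f^x$ lies between $|x|+\frac{2r}{r+r^{2/n}}S_f^x$ and $|x|+\frac{2r}{r+1}S_f^x$.

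Now substitute $S_f^x=T_{i,j,f+1}(1+\eps)=\bigl(1+\eps+\frac{2r^{1+1/n}}{r-1}\bigr)|d_{i,j}|$, using $2(f+1)/n=1+1/n$; the lower bound then comes out exactly as stated. For the upper bound, for any $x=zd_{i,j}\in I_{i,j}$ use $S_f^x\le T_{i,j,f+1}(z)$. Dividing by $z|d_{i,j}|$ with $z>1$ gives the strict inequality. The factor $r^{2/n}$ versus $1$ is thus exactly cone $\cone_1$ versus cone $\cone_0$, not the position of the target.
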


        We make note of the following properties of the turning-points $d_{i,j}$ which hold for any agent $i$, turning point $j$, and integer $k$
        \begin{equation}\label{eq:dij_p1}
                d_{i+k,j} = r^{2k/n} d_{i,j},\qquad d_{i,j+k} = (-r)^k d_{i,j}.
        \end{equation}
        Particularly useful is the fact that
        \begin{equation}\label{eq:dmod}
                d_{i+k n,j} = r^{2k} d_{i,j} = d_{i,j+2k}
        \end{equation}
        which allows us to refer to a turning-point $j$ of an agent with ``label'' $i \geq n$ or $i < 0$ with the understanding that we are actually referring to a later/earlier turning-point of agent $i \mod n$. We will make use of these last three properties often and without reference.

        \begin{lemma}\label{lm:tij}
                Define $t_{i,j}$ to be the first time at which agent $i$ reaches location $d_{i,j}$. Then we have
                \[t_{i,j} = \frac{r+1}{r-1}|d_{i,j}|.\]
        \end{lemma}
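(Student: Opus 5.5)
Since the trajectory of agent $i$ has no first turning point (the sequence $d_{i,j}$ extends to $j=-\infty$), the time to first reach $d_{i,j}$ is the total length of the path travelled up to that point. Starting from time $0$ at the origin, I would write this length as a telescoping sum of the consecutive legs between turning points. One subtlety is that "time $0$ at the origin" has to be interpreted as the limit of the turning points accumulating at $0$ as $j\to-\infty$. Consecutive turning points $d_{i,k-1}$ and $d_{i,k}$ lie on opposite sides of the origin, so the leg between them has length $|d_{i,k-1}|+|d_{i,k}|$. By \eqref{eq:dij_p1}, $|d_{i,k}| = r^{k-j}|d_{i,j}|$.

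The key steps are as follows.

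\begin{enumerate}
\item Argue that the first visit to $d_{i,j}$ is at the $j$-th turning point itself. Earlier turning points on the same side have strictly smaller absolute value, so the agent never reaches $d_{i,j}$ before arriving there on the leg ending at turning point $j$.
\item Compute the elapsed time by summing the legs:
\[
t_{i,j} = \sum_{k=-\infty}^{j} \bigl(|d_{i,k-1}|+|d_{i,k}|\bigr) = |d_{i,j}|\sum_{m\ge 0} \bigl(r^{-m}+r^{-m-1}\bigr).
\]
\item Evaluate the geometric series, using $r>1$ so that it converges:
\[
t_{i,j} = |d_{i,j}|\left(1+\tfrac1r\right)\frac{1}{1-1/r} = \frac{r+1}{r-1}\,|d_{i,j}|.
\]
\end{enumerate}

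The only delicate point is justifying the time origin: that the agent is at the origin at time $0$ and that the infinite past path has finite total length. The same geometric convergence settles both. The series $\sum_{k\le j}|d_{i,k}|$ converges, so the trajectory has finite length back to its accumulation point at $0$. That accumulation point occurs at time $t_{i,j}-\sum_{k\le j}(\cdots)=0$, which is consistent with the convention that all agents are at the origin at time $0$.
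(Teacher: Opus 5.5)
Your proof is correct and takes essentially the same route as the paper. The paper writes the recursion $t_{i,j}=t_{i,j-1}+(r+1)|d_{i,j-1}|$ and unrolls it into the geometric series $(r+1)\sum_{k=-\infty}^{j-1}|d_{i,k}|$, which is exactly your sum of legs; your remarks that the first visit occurs at the turning point and that time $0$ corresponds to the accumulation point at the origin make explicit what the paper leaves implicit.
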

        We observed in the caption of Figure~\ref{fig:example_prop} that the turning points of the agents all lie along a common cone. This is evidenced by the ratio $t_{i,j}/|d_{i,j}| = \frac{r+1}{r-1}$ being independent of $i$ and $j$. We will make use of this property shortly.

        We define the interval $I_{i,j}$ as follows.
        \begin{Definition}\label{def:Iij}
                The interval $I_{i,j}$ is defined as the semi-open interval
                \[I_{i,j} := (d_{i,j}, d_{i+1,j}] = (1,r^{2/n}]\cdot d_{i,j}.\]
        \end{Definition}
        The sequence of intervals $[I_{i,j}]_{i=-\infty}^{\infty}$ with $j$ even (resp. $j$ odd) covers the entire positive (resp. negative) half-line without overlap. Thus, for any fixed $j$ and position $x$ there exists a unique integer $i$ for which the position $x \in I_{i,j}$. Using the property \eqref{eq:dmod}, we can equivalently say that there exists a unique $j$ and $i \in \{0,\ldots,n-1\}$ for which $x \in I_{i,j}$. Note also that the symmetry inherent to the trajectories implies that for a fixed $z \in (1,r^{2/n}]$ the competitive ratio when the target is placed at $x = z d_{i,j}  \in I_{i,j}$ will not depend on $i$ or $j$.

        The next lemma will allow us to compute the search time of the algorithm.
        \begin{lemma}\label{lm:Tijk}
                Define the time $T_{i,j,k}(z)$ as the time of the first visit by agent $i+k$, $k = 1,\ldots,n$, to the location $z d_{i,j} \in I_{i,j}$. Then
                \begin{equation}\label{eq:Tijk}
                        T_{i,j,k}(z) = \left(z + \frac{2r^{2k/n}}{r-1}\right)|d_{i,j}|.
                \end{equation}
        \end{lemma}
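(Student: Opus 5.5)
The plan is to find the leg of agent $i+k$'s trajectory on which it first reaches $x = z\,d_{i,j}$, and then read off the arrival time from Lemma~\ref{lm:tij}. Using \eqref{eq:dij_p1} and \eqref{eq:dmod}, the turning points of agent $i+k$ on the same side of the origin as $x$ are $d_{i+k,j-2m} = r^{2k/n-2m}\,d_{i,j}$ for integers $m$. All of these have the sign of $d_{i,j}$.

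\textbf{Step 1: locate the leg.} First, $|d_{i+k,j}| = r^{2k/n}|d_{i,j}| \ge r^{2/n}|d_{i,j}| \ge z|d_{i,j}| = |x|$, since $k\ge 1$ and $z \le r^{2/n}$. Second, $|d_{i+k,j-2}| = r^{2k/n-2}|d_{i,j}| \le |d_{i,j}| < |x|$, since $k \le n$ and $z>1$. The strict inequality comes from the semi-open interval in Definition~\ref{def:Iij}. It is what is needed in the boundary case $k=n$, where $d_{i+n,j-2}=d_{i,j}$.

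Turning points alternate sides with increasing absolute value, and the sequence extends to $j=-\infty$. Hence every excursion of agent $i+k$ to this side before turning point $j$ ends no farther out than $|d_{i+k,j-2}| < |x|$, so $x$ is never visited before that time. The leg from $d_{i+k,j-1}$ (on the opposite side) to $d_{i+k,j}$ passes through the origin and reaches $|d_{i+k,j}|\ge |x|$. So the first visit to $x$ happens on this leg. In the extreme case $k=1$, $z=r^{2/n}$, the visit happens exactly at the turning point, which the formula also covers.

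\textbf{Step 2: compute the time.} By Lemma~\ref{lm:tij}, agent $i+k$ is at $d_{i+k,j-1}$ at time
\[
t_{i+k,j-1}=\frac{r+1}{r-1}\,r^{2k/n-1}|d_{i,j}|.
\]
From there it moves at unit speed: it travels $|d_{i+k,j-1}| = r^{2k/n-1}|d_{i,j}|$ back to the origin, and then $|x| = z|d_{i,j}|$ out to $x$. The total time is
\[
r^{2k/n-1}|d_{i,j}|\left(\frac{r+1}{r-1}+1\right)+z|d_{i,j}|
=\left(z+\frac{2r^{2k/n}}{r-1}\right)|d_{i,j}|,
\]
which is \eqref{eq:Tijk}.

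The only delicate part is Step 1: checking that agent $i+k$ has not reached $x$ on an earlier excursion. That reduces to the two inequalities above, which use both ends of the range $1\le k\le n$ and both ends of the interval $(1,r^{2/n}]$. The rest is routine arithmetic.
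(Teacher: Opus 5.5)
Your proof is correct and takes essentially the paper's approach. Both arguments locate the first visit on the leg from $d_{i+k,j-1}$ to $d_{i+k,j}$ and read the time off Lemma~\ref{lm:tij}; the paper computes $t_{i+k,j}-|d_{i+k,j}-x|$ backward from the turning point, while you add forward from $t_{i+k,j-1}$, and your Step~1 spells out the first-visit claim (including the $k=n$ boundary case) that the paper simply asserts as clear.
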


        We now turn our attention to the distance $\Delta_f^x$. We will only bound this distance and to do this we inspect the space-time points at which the trajectories of the agents intersect. 
        \begin{lemma}\label{lm:rho_tau}
                Respectively define $\rho_{i,j,k}$ and $\tau_{i,j,k}$, $k=0,1\ldots,f$, as the position and time at which the trajectory of agent $i$ intersects the trajectory of agent $i+k$ while agent $i$ is traveling away from its turning point $d_{i,j}$ and agent $i+k$ is traveling towards its turning point $d_{i+k,j}$. Then
                \[\rho_{i,j,k} = \frac{r-r^{2k/n}}{r-1}d_{i,j},\quad \tau_{i,j,k} = \frac{r+r^{2k/n}}{r-1}|d_{i,j}|.\]
        \end{lemma}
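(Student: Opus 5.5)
Because all trajectories are piecewise linear with unit speed and all turning points lie on the cone $t = \frac{r+1}{r-1}|x|$ (Lemma~\ref{lm:tij}), the plan is to compute the intersection directly with coordinates. Take $d_{i,j}>0$ (the other sign follows by reflection $x\mapsto -x$, since $d_{i,j+1} = -r\,d_{i,j}$ and every formula scales linearly). Then $d_{i+k,j} = r^{2k/n} d_{i,j}$ with $r^{2k/n} \le r^{2f/n} < r$ for $k\le f$.

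\textbf{Agent $i$ after its turn.} Agent $i$ reaches $d_{i,j}$ at time $t_{i,j} = \frac{r+1}{r-1}d_{i,j}$ and then moves left. Its position for $t \ge t_{i,j}$ is therefore
\[x = d_{i,j} - (t - t_{i,j}).\]

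\textbf{Agent $i+k$ before its turn.} Agent $i+k$ is heading right toward $d_{i+k,j}$, which it reaches at time $t_{i+k,j} = \frac{r+1}{r-1} r^{2k/n} d_{i,j}$. Its position during the rightward leg is
\[x = d_{i+k,j} - (t_{i+k,j} - t).\]
This leg starts at its previous turning point $d_{i+k,j-1} = -d_{i+k,j}/r$.

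\textbf{Solving for the meeting point.} Setting the two expressions equal gives
\[2t = d_{i,j} - d_{i+k,j} + t_{i,j} + t_{i+k,j}.\]
Write $d = d_{i,j}$ and $s = r^{2k/n}$. Then
\[2t = d(1-s) + \frac{r+1}{r-1}\,d\,(1+s),\]
which simplifies to
\[t = \frac{r+s}{r-1}\,d.\]
Substituting back,
\[\rho = d - \left(\frac{r+s}{r-1} - \frac{r+1}{r-1}\right)d = \frac{r-s}{r-1}\,d.\]
These are exactly the formulas for $\tau_{i,j,k}$ and $\rho_{i,j,k}$ claimed in Lemma~\ref{lm:rho_tau}. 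For the case $k=0$ we get $\rho = d_{i,j}$ and $\tau = t_{i,j}$, which is consistent.

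\textbf{Validity check.} It remains to confirm that the meeting actually happens on the stated segments. For agent $i$, we need $t$ to lie in $[t_{i,j}, t_{i,j+1}]$, i.e.\ before it reaches $-r d$. For agent $i+k$, we need $t$ to lie in $[t_{i+k,j-1}, t_{i+k,j}]$.
\begin{itemize}
\item Since $1 \le s < r$, we have $0 < \rho \le d$. So $\rho$ lies within both segments spatially: agent $i$ covers $[-rd, d]$, and agent $i+k$ covers $[-sd/r, sd]$, with $\rho \le d \le sd$.
\item The time bounds follow from $\frac{r+s}{r-1} \le \frac{(r+1)s}{r-1}$, which is equivalent to $r \le rs$, and from $\frac{r+s}{r-1} \ge \frac{r+1}{r-1}$.
\end{itemize}

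\textbf{Main obstacle.} The only delicate point is this validity check, in particular the requirement $s < r$. This is exactly where the restriction $k \le f$ together with $n = 2f+1$ is used: it gives $2k/n < 1$.
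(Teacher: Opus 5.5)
Your proof is correct and is essentially the paper's argument: both intersect the unit-speed space-time line of agent $i$ leaving $(d_{i,j},t_{i,j})$ with that of agent $i+k$ arriving at $(d_{i+k,j},t_{i+k,j})$, using Lemma~\ref{lm:tij}, and back-substitute to get $\tau_{i,j,k}$; the only difference is that you fix the sign of $d_{i,j}$ by reflection instead of carrying the factor $(-1)^j$. Your segment-validity check, which the paper omits, is a worthwhile addition, though the spatial containment alone already suffices and holds for all $1\le r^{2k/n}\le r^2$, so $r^{2k/n}<r$ (i.e.\ $k\le f$) is a sufficient condition rather than a strict requirement.
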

        The space-time points $(\rho_{i,j,k},\tau_{i,j,k})$ for various $k$ are depicted in Figure~\ref{fig:intersects}. One can observe that for a fixed $k$ the points $(\rho_{i,j,k},\tau_{i,j,k})$ all lie along a common cone. Let $\cone_k$ represent the cone corresponding to points $(\rho_{i,j,k},\tau_{i,j,k})$. Then the slope of $\cone_k$ is $\beta_k := \tau_{i,j,k}/|\rho_{i,j,k}| = \frac{r+r^{2k/n}}{r-r^{2k/n}}$. By referring to Figure~\ref{fig:intersects} one can observe that at all times $t>0$ and for each $k=0,\ldots,f-1$, there exists an agent on either side of the origin in the annular region bounded by cones $\cone_k$ and $\cone_{k+1}$ (this fact also follows easily from the definition of $(\rho_{i,j,k},\tau_{i,j,k})$). Of particular interest is the fact that there always exists an agent located between the cones $\cone_0$ and $\cone_1$ since this agent is the most distant from the origin (on its respective side).
        \begin{observation}\label{obs:cone_bounds}
                At all times $t > 0$ there exists an agent located within each of the intervals $\pm\left[\frac{t}{\beta_1},\ \frac{t}{\beta_0}\right]$.
        \end{observation}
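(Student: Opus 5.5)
The plan is to prove the statement by covering the whole time axis $t>0$ with consecutive time windows. In each window one explicit agent is shown to stay inside the wedge between $\cone_1$ and $\cone_0$. I treat the positive side; the negative side is identical with $j$ odd in place of $j$ even.

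\textbf{Setup.} Fix an even $j$. In space-time, consider the region
\[W=\{(x,t): t>0,\ t/\beta_1 \le x \le t/\beta_0\}.\]
It is bounded by two rays through the origin: $\cone_0$, on which the turning points lie, since $\beta_0=\frac{r+1}{r-1}$ is exactly the ratio $t_{i,j}/|d_{i,j}|$ of Lemma~\ref{lm:tij}; and $\cone_1$. Hence $W$ is a convex cone. The goal is to show that every $t>0$ has an agent whose space-time point lies in $W$. For agent $i$, define the window
\[J_i=[\tau_{i-1,j,1},\ \tau_{i,j,1}].\]
Recall that the ``label'' $i-1$ is interpreted via \eqref{eq:dmod}.

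\textbf{Step 1: agent $i$ passes through the endpoints of its window.} By Lemma~\ref{lm:rho_tau}, $(\rho_{i-1,j,1},\tau_{i-1,j,1})$ is the point where agent $i-1$ leaves $d_{i-1,j}$ and meets agent $i$, while agent $i$ is heading toward $d_{i,j}$. Thus agent $i$ is at this point at time $\tau_{i-1,j,1}$. Likewise, $(\rho_{i,j,1},\tau_{i,j,1})$ is the point where agent $i$, returning from $d_{i,j}$, meets agent $i+1$. Both points lie on $\cone_1$. Since $n\ge 3$ gives $r^{2/n}<r$, the lemma's formula shows both $\rho$'s have the same sign as $d_{i,j}$, i.e.\ they are positive.

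\textbf{Step 2: the turning point lies inside the window.} I need
\[\tau_{i-1,j,1}\le t_{i,j}\le \tau_{i,j,1}.\]
Substituting $|d_{i-1,j}|=r^{-2/n}|d_{i,j}|$ into Lemma~\ref{lm:rho_tau} gives
\[\tau_{i-1,j,1}=\frac{r^{1-2/n}+1}{r-1}\,|d_{i,j}| \le \frac{r+1}{r-1}\,|d_{i,j}| = t_{i,j},\]
and also $\tau_{i,j,1}=\frac{r+r^{2/n}}{r-1}|d_{i,j}|\ge t_{i,j}$. Consequently, during $J_i$ agent $i$ moves at unit speed along two straight segments: from a point of $\cone_1$ to the turning point $(d_{i,j},t_{i,j})$ on $\cone_0$, and then back to a point of $\cone_1$. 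All three points lie in $W$. Since $W$ is convex, both segments lie in $W$, so agent $i$ is in $[t/\beta_1,t/\beta_0]$ for every $t\in J_i$.

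\textbf{Step 3: the windows cover all times.} By \eqref{eq:dij_p1}, $\tau_{i,j,1}=r^{2/n}\tau_{i-1,j,1}$. So the windows $J_i$, $i\in\mathbb{Z}$, are consecutive intervals whose endpoints form a geometric sequence tending to $0$ as $i\to-\infty$ and to $\infty$ as $i\to\infty$. Their union is therefore all of $(0,\infty)$, which proves the positive side. For the negative side, take $j$ odd; the same argument gives an agent in $-[t/\beta_1,t/\beta_0]$.

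\textbf{Main obstacle.} The delicate step is Step 1: confirming that the intersection points of Lemma~\ref{lm:rho_tau} for the pairs $(i-1,i)$ and $(i,i+1)$ really lie on agent $i$'s inbound and outbound legs around the same turning point $d_{i,j}$. This amounts to the time ordering checked in Step 2, together with the fact that both $\rho$'s have the sign of $d_{i,j}$. Once that is established, convexity of $W$ does the rest.
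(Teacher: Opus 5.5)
Your proof is correct and is essentially the paper's own reasoning made explicit. The paper only appeals to Figure~\ref{fig:intersects} and to the definition of $(\rho_{i,j,k},\tau_{i,j,k})$, and your windows $[\tau_{i-1,j,1},\tau_{i,j,1}]$ (agent $i$ going from $\cone_1$ to its turning point on $\cone_0$ and back to $\cone_1$, the same windows the paper later uses in Lemma~\ref{lm:intersects_cond}), together with your checks of the time ordering, the sign of $\rho$, and $\tau_{i,j,1}=r^{2/n}\tau_{i-1,j,1}$, supply the details the paper leaves to the picture.
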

        This observation then easily leads to the following bound on $\Delta^x_f$.
        \begin{lemma}\label{lm:delta}
                \[|x| + \frac{S^x_f}{\beta_1} \leq \Delta_f^x \leq |x| + \frac{S^x_f}{\beta_0}.\]
        \end{lemma}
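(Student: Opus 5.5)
The plan is to apply Observation~\ref{obs:cone_bounds} at the single time $t = S^x_f$ and convert the resulting positions into distances from $x$. Write $t=S^x_f$. By the observation, at time $t$ there is an agent in $\left[\frac{t}{\beta_1},\frac{t}{\beta_0}\right]$ and another in $-\left[\frac{t}{\beta_1},\frac{t}{\beta_0}\right]$. The agent most distant from $x$ is the one farthest on the side opposite to $x$. Without loss of generality take $x>0$, using the symmetry of the construction; the case $x<0$ is the mirror image.

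For the lower bound, the agent at some $y \le -t/\beta_1$ is at distance $x - y \ge x + t/\beta_1 = |x| + S^x_f/\beta_1$ from $x$. Since $\Delta^x_f$ is the maximum distance over all agents, this gives the left inequality immediately.

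For the upper bound, I need to show that no agent is farther than $|x| + t/\beta_0$ from $x$. First, agents on the negative side lie in $[-t/\beta_0, 0]$. The reason is that $\cone_0$, with $\rho_{i,j,0}=d_{i,j}$ and $\tau_{i,j,0}=t_{i,j}$ and slope $\beta_0=\frac{r+1}{r-1}$, is exactly the cone of turning points from Lemma~\ref{lm:tij}. Every trajectory zigzags between consecutive turning points on this cone, so at time $t$ every agent satisfies $|y|\le t/\beta_0$. These negative-side agents are therefore at distance at most $|x| + t/\beta_0$ from $x$. Second, an agent on the positive side at $y\in[0,t/\beta_0]$ is at distance $|x-y| \le \max(|x|, t/\beta_0)$, which is at most $|x|+t/\beta_0$. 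Together these give the right inequality.

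The step needing the most care is the justification that all agents stay inside $\cone_0$ at every time, i.e.\ $|y(t)|\le t(r-1)/(r+1)$. I would check this by noting that between turning points $d_{i,j}$ and $d_{i,j+1}$ the position is piecewise linear with unit slope, while the cone boundary grows at rate $(r-1)/(r+1)<1$. Consequently $|y(t)|/t$ is maximized exactly at the turning points, where Lemma~\ref{lm:tij} gives equality. No further calculation is needed, and the lemma follows.
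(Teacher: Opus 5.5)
Your proof is correct and follows the paper's route. The paper only says the lemma follows from Observation~\ref{obs:cone_bounds} applied at time $S^x_f$, tacitly taking every agent to lie inside $\cone_0$; your explicit check of that containment (turning points lie on $\cone_0$ by Lemma~\ref{lm:tij}, and $|y|$ grows at rate $1$ versus $\frac{r-1}{r+1}<1$ for the boundary) fills in the step the paper omits for the upper bound, as does your $\max(|x|,t/\beta_0)$ bound for agents on the same side as $x$.
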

        We are now in a position to prove Theorem~\ref{thm:crash_evac_prop}. The proof can be found in the appendix.

        To prove Theorem~\ref{thm:crash_evac_prop} we used the fact that the agent most distant from the target at time $S^x_f$ will be somewhere between the cones $\cone_0$ and $\cone_1$. Since $\lim_{n \rightarrow \infty} r^{2/n} = 1$ it is clear that $\lim_{n\rightarrow \infty} \beta_1 = \beta_0$, i.e. the cones $\cone_0$ and $\cone_1$ approach each other as $n$ gets large. This implies that the bounds of Theorem~\ref{thm:crash_evac_prop} will also approach each other for large $n$. This immediately leads to the following conclusion.
        \begin{theorem}\label{thm:asymp}
                The asymptotic competitive ratio of the proportional schedule algorithm is
                \[\hat{R} := \lim_{f \rightarrow \infty} R_f = 7 - \frac{2(r-3)}{r^2-1}.\]
                In particular, if we take $r = 3+2\sqrt{2}$ then $\hat{R} = 4+2\sqrt{2}$.
        \end{theorem}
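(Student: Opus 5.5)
The plan is to squeeze $R_f$ between the two bounds of Theorem~\ref{thm:crash_evac_prop}, with $r>1$ held fixed, and let $f\to\infty$ (equivalently $n=2f+1\to\infty$). Neither bound depends on $f$ except through $r^{2/n}$ and $r^{1/n}$. Since $n\to\infty$, we have $r^{2/n}\to 1$ and $r^{1/n}\to 1$.

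First the upper bound. It tends to
\[1+\frac{2r}{r+1}+\frac{4r^{2}}{(r+1)(r-1)}=1+\frac{2r}{r+1}+\frac{4r^2}{r^2-1}.\]
Next the lower bound, taken for a fixed $\eps>0$. As $f\to\infty$ it tends to
\[1+\frac{2r}{r+1}+\frac{4r^2}{(1+\eps)(r^2-1)}.\]
Taking $\liminf$ and $\limsup$ of $R_f$ and then letting $\eps\to 0$ shows that the limit exists and equals the common value
\[L(r)=1+\frac{2r}{r+1}+\frac{4r^2}{r^2-1}.\]
The only care needed is the order of limits: $\eps$ is arbitrary and independent of $f$, so it is sent to $0$ only after $f\to\infty$.

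It then remains to put $L(r)$ into the stated form. Over the common denominator $r^2-1$,
\[L(r)=\frac{(r^2-1)+2r(r-1)+4r^2}{r^2-1}=\frac{7r^2-2r-1}{r^2-1}=7-\frac{2r-6}{r^2-1}=7-\frac{2(r-3)}{r^2-1}.\]
For the particular value $r=3+2\sqrt2$ we have $r^2=17+12\sqrt2$, so $r^2-1=16+12\sqrt2$ and $2(r-3)=4\sqrt2$. The ratio is $\frac{4\sqrt2}{16+12\sqrt2}=\frac{\sqrt2}{4+3\sqrt2}$, which rationalizes to $\frac{\sqrt2(3\sqrt2-4)}{2}=3-2\sqrt2$. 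Hence $\hat R=7-(3-2\sqrt2)=4+2\sqrt2$.

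One could also note that this $r$ minimizes $\hat R$: the derivative of $(r-3)/(r^2-1)$ vanishes when $r^2-6r+1=0$, whose root with $r>1$ is $3+2\sqrt2$. The statement only requires the evaluation, though. The main obstacle is just the limit-interchange bookkeeping above; the rest is algebra.
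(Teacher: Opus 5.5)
Your proof is correct and follows the paper's route: both let $f\to\infty$ in the bounds of Theorem~\ref{thm:crash_evac_prop}, simplify the limit to $7-\frac{2(r-3)}{r^2-1}$, and evaluate at $r=3+2\sqrt{2}$. The differences are minor. You make the squeeze explicit (liminf/limsup, then $\epsilon\to 0$), whereas the paper takes the limit of the upper bound and cites the preceding discussion that the two bounds converge. The paper also derives $r=3+2\sqrt{2}$ from $r^2-6r+1=0$, which you mention only as an aside.
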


        To compute an exact expression for the competitive ratio as a function of $r$ we would need to determine the identity $i_f^x$ of the agent that is most distant from $x$ at the time $S_x^f$. Although this is not very difficult to do, it does not add to the results of the paper since we will be improving upon this algorithm in the next section. It is useful, however, to know the optimal competitive ratio for this algorithm for the sake of comparison and discussion. Figure~\ref{fig:cr_compare} in the appendix shows a plot of the optimized competitive ratios as a function of $f$ along with the bounds from Theorem~\ref{thm:crash_evac_prop} evaluated with the optimized parameter $r$. Also shown is the competitive ratio when $r$ is chosen to optimize the search time only. One can observe that in all cases except $f=1$, the actual competitive ratio is essentially identical to the lower bound implying that the optimal choice of $r$ places the agent $i_f^x$ on or near the interior cone $\cone_1$ at the time $S_f^x$. This observation leads one to question whether or not the single degree of freedom provided by the parameter $r$ is sufficient to facilitate an efficient trade-off between the search time and the distance $\Delta_f^x$. In the next section we validate this concern and show that a generalized form of the proportional schedule leads to an improvement in the competitive ratio.

\section{Generalized (proportional) schedules}\label{sec:gen_prop_schedules}
        In this section we consider a generalized form of the proportional schedule. Put simply, we will add an extra two turning points between each pair of turning points $d_{i,j}$ and $d_{i,j+1}$ of the normal proportional schedule. We will refer to these ``sub-turning points'' using the notation $d^{(\ell)}_{i,j}$, $\ell=0,1,2$. An intuitive parameterization of the turning points uses parameters $s \in [0,r+1]$ and $a \in [s-1,r]$ as follows
        \begin{equation}\label{eq:dij_gen}
                d^{(\ell)}_{i,j} = d_{i,j} \cdot \begin{cases}
                        1,& \ell=0\\
                        -a,& \ell=1\\
                        s-a,& \ell=2
                \end{cases},
        \end{equation}
        The parameter $s$ controls the distance between $d^{(1)}_{i,j}$ and $d^{(2)}_{i,j}$, and $a$ controls the location of $d^{(1)}_{i,j}$ (relative to $d_{i,j}$). With the bounds given on $s$ and $a$ we will have $d^{(1)}_{i,j} \in [d_{i,j},d_{i,j+1}]$ and $d^{(2)}_{i,j} \in [d_{i,j},d_{i,j+1}]$. One can also observe that when $s=0$ these trajectories are identical to those of the proportional schedule and so this can be rightly called a generalization.  
        
        Although the parameterization using $(r,s,a)$ is intuitive, it will be much more convenient to replace $s$ with the parameter $q := \frac{r+s}{r-1}$. We will use both of these parameterizations, however, we will favor the one with $q$. For the parameterization with $q$ we have
        \begin{equation}\label{eq:dij_gen}
                d^{(\ell)}_{i,j} = d_{i,j} \cdot \begin{cases}
                        1,& \ell=0\\
                        -a,& \ell=1\\
                        q(r-1)-r-a,& \ell=2
                \end{cases},
        \end{equation}
        We make note of the following identities concerning the parameters $q$ and $s$.
        \begin{equation}
                q = \frac{r+s}{r-1},\quad q-1 = \frac{1+s}{r-1},\quad q+s = r(q-1).
        \end{equation}
        We will use these identities repeatedly and without reference
        
        One approach to analyzing these trajectories would be to compute and optimize the competitive ratio as a function of the parameters $(r,q,a)$. This would involve a nightmarish case analysis that would scare away even the most interested readers. Alas, this is not the approach we take. Instead we will describe sets of objectively good choices for the parameters $q$ and $a$ and express the competitive ratio of the resulting trajectories as a function of the parameter $r$. We will use the results from the previous section to guide us as much as possible. The analysis of the cases $f=1$ and $f>1$ is sufficiently different to warrant considering each separately. We will begin with the case that $f>1$ since this case closely mirrors that of the vanilla proportional schedules.

\subsection{Many faults}\label{sec:many_faults}
        Recall that the worst case scenario for the proportional schedules occurs when the target is placed just beyond a turning point $d_{i,j}$ and the first $f$ agents that visit the target are faulty. We will refer to this scenario as Scenario A in order to refer to it quickly. Also recall that, in all cases that $f>1$, it was optimal to choose $r$ so that the agent $i_f^x$ was located on or near the inner bounding cone $\cone_1$ at the time $S_f^x$ in order to minimize the distance $\Delta_f^x$ in the event that Scenario A occurs. Scenario A will still be a potential worst case for the generalized schedule, however, we can now use our extra degrees of freedom to ensure that agent $i_f^x$ is located on the cone $\cone_1$ at the time $S_f^x$ while leaving the parameter $r$ to facilitate a more efficient tradeoff between $\Delta_f^x$ and $S_f^x$. Our goal is to prove the following theorem.
        \begin{theorem}\label{thm:f_crash}
                Fix the number of faults $f>1$ with $n=2f+1$. Define the functions
                \begin{equation}\label{eq:hat_s}
                        \hat{q}(r,u) := \frac{r^{2u/n-1}+1}{(1+r^{2/n})r^{2(u-1)/n-1}-2r^{1/n}}
                \end{equation}
                \begin{equation}\label{eq:hat_a}
                        \hat{a}(r,q) := \begin{cases}
                        q(r^{1-2/n}-1),& q \leq \frac{r}{r-r^{2/n}}\\
                        q(r^{1-4/n}-1),& \mbox{otherwise}
                \end{cases}
                \end{equation}  
                and the set
                \begin{multline}\label{eq:P}
                        \parset := \Biggl\{(r,u)\ |\ r>1,\ u \in \{f+3,\ldots,n\},\\ \frac{r}{r-1} \leq \hat{q}(r,u) \leq \min\left\{\frac{r}{r-r^{1-2/n}},\frac{r}{r-r^{4/n}}\right\}\Biggr\}
                \end{multline}
                Then, for pairs $(r,u) \in \parset$, the competitive ratio of the generalized proportional schedule with parameters $q = \hat{q}(r,u)$, and $a = \hat{a}(r,\hat{q}(r,u))$ is
                \begin{equation}
                        R_f = \begin{cases}
                                R^A_f,& q \leq \frac{r}{r-r^{2/n}}\\
                                \max\{R^A_f,\ R^B_f\},&\mbox{otherwise}.
                        \end{cases}
                \end{equation}
                where
                \[R^A_f = 1 + \frac{2q(1 + 2qr^{1/n})}{q + (q-1)r^{2/n}}, \quad\mbox{ and }\]
                \[R^B_f \leq 3 + \frac{2(q-1)}{q(r^{1-4/n}-1)}\left[2r - \frac{r^{3/n}(1 + 2qr^{1/n})}{q + (q-1)r^{2/n}}\right].\]
        \end{theorem}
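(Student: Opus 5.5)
The plan is to rerun the analysis of Section~\ref{sec:prop_schedules} for the generalized trajectories, using the same symmetry reduction. By Lemma~\ref{lm:wc} the worst case is a target at a turning point, now including the sub-turning points $d^{(1)}_{i,j}$ and $d^{(2)}_{i,j}$. The scaling relations \eqref{eq:dij_p1} and \eqref{eq:dmod} still hold for the generalized schedule. So it is enough to fix $i=0$ and one $j$, and to let the target range over a single period $z\,d_{0,j}$ together with the sub-turning points that fall in $I_{0,j}$.

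\textbf{Step 1 (analogues of Lemmas~\ref{lm:tij} and \ref{lm:Tijk}).} Between $d_{i,j}$ and $d_{i,j+1}$ an agent now travels $|d_{i,j}|\bigl((1+a)+s+(s-a+r)\bigr)$ rather than $(1+r)|d_{i,j}|$. From this I get $t_{i,j}=\frac{c}{r-1}|d_{i,j}|$ for a new constant $c$ depending on $q$; the turning points still lie on a common cone. I then recompute the first-visit times $T_{i,j,k}(z)$. For each $k$, the first visit to $z d_{i,j}$ by agent $i+k$ happens either on the outward sweep to $d_{i+k,j}$ or during the back-and-forth excursion $d^{(1)}\to d^{(2)}$. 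The constraints $r/(r-1)\le q\le \min\{\cdots\}$ in $\parset$, together with the two branches of $\hat a$, are exactly what is needed to decide which sweep is first. They also determine the order in which the agents reach the target. The $(f+1)$-st visitor gives $S^x_f$.

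\textbf{Step 2 (Scenario A).} Put the target just beyond $d_{0,j}$ and make the first $f$ visitors faulty. I locate every agent at time $S^x_f$ and take the farthest one; this is agent $i^x_f$, and with this choice of $\hat a$ it should be agent $u$ or a neighbour of it. The defining equation for $\hat q(r,u)$ should be exactly the condition that this agent sits on the cone $\cone_1$, i.e.\ at its intersection point $\rho$ from Lemma~\ref{lm:rho_tau}, adapted to the new trajectories, at time $S^x_f$. Solving that condition for $q$ should reproduce \eqref{eq:hat_s} and serves as a check. Substituting into \eqref{eq:evac_gen} and dividing by $|x|$ then gives $R^A_f$.

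\textbf{Step 3 (all other placements), the main obstacle.} I have to show that no other turning-point placement does worse. For $x=z d_{0,j}$ with $z\in(1,r^{2/n}]$ I would use monotonicity in $z$, as in the proof of Theorem~\ref{thm:crash_evac_prop}: $S^x_f/|x|$ and $\Delta^x_f/|x|$ are ratios of affine functions of $z$, and both decrease. Targets at $d^{(1)}$ or $d^{(2)}$ are new candidates. When $q\le r/(r-r^{2/n})$, $\hat a=q(r^{1-2/n}-1)$ puts $d^{(1)}_{i,j}$ on another agent's turning point ($d^{(1)}_{i,j}=-d_{i-1,j+1}\cdot(\text{const})$ or similar). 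Such a target is then already covered by Scenario A, so the extra worst cases collapse into it. In the other regime this coincidence shifts by one agent ($r^{1-4/n}$). This leaves a genuinely new worst case, Scenario B: the target sits at the sub-turning point, and the reliable visitor arrives on the return part of the excursion. For this case I only need the upper bound on $R^B_f$. I would bound $\Delta$ with a cone argument of the type in Lemma~\ref{lm:delta}, using $\beta_0$ (outer cone) and the already computed Scenario A quantity $\frac{1+2qr^{1/n}}{q+(q-1)r^{2/n}}$ for the relevant time offset. This gives the stated inequality. Throughout, the hard part is organising the case analysis of which agents visit which points first, and checking that the $\parset$ constraints exclude every other ordering.
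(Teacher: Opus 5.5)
Your overall skeleton matches the paper: the symmetry reduction, the first-visit times, Scenario A with the distant agent pinned to the modified inner cone $\cone_1^+$ by $q=\hat q(r,u)$, and then a separate upper bound for a sub-turning-point scenario. Step 2 is essentially Lemma~\ref{lm:s_choice} plus the check $q-(q-1)r^{2/n}\ge 0$. The gaps are in Step 3.

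\textbf{First gap: why sub-turning points add no worst case.} Your reason for the regime $q\le r/(r-r^{2/n})$ is wrong. In general $\hat a=q(r^{1-2/n}-1)$ is not of the form $r^{1+2m/n}$. For the $(5,2)$ parameters, $a\approx 1.673$, while $r^{1/5}\approx 1.291$ and $r^{3/5}\approx 2.151$. So $d^{(1)}_{i,j}$ does not land on any turning point, and nothing ``collapses into'' Scenario A. What this value of $a$ actually does is place agent $i+1$'s sub-turning point exactly on agent $i$'s outgoing trajectory (the $\circ/+$ switch in Lemma~\ref{lm:intersects_gen}). Via Lemma~\ref{lm:intersects_cond}, this makes the agent on $\cone_1^+$ the extreme one; it serves $\Delta^x_f$, not $S^x_f$.

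The missing argument is about visit order:
\begin{enumerate}
\item Lemma~\ref{lm:abound} gives $a>r^{1/n}$, so agents $i+f+1,\dots,i+n$ reach every $x=zd_{i,j}\in I_{i,j}$ before their excursion.
\item The condition $q\le r/(r-r^{2/n})$ is equivalent to $T^+_{i,j,f}(z)\le T^\circ_{i,j,f+1}(z)$. Hence every agent $i+k$ with $k\le f$ arrives before $i+f+1$, whether or not it has made its excursion (Lemma~\ref{lm:S_cond}).
\item Therefore $S^x_f=T^\circ_{i,j,f+1}(z)$ on all of $I_{i,j}$, and the only jump is at $d_{i,j}$.
\end{enumerate}
When $q>r/(r-r^{2/n})$, this ordering fails for agent $i+f$ alone. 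That pins Scenario B to a target just beyond $d^{(1)}_{i+f,j-1}=\frac{a}{r^{1/n}}d_{i,j}$, with $S^x_f\le T^+_{i,j,f}(z)$. You need this specific identification, not just ``the sub-turning point''.

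\textbf{Second gap: the bound on $\Delta^x_f$ in Scenario B.} Using the outer cone, $\Delta^x_f\le |x|+S^x_f/\beta^+_0$ with $\beta^+_0=2q-1$, does not give the stated inequality. It yields
$R^B_f\le 1+\frac{2q}{2q-1}\left(1+\frac{2(q-1)r}{a}\right)$.
For the $(7,3)$ parameters of Table~\ref{tbl:gen} this is about $7.46$. That exceeds both the stated bound (about $6.97$) and $R^A_f\approx 7.254$, so it would not even let you conclude $R_f=R^A_f$ there.

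The paper instead uses a unit-speed drift argument anchored at the Scenario A configuration. At time $T^\circ_{i,j,f+1}(1)$, the agent farthest from $x$ is at distance $|x|+T^\circ_{i,j,f+1}(1)/\beta^+_1$. Moving at unit speed, by time $T^+_{i,j,f}(z)$ it can be at most $T^+_{i,j,f}(z)-T^\circ_{i,j,f+1}(1)$ farther away. This gives
$E^x_f\le |x|+2T^+_{i,j,f}(z)-(1-1/\beta^+_1)\,T^\circ_{i,j,f+1}(1)$.
Now substitute $z=a r^{-1/n}$, $1-1/\beta^+_1=\frac{2(q-1)r^{2/n}}{q+(q-1)r^{2/n}}$ and $a=q(r^{1-4/n}-1)$. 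This produces exactly the stated expression; the leading $3$ is $(|x|+2z|d_{i,j}|)/|x|$.
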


        The rough idea behind this theorem is as follows. Taking $q=\hat{q}(r,u)$ ensures that the agent $i+u$ will be located on the cone $\cone_1$ (properly modified for the generalized schedules) in the event that Scenario A occurs (see Lemma~\ref{lm:s_choice}). Choosing $a = \hat{a}(r,q)$ ensures that agent $i+u$ will be the most distant agent in the event of Scenario A (see Lemma~\ref{lm:intersects_cond}). The quantity $R^A_f$ gives the competitive ratio of Scenario A and $R^B_f$ gives the competitive ratio of an additional potential worst case. One can refer to Figure~\ref{fig:traj} at the end of this subsection for an illustration of the optimized trajectories resulting from these parameter choices for the cases $f=2,3,4$.

        To proceed we need to define analogues of the quantities $t_{i,j}$, $T_{i,j,k}(z)$, $\tau_{i,j,k}$, and $\rho_{i,j,k}$ in the context of the generalized schedules. 

        \begin{lemma}\label{lm:tijell}
                The time $t^{(\ell)}_{i,j}$ at which agent $i$ reaches its sub-turning point $d^{(\ell)}_{i,j}$ is
                \begin{equation}\label{eq:tij_ell}
                        t^{(\ell)}_{i,j} = |d_{i,j}|\cdot\begin{cases}
                                2q-1,& \ell=0\\
                                2q+a,& \ell=1\\
                                q(r+1)-r+a,& \ell=2 
                        \end{cases}.
                \end{equation}       
        \end{lemma}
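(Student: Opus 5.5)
The plan is to compute each time $t^{(\ell)}_{i,j}$ as the total path length travelled by agent $i$ before it reaches $d^{(\ell)}_{i,j}$. Agents move at unit speed and never stop, and the trajectory extends back to $j=-\infty$, so this length is a convergent geometric series. I will first find the length of one ``period'', meaning the portion of the trajectory from $d_{i,j}$ to $d_{i,j+1}$. I will then sum over all earlier periods to get $t^{(0)}_{i,j}$. Finally I will add the lengths of the first one or two segments of the current period to get $\ell=1,2$.

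\textbf{Length of one period.} Using the parameterization with $s$, the period consists of three segments: $d_{i,j}\to -a\,d_{i,j}\to (s-a)\,d_{i,j}\to -r\,d_{i,j}$.
\begin{itemize}
\item The first segment has length $(1+a)|d_{i,j}|$, which is nonnegative because $a\ge s-1\ge -1$.
\item The second has length $|(s-a)-(-a)|\,|d_{i,j}| = s|d_{i,j}|$, using $s\ge 0$.
\item The third has length $|-r-(s-a)|\,|d_{i,j}| = (r+s-a)|d_{i,j}|$, which is nonnegative because $a\le r$.
\end{itemize}
Here the stated parameter ranges $s\in[0,r+1]$ and $a\in[s-1,r]$ are exactly what make the signs come out right. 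The total length of the period is therefore $(1+r+2s)|d_{i,j}|$. The sub-turning points within the period sit between $d_{i,j}$ and $d_{i,j+1}$, so this is precisely the distance travelled.

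\textbf{Summing the series.} Since $|d_{i,j-k}| = r^{-k}|d_{i,j}|$, the time to reach $d_{i,j}$ is
\[
t^{(0)}_{i,j} = (1+r+2s)|d_{i,j}|\sum_{k\ge1} r^{-k} = \frac{1+r+2s}{r-1}|d_{i,j}|.
\]
Writing $\frac{1+r+2s}{r-1} = 2\frac{r+s}{r-1} - 1 = 2q-1$ gives the $\ell=0$ case. Adding $(1+a)|d_{i,j}|$ gives $t^{(1)}_{i,j} = (2q+a)|d_{i,j}|$. Adding a further $s|d_{i,j}|$ and using $s = q(r-1)-r$ gives
\[
2q+a+q(r-1)-r = q(r+1)-r+a,
\]
which is the $\ell=2$ case.

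\textbf{Main obstacle.} The computation itself is routine; the step needing care is justifying that these are the \emph{first} times the agent reaches these points, if the statement is read that way. For $\ell=0$ this follows as in Lemma~\ref{lm:tij}: every earlier excursion to the same side of the origin has strictly smaller amplitude. The earlier sub-turning points on that side are $(s-a)d_{i,j-1}$ and $-a\,d_{i,j-1}$, and both lie between $d_{i,j-1}$ and $d_{i,j}$ in magnitude, so $d_{i,j}$ has not been reached before. For $\ell=1,2$, the statement refers to the time the agent arrives at the sub-turning point in the course of its trajectory. That is exactly what the computed quantity is, so no further argument is needed beyond the sign checks above.
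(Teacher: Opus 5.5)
Your proposal is correct and is essentially the paper's proof. The paper also uses a per-period length of $(r+1+2s)|d_{i,j-1}|$, which you derive more explicitly as $(1+a)+s+(r+s-a)$ with sign checks; it then sums the geometric series to get $t^{(0)}_{i,j}=\frac{r+1+2s}{r-1}|d_{i,j}|=(2q-1)|d_{i,j}|$ and adds $(1+a)|d_{i,j}|$ and $s|d_{i,j}|$ for $\ell=1,2$. Your optional remark about first visits needs $a<r$, since at $a=r$ one has $d^{(1)}_{i,j-1}=d_{i,j}$, but the lemma as stated does not require it.
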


        \begin{lemma}\label{lm:tijk_gen}
                The time $T_{i,j,k}(z)$ at which agent $i+k$ first reaches location $z = zd_{i,j} \in I_{i,j}$ is
                \begin{equation}\label{eq:Tijk_gen}
                        T_{i,j,k}(z) = \begin{cases}
                        T^\circ_{i,j,k}(z),& a \geq \frac{zr}{r^{2k/n}}\\
                        T^+_{i,j,k}(z),& a < \frac{zr}{r^{2k/n}} 
                \end{cases}.\end{equation}
                where
                \[T^\circ_{i,j,k}(z) := \left[z + 2qr^{2k/n-1}\right]|d_{i,j}|,\ \mbox{and}\]
                \[T^+_{i,j,k}(z) := \left[z + 2(q-1)r^{2k/n}\right]|d_{i,j}|.\]
        \end{lemma}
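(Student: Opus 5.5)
The plan is to follow the trajectory of agent $i+k$ backwards from its turning point $d_{i+k,j}$ and find the first segment that crosses $x = z d_{i,j}$. By symmetry take $d_{i,j}>0$, so $x = z|d_{i,j}|$ with $z\in(1,r^{2/n}]$. Write $c := r^{2k/n-1}|d_{i,j}|$. By \eqref{eq:dij_p1} the turning points of agent $i+k$ near $x$ are:
\begin{itemize}
\item $d_{i+k,j-1} = -c$;
\item $d^{(1)}_{i+k,j-1} = a\,c$, on the positive side;
\item $d^{(2)}_{i+k,j-1} = -(s-a)\,c$;
\item $d_{i+k,j} = r^{2k/n}|d_{i,j}|$.
\end{itemize}
Since $k\ge 1$ and $z\le r^{2/n}$, we have $d_{i+k,j}\ge x$. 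So the agent certainly reaches $x$ no later than on its way to $d_{i+k,j}$.

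\textbf{Step 1: no earlier visit.} I first check that the agent has not visited $x$ before time $t^{(0)}_{i+k,j-1}$. Every positive-side point of the earlier rounds $j-2, j-4,\dots$ is at most $r^{2k/n-2}|d_{i,j}|$ in absolute value. This uses $a\le r$ for the $d^{(1)}$ points and $s-a\le 1$ (from $a\ge s-1$) for the $d^{(2)}$ points. Because $k\le n$, this bound is at most $|d_{i,j}| < x$. Hence the first visit happens after the agent leaves $-c$.

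\textbf{Step 2: split into two cases.}

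Case (i): $ac\ge x$, which is equivalent to $a\ge zr/r^{2k/n}$. The agent passes $x$ while travelling from $-c$ to $d^{(1)}_{i+k,j-1}$. By Lemma~\ref{lm:tijell} the crossing time is
\[
t^{(0)}_{i+k,j-1} + c + x = (2q-1)c + c + z|d_{i,j}| = \left[z+2qr^{2k/n-1}\right]|d_{i,j}| = T^\circ_{i,j,k}(z).
\]

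Case (ii): $ac< x$. The agent turns at $ac$ before reaching $x$. It then moves left to $-(s-a)c$, which lies to the left of $ac$, so $x$ is not crossed on that leg. It finally crosses $x$ on the way to $d_{i+k,j}$. Using Lemma~\ref{lm:tijell} again, the crossing time is
\[
t^{(2)}_{i+k,j-1} + (s-a)c + x = \bigl(q(r+1)-r+a+s-a\bigr)c + z|d_{i,j}|.
\]
Substituting $s = q(r-1)-r$ turns the coefficient of $c$ into $2r(q-1)$. The time therefore becomes $\left[z+2(q-1)r^{2k/n}\right]|d_{i,j}| = T^+_{i,j,k}(z)$.

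\textbf{Main obstacle.} The only delicate point is Step 1 together with the sign of $s-a$. One must confirm that no earlier sub-turning point overshoots $x$, and that the distance from $d^{(2)}_{i+k,j-1}$ to $x$ equals $(s-a)c + x$ even when $s-a<0$. The latter holds because $d^{(2)}_{i+k,j-1} = (a-s)c \le ac < x$. Both facts follow from the standing bounds $a\in[s-1,r]$ and $k\le n$.
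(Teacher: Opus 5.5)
Your proposal is correct and follows essentially the same route as the paper. You locate the first crossing on the leg from $d_{i+k,j-1}$ toward $d_{i+k,j}$, split on whether $d^{(1)}_{i+k,j-1}$ lies before or beyond $x$ (which is exactly the condition $a \geq zr/r^{2k/n}$), and read off the times from Lemma~\ref{lm:tijell}. The differences are cosmetic: you compute $T^+$ directly from $t^{(2)}_{i+k,j-1}$ rather than adding the detour $2s|d_{i+k,j-1}|$ to $T^\circ$, and you explicitly verify the ``no earlier visit'' claim that the paper only asserts by analogy with Lemma~\ref{lm:Tijk}.
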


        The proportional schedules enjoyed the property that the $(f+1)^{st}$ agent to reach any location $x=zd_{i,j}\in I_{i,j}$ was the agent $i+f+1$. The next lemma outlines the conditions for this to also be the case with the generalized schedules.
        \begin{lemma}\label{lm:S_cond}
                Suppose that $a \geq r^{1/n}$. Then agent $i+k$, $k=f+1,\ldots,n$, will reach any position $x = zd_{i,j} \in I_{i,j}$ at the time $T_{i,j,k}(z)=T^\circ_{i,j,k}(z)$. Furthermore, \begin{enumerate}
                        \item if $q \leq \frac{r}{r-r^{2/n}}$ then agents $i+1,\ldots,i+f$ will reach $x$ before agent $i+f+1$
                        \item if $\frac{r}{r-r^{2/n}} < q \leq \frac{r}{r-r^{4/n}}$ then agents $i+1,\ldots,i+f-1$ will reach $x$ before agent $i+f+1$, and agent $i+f$ will reach $x$ before agent $i+f+1$ provided that $a \geq z r^{1/n}$.
                \end{enumerate}
        \end{lemma}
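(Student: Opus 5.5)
The plan is to reduce every claim to Lemma~\ref{lm:tijk_gen}. That lemma gives the first-visit time $T_{i,j,k}(z)$ in closed form: it equals $T^\circ_{i,j,k}(z)$ when $a\ge zr^{1-2k/n}$ and $T^+_{i,j,k}(z)$ otherwise. Throughout I use $n=2f+1$, so that $r^{(2f+2)/n}=r^{1+1/n}$ and $r^{2f/n}=r^{1-1/n}$. I also use $z\in(1,r^{2/n}]$, since $x=zd_{i,j}\in I_{i,j}$. Thanks to \eqref{eq:dmod}, the label $i+k$ with $k\le n$ makes sense as a later turning point of agent $(i+k)\bmod n$, so the lemma applies uniformly.

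\emph{First claim.} Let $k\in\{f+1,\dots,n\}$. Then $r^{2k/n}\ge r^{1+1/n}$, so
\[
\frac{zr}{r^{2k/n}}\le z\,r^{-1/n}\le r^{2/n}r^{-1/n}=r^{1/n}\le a .
\]
By Lemma~\ref{lm:tijk_gen} we are therefore in the $T^\circ$ case. In particular, the arrival time of agent $i+f+1$ is
\[
T^\circ_{i,j,f+1}(z)=\bigl(z+2qr^{1/n}\bigr)|d_{i,j}|.
\]

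\emph{Comparison step.} For $k\le f$, I must show $T_{i,j,k}(z)<T^\circ_{i,j,f+1}(z)$, splitting according to which branch of Lemma~\ref{lm:tijk_gen} applies.

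If agent $i+k$ is in the $T^\circ$ case, the claim is immediate, because $T^\circ_{i,j,k}$ is strictly increasing in $k$.

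If it is in the $T^+$ case, the requirement becomes $2(q-1)r^{2k/n}<2qr^{1/n}$. The left side increases with $k$, so it suffices to check the largest relevant $k$.
\begin{itemize}
\item For item~1 take $k=f$, so $r^{2k/n}=r^{1-1/n}$. The condition $(q-1)r^{1-1/n}\le qr^{1/n}$ rearranges to $q\bigl(r^{1-1/n}-r^{1/n}\bigr)\le r^{1-1/n}$, that is, $q\le \frac{r}{r-r^{2/n}}$. This is exactly the hypothesis.
\item For item~2 take $k=f-1$, so $r^{2k/n}=r^{1-3/n}$. The same rearrangement gives $q\le \frac{r}{r-r^{4/n}}$.
\end{itemize}
For agent $i+f$ in item~2, the extra hypothesis $a\ge zr^{1/n}=zr/r^{2f/n}$ is precisely the condition in Lemma~\ref{lm:tijk_gen} that puts agent $i+f$ in the $T^\circ$ case. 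Monotonicity of $T^\circ$ in $k$ then finishes the argument.

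The main (mild) obstacle is boundary behaviour. At $q=\frac{r}{r-r^{2/n}}$ the comparison is an equality, so "before" has to be read as "no later than". I expect this is harmless, since ties do not change which agent is the $(f+1)^{st}$ visitor in the worst case. I also need to confirm that the endpoint $z=r^{2/n}$ of the semi-open interval does not break the inequality $zr^{-1/n}\le r^{1/n}$; it does not, because that inequality holds with equality there.
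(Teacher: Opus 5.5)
Your proposal is correct and follows essentially the same route as the paper. Both arguments reduce everything to Lemma~\ref{lm:tijk_gen} and use that the $T^\circ$ condition $a \ge zr/r^{2k/n}$ only gets easier as $k$ grows, with $z\le r^{2/n}$ giving the first claim. Both then compare the largest $T^+$ arrival time among the earlier agents with $T^\circ_{i,j,f+1}(z)$: your ``check the largest relevant $k$'' is exactly the paper's $k_*$ argument leading to $q \le \frac{r}{r-r^{2(f+1-k_*)/n}}$. Your remark that the boundary value of $q$ gives a tie rather than strict precedence is accurate, and it applies equally to the paper's proof, which also uses a non-strict inequality there.
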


        \begin{lemma}\label{lm:intersects_gen}
                Suppose that $a \geq r^{1/n}$. Then the point $(\rho_{i,j,k},\tau_{i,j,k})$, $k=0,\ldots,f$, at which the trajectory of agent $i$ intersects that of agent $i+k$ while agent $i$ is moving away from $d_{i,j}$ and agent $i+k$ is moving towards $d_{i+k,j}$ is
                \begin{equation}\label{eq:rho_gen}
                        (\rho_{i,j,k},\tau_{i,j,k}) = \begin{cases}
                                (\rho^\circ_{i,j,k},\tau^\circ_{i,j,k}),& a \geq q(r^{1-2k/n}-1)\\
                                (\rho^+_{i,j,k},\tau^+_{i,j,k}),& \mbox{otherwise}
                        \end{cases}
                \end{equation}             
                where
                \begin{align*}                
                        \rho^\circ_{i,j,k} &:= q(1-r^{2k/n-1})d_{i,j},\\
                        \tau^\circ_{i,j,k} &:= q(1+r^{2k/n-1})|d_{i,j}|,
                \end{align*}
                and
                \begin{align*}                
                        \rho^+_{i,j,k} &:= [q-(q-1)r^{2k/n}]d_{i,j},\\
                        \tau^+_{i,j,k} &:= [q+(q-1)r^{2k/n}]|d_{i,j}|.
                \end{align*}   
        \end{lemma}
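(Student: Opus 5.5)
The plan is a direct space-time computation: write both trajectories as explicit linear functions of time on the relevant legs and intersect them. By the reflection symmetry of the schedules (and the scaling \eqref{eq:dmod}), I may normalize $d_{i,j}>0$ and measure positions in units of $d_{i,j}$ and times in units of $|d_{i,j}|$. Set $R:=r^{2k/n}$ and $R':=R/r=r^{2k/n-1}$. Then $d_{i+k,j}=R\,d_{i,j}$ and $d_{i+k,j-1}=-R'\,d_{i,j}$. By Lemma~\ref{lm:tijell}, applied to agent $i$ and to agent $i+k$ (at turning points $j-1$ and $j$), every arrival time at a sub-turning point is known in closed form.

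\textbf{Agent $i$ moving away from $d_{i,j}$.} Agent $i$ is at position $1$ at time $2q-1$. It then moves left toward $d^{(1)}_{i,j}=-a$, which it reaches at time $2q+a$. On this leg its position is $x=2q-t$.

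\textbf{Agent $i+k$ approaching $d_{i+k,j}$.} Its approach splits into two rightward legs, separated by one leftward leg.
\begin{enumerate}
\item Leg (i) starts at $d_{i+k,j-1}=-R'$ at time $(2q-1)R'$ and ends at $d^{(1)}_{i+k,j-1}=aR'$. On it, $x=t-2qR'$.
\item Next comes a leftward leg from $aR'$ to $d^{(2)}_{i+k,j-1}=(a-s)R'$. It runs parallel to agent $i$'s leg, so it can produce no crossing.
\item Leg (ii) runs from $(a-s)R'$ to $R$, arriving at time $(2q-1)R$. On it, $x=t-2(q-1)R$.
\end{enumerate}

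\textbf{Intersecting the legs.} Equating agent $i$'s leg with leg (i) gives $t=q(1+R')$ and $x=q(1-R')$, which is exactly $(\rho^\circ,\tau^\circ)$. This crossing is genuine precisely when it lies before agent $i+k$ turns at $aR'$, that is, when $q(1-R')\le aR'$. That inequality is equivalent to $a\ge q(r^{1-2k/n}-1)$, the stated condition. If the condition fails, agent $i+k$ turns back before reaching agent $i$'s path. The two then move in parallel, and the crossing occurs on leg (ii): equating $2q-t=t-2(q-1)R$ gives $t=q+(q-1)R$ and $x=q-(q-1)R$, which is $(\rho^+,\tau^+)$.

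\textbf{Validity checks.} It remains to confirm that every computed crossing happens while agent $i$ is still on its first leg. In units of $d_{i,j}$, this means $x\ge -a$ (equivalently $t\le 2q+a$). It also must happen while agent $i+k$ is inside the leg used, with $t$ lying between the two endpoint times of that leg. For the $\circ$ case, $x=q(1-R')\ge 0$ because $R'<1$ for $k\le f$, so $x\ge -a$ is immediate. For the $+$ case, the crossing is automatically at or after the start of leg (ii), since $x=q-(q-1)R\ge (a-s)R'$ follows from the failure of the $\circ$ condition. Its end of leg (ii) is fine because $t=q+(q-1)R\le (2q-1)R$ is equivalent to $q\le qR$.

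\textbf{Main obstacle.} The one inequality that actually needs work is $q-(q-1)R\ge -a$ in the $+$ case. Here I would use the hypothesis $a\ge r^{1/n}$ together with $R\le r^{2f/n}=r^{1-1/n}$, so that
\[(q-1)R-q\le (q-1)r^{1-1/n}-q.\]
This reduces the claim to a comparison of $q$ with $r^{1/n}$ and $r$. I expect that comparison to need the ranges $s\in[0,r+1]$ and $a\in[s-1,r]$, and possibly the parameter region $\parset$ in which the lemma is later applied. This verification, rather than the line intersections themselves, is the step I expect to be delicate.

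Finally, I would note that the case $k=0$ is degenerate: the two agents coincide and the formula gives $(\rho,\tau)=(1,2q-1)$, the turning point itself.
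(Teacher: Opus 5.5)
Your computation follows the paper's route: intersect agent $i$'s outgoing line with agent $i+k$'s two rightward legs, switching branches at $\tau^\circ\le t^{(1)}_{i+k,j-1}$. There is, however, a gap in the $\circ$ branch, not where you expected one.

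\textbf{The gap in the $\circ$ branch.} You say the $\circ$ crossing is genuine \emph{precisely when} it precedes agent $i+k$'s turn at $aR'$, and for agent $i$ you only test $x\ge -a$. You also need the crossing to occur after agent $i$ has turned at $d_{i,j}$. That is, $\tau^\circ\ge t_{i,j}=2q-1$, equivalently $q(1-R')\le 1$, i.e.\ $q\le r/(r-r^{2k/n})$. (For the $+$ branch this end is automatic, since $R\ge 1$.) The condition does not follow from $a\ge r^{1/n}$ and the parameter ranges.

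Take $n=5$, $k=1$, $r=100$, $s=48.5$ (so $q=1.5$) and $a=50$. Then $a\ge r^{1/5}$, $a\in[s-1,r]$, and $a\ge q(r^{3/5}-1)\approx 22.3$, so the $\circ$ branch is selected. Yet $q(1-R')\approx 1.405>1$. Working in units of $d_{i,j}$:
\begin{itemize}
\item Agent $i+1$ passes $d_{i,j}$ at time $1+2qR'\approx 1.19$.
\item When agent $i$ turns at time $2$, agent $i+1$ is at about $1.81$ and still moving right.
\item From then until agent $i+1$ reaches $d_{i+1,j}$, the two agents only diverge or move in parallel.
\end{itemize}
So no such crossing exists at all, and $(\rho^\circ,\tau^\circ)\approx(1.405,\,1.595)$ does not even lie on agent $i$'s path. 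The lemma as stated therefore needs the extra hypothesis $q\le r/(r-r^{2k/n})$ in the $\circ$ branch.

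That hypothesis does hold wherever the paper later uses the lemma. In $\parset$ with $a=\hat a(r,q)$, the $\circ$ branch is only invoked for $k$ with $q\le r/(r-r^{2k/n})$. The paper's own proof has the same omission: it checks only $\tau^\circ\le t^{(1)}_{i,j}$ and $\tau^\circ\le t^{(1)}_{i+k,j-1}$.

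\textbf{Your flagged obstacle.} The step you flag closes in two lines, but not via $R\le r^{1-1/n}$, which discards exactly the information you need. Use $q+s=r(q-1)$ and $s\le a+1$ (from $a\ge s-1$) to write $(q-1)R=(q+s)R'\le (q+a)R'+R'$. In the $+$ branch the $\circ$ condition fails, so $aR'<q(1-R')$. Since $a\ge r^{1/n}>1$, this gives $R'<aR'<q(1-R')$. Hence
$(q-1)R<(q+a)R'+q(1-R')=q+aR'<q+a$,
which is exactly $\tau^+<t^{(1)}_{i,j}$. The paper never checks this inequality either. With this argument and the extra hypothesis above, your proof goes through and is more careful than the paper's.
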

        Let $\dagger = \circ,+$. Since the ratios $\tau^\dagger_{i,j,k}/|\rho^\dagger_{i,j,k}|$ are independent of $i$ and $j$, for fixed $k$ and $\dagger$ the points $(\rho^\dagger_{i,j,k},\tau^\dagger_{i,j,k})$ all lie along a common cone. Let $\cone_k^\dagger$ refer to the cone with slope $\beta^\dagger_k := \tau_{i,j,k}^\dagger/|\rho^\dagger_{i,j,k}|$ corresponding to the points $(\rho^\dagger_{i,j,k},\tau^\dagger_{i,j,k})$. The proportional schedules had the property that at all times $t>0$ there was an agent between the cones $\cone_0$ and $\cone_1$ and this agent was the most distant from the origin on its respective side. For the generalized schedules we will want the same property to hold for the cones $\cone^+_0$ and $\cone^+_1$. We observe that
        \begin{align}\label{eq:betas}
                \beta^\circ_k &:= \frac{\tau^\circ_{i,j,k}}{|\rho^\circ_{i,j,k}|} = \frac{r+r^{2k/n}}{r-r^{2k/n}},\nonumber\\
                \beta^+_k &:= \frac{\tau^+_{i,j,k}}{|\rho^+_{i,j,k}|} = \frac{q+(q-1)r^{2k/n}}{|q-(q-1)r^{2k/n}|}.
        \end{align}
        We include the absolute value in the denominator of $\beta^+_k$ since it is possible that $q-(q-1)r^{2k/n} < 0$. We will later show that when $k=1$ we will indeed have $q-(q-1)r^{2k/n} \geq 0$ as a result of the condition $q \leq \frac{r}{r-r^{1-2/n}}$ in the definition of $\parset$.

        \begin{lemma}\label{lm:intersects_cond}
                During the time interval $[\tau_{i-1,j,1},\tau_{i,j,1}]$ agent $i$ has the most negative/positive position when $j$ is odd/even provided that 
                $a = q(r^{1-4/n}-1)$
                or
                $q(r^{1-4/n}-1) < a \leq q(r^{1-2/n}-1)$ and $q \leq \frac{r}{r-r^{2/n}}$.
        \end{lemma}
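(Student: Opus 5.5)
By the reflection symmetry $x\mapsto -x$ of the generalized schedule (replacing $j$ by $j+1$ flips the sign of every turning point), it suffices to treat $j$ even. Here $d_{i,j}>0$ and we must show that agent $i$ is weakly rightmost throughout $[\tau_{i-1,j,1},\tau_{i,j,1}]$. By scaling (\eqref{eq:dij_p1}, \eqref{eq:dmod}) I will normalize $d_{i,j}=1$. I will then express every agent's trajectory as a piecewise linear space-time curve, using the turning times of Lemma~\ref{lm:tijell} and the sub-turning points \eqref{eq:dij_gen}. Over one period each agent visits, in order: $d$ at time $(2q-1)|d|$, then $-ad$, then $(s-a)d$, then $-rd$. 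The competitors of agent $i$ are agents $i+k$ with $k=1,\dots,n-1$. By \eqref{eq:dmod}, the agent $i-k$ is the agent $i+n-k$ shifted by two turning points, so it is enough to compare agent $i$ against the curves of agents $i+k$ over the relevant window of turning points.

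\textbf{Agent $i$'s own path.} During the window, agent $i$ runs from the point $\rho_{i-1,j,1}$ out to $d_{i,j}=1$, reaching it at time $2q-1$. It then runs back leftward and meets agent $i+1$ at $(\rho_{i,j,1},\tau_{i,j,1})$. By Lemma~\ref{lm:intersects_gen}, with $k=1$, the hypothesis $a\ge q(r^{1-4/n}-1)$ alone does not decide which branch applies, so the two cases of the statement correspond to the two branches. When $a\le q(r^{1-2/n}-1)$ we are in the $+$ branch: the meeting occurs after agent $i$ has gone out to $-a$ and come back toward $s-a$. The window therefore contains the excursion of agent $i$ to $(s-a)$, and I must check that this excursion is still rightmost. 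In other words, its endpoint $s-a$ must lie to the right of where agent $i+1$ is at time $t^{(2)}_{i,j}$. In this branch agent $i+1$ is still on its way to $d_{i+1,j}=r^{2/n}$. This comparison is a linear inequality in $a$, $q$ and $r$, and I expect it to reduce to $q\le r/(r-r^{2/n})$.

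\textbf{Agents $i+2,\dots,i+n-1$.} For each $k\ge 2$ I will show that agent $i+k$ stays left of agent $i$ during the window. There are two dangerous pieces of its trajectory.
\begin{enumerate}
\item Its outbound run toward $d_{i+k,j}=r^{2k/n}$. This run lies behind agent $i+1$'s run because the turning cones are nested, so it is dominated as soon as agent $i+1$ is. Using the intersection formula of Lemma~\ref{lm:intersects_gen} with $k$ in place of $1$, and in particular the threshold $a\ge q(r^{1-2k/n}-1)$, I will check that agent $i+k$ reaches the position of agent $i$ only after $\tau_{i,j,1}$.
\item Its earlier positive excursion to $d^{(1)}_{i+k,j-1}=a\,r^{2k/n-1}$, which occurs around time $t^{(1)}_{i+k,j-1}=(2q+a)r^{2k/n-1}$, and likewise the returning excursion $d^{(2)}$ of agents $i-1,i-2,\dots$.
\end{enumerate}
For the second piece I will compare the point $(a\,r^{2k/n-1},\,(2q+a)r^{2k/n-1})$ with agent $i$'s position at that time. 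The binding case should be $k=n-2$, equivalently agent $i-2$, whose excursion reaches $a\,r^{-4/n}\cdot r$. The worst-placed excursion tip should lie exactly on the cone through agent $i$'s outbound leg when $a=q(r^{1-4/n}-1)$; this is why that value appears as an equality case. For the remaining agents the monotonicity in $k$ of $r^{2k/n}$ makes the condition weaker.

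The main obstacle is this second piece: the sub-turning excursions of the agents that have just turned, namely $i-1$ and $i-2$, which poke back into the positive side near agent $i$. The role of $a$ is exactly to keep those excursions from overtaking agent $i$. My first attempt will be to write agent $i$'s outbound leg as the line $x=t-(2q-2)$ and check each excursion tip against it. I will then split according to the $\circ/+$ branch for agent $i$'s own meeting with agent $i+1$, which produces the two alternative hypotheses of the statement.
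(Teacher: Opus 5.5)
There is a genuine gap. Your symmetry reduction and normalization are fine, but you have misread the $+$ branch. As a result you dismiss exactly the comparison that carries the lemma, and you attribute both hypotheses to comparisons that cannot produce them. In Lemma~\ref{lm:intersects_gen} the $\circ/+$ dichotomy is about agent $i+k$: it asks whether agent $i+k$ meets agent $i$ before or after its own sub-turning point $d^{(1)}_{i+k,j-1}$. In both cases agent $i$ is still on its leg from $d_{i,j}$ toward $d^{(1)}_{i,j}$. So at $\tau_{i,j,1}=\tau^+_{i,j,1}$ it is agent $i+1$ that has made its excursion. Agent $i$'s excursion to $(s-a)d_{i,j}$ happens at time $t^{(2)}_{i,j}>t^{(1)}_{i,j}$, outside the window, so your comparison at $t^{(2)}_{i,j}$ has nothing to do with $q\le r/(r-r^{2/n})$. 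Also, both alternatives of the statement satisfy $a\le q(r^{1-2/n}-1)$, since $r^{1-4/n}<r^{1-2/n}$. Hence in both the relevant meeting of agents $i$ and $i+1$ is $(\rho^+_{i,j,1},\tau^+_{i,j,1})$, and the two alternatives are not the two branches for $k=1$.

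What the alternatives actually control is agent $i+2$, which your item 1 disposes of with ``the turning cones are nested.'' They are not nested. The excursion delays agent $i+1$'s final run by $2sr^{2/n-1}|d_{i,j}|$. Meanwhile, for $a\ge q(r^{1-4/n}-1)$, agent $i+2$ meets agent $i$ at or before its own sub-turn, at $(\rho^\circ_{i,j,2},\tau^\circ_{i,j,2})$, so it may overtake agent $i$ before agent $i+1$ does. The paper handles $i+3,\dots,i+f$ through the ordering of the slopes $\beta^\circ_k$ and then needs one of two things:
\begin{enumerate}
\item $\tau^\circ_{i,j,2}\ge\tau^+_{i,j,1}$, i.e.\ $q(1+r^{4/n-1})\ge q+(q-1)r^{2/n}$, which is exactly $q\le r/(r-r^{2/n})$; or
\item the meeting point is agent $i+2$'s sub-turning point $d^{(1)}_{i+2,j-1}$. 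Then agent $i+2$ only touches agent $i$, travels with it, and separates only at $\tau^+_{i,j,2}>\tau^+_{i,j,1}$. This coincidence is exactly $a=q(r^{1-4/n}-1)$.
\end{enumerate}
In that equality case, your check that agent $i+k$ reaches agent $i$'s position only after $\tau_{i,j,1}$ fails whenever $q>r/(r-r^{2/n})$. So the touching argument is indispensable.

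Accordingly, the excursion tip that matters is $d^{(1)}_{i+2,j-1}=ar^{4/n-1}d_{i,j}$, i.e.\ index $k=2$, not $k=n-2$. The tip you name, $ar^{1-4/n}d_{i,j}=d^{(1)}_{i-2,j+1}$, belongs to agent $i-2$ one round later and is reached at time $(2q+a)r^{1-4/n}|d_{i,j}|>t^{(1)}_{i,j}$. The correct tip must be tested against agent $i$'s return leg $x=2q-t$ (with $d_{i,j}=1$), not its outbound leg $x=t-(2q-2)$. Indeed $ar^{4/n-1}\le 2q-(2q+a)r^{4/n-1}$ is precisely $a\le q(r^{1-4/n}-1)$.
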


        In the next lemma we describe how we should choose $q$ if we want the agent furthest from $d_{i,j}$ to be located on the cone $\cone_1^+$ at the time the $(f+1)^{st}$ agent reaches $d_{i,j}$.
        \begin{lemma}\label{lm:s_choice}
                If we take $q = \hat{q}(r,k)$ then agents $i+k-1$ and $i+k$, $k=f+3,\ldots,n$, will both be located on the cone $\cone_1^+$ on the opposite side of the origin from $d_{i,j}$ at the time $T^\circ_{i,j,f+1}(1)$.
        \end{lemma}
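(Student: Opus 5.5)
The approach is to identify the moment when agents $i+k-1$ and $i+k$ lie together on $\cone_1^+$ as one of the intersection points described in Lemma~\ref{lm:intersects_gen}, with agent $i+k-1$ in the role of the ``agent $i$'' there and $k=1$. Two agents sit at the same space-time point on $\cone_1^+$ exactly when that point is some $(\rho^+_{i',j',1},\tau^+_{i',j',1})$ with $i'$ one of them. So the lemma reduces to one equation: choose $q$ so that $\tau^+_{i',j',1}$ equals $T^\circ_{i,j,f+1}(1)$ for the appropriate index pair $(i',j')$.

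\textbf{Step 1: the target time.} By Lemma~\ref{lm:tijk_gen} and $n=2f+1$ we have $2(f+1)/n-1=1/n$. Hence
\[T^\circ_{i,j,f+1}(1)=\bigl(1+2qr^{1/n}\bigr)|d_{i,j}|.\]

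\textbf{Step 2: index bookkeeping.} We want the turning point of agent $i+k-1$ that lies on the opposite side from $d_{i,j}$ and is being approached around that time. Using \eqref{eq:dmod}, the agent label $i+k-1$ corresponds to agent $i+k-1-n$ with turning index shifted by $2$. The relevant turning point is therefore $d_{i+k-1-n,j+1}$. By \eqref{eq:dij_p1},
\[d_{i+k-1-n,j+1}=-r^{2(k-1)/n-1}d_{i,j},\]
which lies on the side opposite to $d_{i,j}$, as required. Lemma~\ref{lm:intersects_gen} (the $+$ case) then gives
\[\tau^+=\bigl[q+(q-1)r^{2/n}\bigr]r^{2(k-1)/n-1}|d_{i,j}|.\]

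\textbf{Step 3: solve for $q$.} Setting $\tau^+$ equal to the time in Step 1 gives an equation that is linear in $q$:
\[q\bigl[(1+r^{2/n})r^{2(k-1)/n-1}-2r^{1/n}\bigr]=1+r^{2k/n-1}.\]
Its solution is exactly $\hat q(r,k)$ from \eqref{eq:hat_s}. At that time both agents are at the common point $\rho^+$, which lies on $\cone_1^+$ by the definition of $\beta^+_1$.

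\textbf{Step 4: hypotheses of Lemma~\ref{lm:intersects_gen}.} The main obstacle is checking that the lemma applies in its $+$ branch with these indices, and that the intersection genuinely lies opposite $d_{i,j}$. Concretely I will check three things:
\begin{itemize}
\item[(i)] $a\ge r^{1/n}$;
\item[(ii)] $a<q(r^{1-2/n}-1)$, or else that the $\circ$ and $+$ points coincide at the boundary;
\item[(iii)] $q-(q-1)r^{2/n}\ge 0$, so that $\rho^+$ really has the sign of $d_{i+k-1-n,j+1}$.
\end{itemize}
I expect all three to follow from the constraints defining $\parset$ and from the choice $a=\hat a(r,q)$. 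Condition (iii) follows from $q\le r/(r-r^{1-2/n})$. Conditions (i) and (ii) follow from $q\ge r/(r-1)$ and from $\hat a$ taking one of its two stated values. If the range $k\ge f+3$ is needed, it would enter in making the denominator of $\hat q$ positive; failing that, in ensuring that this intersection happens on the turning segment toward $d_{i+k-1-n,j+1}$ rather than an adjacent one. I would verify that positivity directly, using $2(k-1)/n-1\ge 2(f+2)/n-1>1/n$.
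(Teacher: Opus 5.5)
Your proposal is correct and is essentially the paper's proof: the paper also sets $\tau^+_{i+k-1,j-1,1}$ equal to $T^\circ_{i,j,f+1}(1)=(1+2qr^{1/n})|d_{i,j}|$ and solves the resulting linear equation for $q$ (your $d_{i+k-1-n,j+1}$ is the same turning point as $d_{i+k-1,j-1}$ by \eqref{eq:dmod}). The paper leaves your Step 4 conditions implicit, supplying them through the constraints in $\parset$ in the proof of Theorem~\ref{thm:f_crash}; one correction to your (ii): at $a=q(r^{1-2/n}-1)$, which is the value of $\hat a$ in most optimized cases, the $\circ$ and $+$ points do not coincide when $s>0$, but the two agents travel together from $(\rho^\circ,\tau^\circ)$ to $(\rho^+,\tau^+)$, and the condition you actually need, $t^{(2)}_{i+k,j-2}\le\tau^+\le t^{(1)}_{i+k-1,j-1}$, reduces to $a\le q(r^{1-2/n}-1)$ together with your (iii).
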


        We need one last lemma before proving Theorem~\ref{thm:f_crash}.
        \begin{lemma}\label{lm:abound}
                We have $\hat{a}(r,q) > r^{1/n}$ when $r>1$ and $q \geq \frac{r}{r-1}$.
        \end{lemma}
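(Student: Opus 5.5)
The claim is that $\hat{a}(r,q) > r^{1/n}$ whenever $r>1$ and $q \ge \frac{r}{r-1}$. Since $\hat{a}(r,q)$ is a two-piece function of $q$, the plan is to bound it from below by the smaller of its two branches. Then $q$ is replaced by its smallest allowed value, which reduces everything to a one-variable inequality in $r$.

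First I note that $n=2f+1\ge 5$ because $f>1$. Hence $4/n<1$ and $r^{1-4/n}>1$ for $r>1$, so both factors $r^{1-2/n}-1$ and $r^{1-4/n}-1$ are positive. Moreover $r^{1-4/n}-1 \le r^{1-2/n}-1$. So in either case of \eqref{eq:hat_a},
\[\hat{a}(r,q) \ \ge\ q\,(r^{1-4/n}-1).\]
This lower bound is increasing in $q$, so it is at least $\frac{r}{r-1}(r^{1-4/n}-1)$. It therefore suffices to show
\[ r\,(r^{1-4/n}-1) > r^{1/n}(r-1), \quad\text{i.e.}\quad r^{2-4/n} - r > r^{1+1/n} - r^{1/n}. \]

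To prove this, substitute $r=e^{t}$ with $t>0$ and write $\alpha = 1-4/n$. Dividing by $r^{1/n}$, the inequality becomes
\[ r^{2-5/n} - r^{1-1/n} > r - 1. \]
Both sides are differences of powers of $r$. The left side is $r^{1-1/n}\,(r^{1-4/n}-1)$, and I compare this with $r-1$ directly.

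This comparison is the main obstacle. For $r$ close to $1$, the left side is approximately $(1-4/n)(r-1)$ while the right side is $r-1$. So the reduced inequality actually \emph{fails} near $r=1$, and the crude bound $q\ge \frac{r}{r-1}$ alone is not enough for all $r>1$.

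I would therefore refine the argument in two ways. First, I would use the first branch $q(r^{1-2/n}-1)$ whenever it applies. Second, I would use the fact that the second branch applies only when $q>\frac{r}{r-r^{2/n}}$, which is a stronger lower bound on $q$.

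In the first branch, the inequality $\frac{r}{r-1}(r^{1-2/n}-1)\ge r^{1/n}$ becomes $r^{2-3/n}-r^{1-1/n} \ge r-1$. Near $r=1$ this still loses, by the factor $(1-2/n)$.

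So I expect the intended argument to exploit more of the context. Specifically, the constraints in $\parset$ (via $q=\hat q(r,u)$) would give a lower bound on $q$ much larger than $\frac{r}{r-1}$ near $r=1$. Indeed $\frac{r}{r-1}$ itself blows up as $r\to1$, and the comparison is really between growth rates in $(r-1)$.

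Concretely, I would redo the estimate as $\hat a \ge \frac{r}{r-1}\cdot c\,(r-1)\cdot(1+O(r-1))$ with $c=1-4/n$ or $1-2/n$. This gives $\hat a \gtrsim c\,r$. I would then check that $c\,r > r^{1/n}$ on the relevant range of $r$, handling the remaining small-$r$ range using the second-branch threshold $q>\frac{r}{r-r^{2/n}}$. That threshold gives $\hat a > \frac{r(r^{1-4/n}-1)}{r-r^{2/n}}$, which tends to $\frac{1-4/n}{1-2/n}$ as $r\to1$.

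Verifying these single-variable inequalities by convexity or derivative arguments in $t=\ln r$ is the step I expect to require the most care.
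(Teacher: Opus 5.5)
Your proposal does not establish the inequality. Its last paragraphs are a plan, and the concrete checks it names fail near $r=1$: $c\,r>r^{1/n}$ is false as $r\to1^+$ because $c<1$, and the second-branch bound you derive tends to $\frac{1-4/n}{1-2/n}<1$. The deeper problem is that the statement is false as written, and your own first-branch computation already shows it. Since $r^{2/n}>1$, we have $\frac{r}{r-1}<\frac{r}{r-r^{2/n}}$. So $q=\frac{r}{r-1}$ satisfies the hypothesis and lies in the first branch, where $\hat a(r,q)=\frac{r(r^{1-2/n}-1)}{r-1}$ exactly. This tends to $1-\frac{2}{n}<1$ as $r\to1^+$, while $r^{1/n}\to1$. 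For instance, $n=5$, $r=1.1$, $q=11$ gives $\hat a\approx0.647<1.019\approx r^{1/5}$. The second branch fails too: for $n=5$, $\frac{r(r^{1/5}-1)}{r-r^{2/5}}<r^{1/5}$ is equivalent to $r^{3/5}<r$. Hence any $q$ slightly above $\frac{r}{r-r^{2/5}}$ violates the bound for every $r>1$; e.g.\ $r=10$, $q=1.34$ gives $\hat a\approx0.78<1.58\approx r^{1/5}$. When you saw the reduced inequality lose by the factor $1-\frac{2}{n}$ at the admissible value $q=\frac{r}{r-1}$, the right conclusion was that the lemma needs an extra hypothesis. Appealing to unstated context, such as $q=\hat q(r,u)$ with $(r,u)\in\mathcal{P}$, would be proving a different statement.

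The paper's proof takes your first route. It substitutes $q\ge\frac{r}{r-1}$ in the first branch and calls $r^2>(r-1)r^{3/n}+r^{1+2/n}$ ``clearly satisfied for $r>1$ and $n\ge5$''. At $r=1+\epsilon$ the left side is $1+2\epsilon+O(\epsilon^2)$ and the right side is $1+(2+\frac{2}{n})\epsilon+O(\epsilon^2)$, so this is exactly the failure you detected. Its second case compares with $1$ instead of $r^{1/n}$, and by the $n=5$ computation above it cannot be fixed in that generality. Your reduction does prove a corrected version. The first branch works once $r(r^{1-2/n}-1)>r^{1/n}(r-1)$, which holds for all sufficiently large $r$ since $1-\frac{2}{n}>\frac{1}{n}$. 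The second branch works via $\frac{r(r^{1-4/n}-1)}{r-r^{2/n}}>r^{1/n}$, which holds for large $r$ when $n\ge7$. For the downstream theorem it suffices to check $a>r^{1/n}$ directly at the parameters actually used; e.g.\ for $(n,f)=(5,2)$, $a\approx1.673>r^{1/5}\approx1.291$.
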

        
        We now have all of the lemmas required to prove Theorem~\ref{thm:f_crash}. The proof itself can be found in the appendix.

        Figure~\ref{fig:traj} in the appendix illustrates the optimized trajectories of the agents for the cases $f=2,3,4$. Table~\ref{tbl:gen} lists the (numerically) optimized competitive ratios of the generalized schedule along with the corresponding optimal pair $(r,u)$ and the resulting parameters $a$ and $q$. Figure~\ref{fig:gen} plots these competitive ratios as a function of $f$ along with those of the optimized proportional schedule. One can observe that, in many cases, the two algorithms have identical competitive ratios. In all cases for $f>1$ the optimal competitive ratio is defined by Scenario A. It is interesting to note that in every case except $f=3$ we have $\hat{a}(r,q) = q(r^{1-2/n}-1)$. In fact, the case $(n,f)=(7,3)$ seems to be unique in a number of ways. It has an identical competitive ratio as the $f=2$ case. Moreover, the optimal parameter values for $q$ and $a$ are identical for $f=3$ and $f=2$, and, although the parameter $r$ is not the same, one can confirm that the quantity $r^{1/n}$ is identical in both cases.
        
\subsection{Three agents, one fault}\label{sec:one_fault}
        Our goal is to prove the following theorem.
        \begin{theorem}\label{thm:31_crash}
                Suppose that $r > 2\sqrt{2}$ and take
                \begin{multline}\label{eq:opt_a1}
                        a = \frac{1}{2(r-2r^{1/3})}\Biggr[r+1+r^{2/3}\\
                        - \sqrt{(r+r^{2/3}+1)^2+4r^{2/3}(r+1)(r-2r^{1/3})}\Biggr],
                \end{multline}        
                \begin{equation}\label{eq:opt_s}
                        q = \frac{r+1-a}{r+1-2r^{1/3}}.
                \end{equation}
                Then the competitive ratio of the generalized schedule for $(n,f)=(3,1)$ is
                \begin{multline}\label{eq:R1}
                        R_1 = 2 + \frac{1}{r+1-2r^{1/3}}\Biggl[r^{1/3}(2-r^{1/3})\\ + \sqrt{(r+1+r^{2/3})^2+4r^{2/3}(r+1)(r-2r^{1/3})}\Biggr].
                \end{multline}
                In particular, if $r = 6.833921$, $a = 1.699557$, and $q = 1.518949$ then $R_1 = 7.437011$.
        \end{theorem}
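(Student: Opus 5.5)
Since $n=3$ and $f=1$, the target at $x$ is certified as soon as the second agent to visit $x$ arrives. By Lemma~\ref{lm:wc} and the self-similarity of the generalized schedule (scaling by $r^{2/3}$ and shifting agent labels), it suffices to place the target at $x=zd_{i,j}\in I_{i,j}$ with $z\downarrow 1$. The case $z\to 1^+$ is what matters, since by Lemma~\ref{lm:wc} the supremum is approached at turning points, including the sub-turning points $d^{(1)}_{i,j},d^{(2)}_{i,j}$. I would therefore enumerate a short list of candidate worst-case scenarios:
\begin{itemize}
\item \emph{Scenario A.} The target lies just beyond $d_{i,j}$, and agent $i+1$, the first agent to reach it, is faulty. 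Then $S^x_1=T_{i,j,2}(1)$.
\item \emph{Scenario B.} The target sits at a sub-turning point, e.g.\ just beyond $d^{(2)}_{i,j}$ on the side opposite $d_{i,j}$, with the first visitor faulty.
\item \emph{Scenario C.} The target lies just beyond $d_{i,j}$, the first visitor is reliable, so that $S^x_1=T_{i,j,1}(1)$, but the remaining agent is very far away.
\end{itemize}

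For each scenario I would compute $S^x_1$ using Lemma~\ref{lm:tijk_gen}. The first task is to decide which of $T^\circ$ and $T^+$ applies, which requires comparing $a$ with $zr^{1-2k/3}$. I would then compute $\Delta^x_1$ by locating all three agents at time $S^x_1$, using Lemma~\ref{lm:tijell} for the sub-turning point times. With $n=3$ there are only three agents, so this can be done explicitly rather than through cone bounds. Each scenario then gives an expression $R^{\rm A},R^{\rm B},R^{\rm C}$ that is linear-fractional in $q$ and $a$. The competitive ratio is $R_1=\max\{R^{\rm A},R^{\rm B},R^{\rm C}\}$, together with a verification that every other placement is dominated. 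That verification should follow because, between turning points, $E/|x|$ is monotone in $z$ (the argument of Lemma~\ref{lm:wc}).

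The particular choices of $a$ and $q$ come from balancing. I expect that in Scenario A the most distant agent is agent $i+2$ (equivalently $i-1$), whose position depends on $a$. Equating the two binding scenario ratios should give the linear relation \eqref{eq:opt_s},
\[
q(r+1-2r^{1/3})=r+1-a .
\]
Substituting this into the remaining equality should give a quadratic in $a$,
\[
(r-2r^{1/3})a^2-(r+1+r^{2/3})a-r^{2/3}(r+1)=0,
\]
whose appropriate root is \eqref{eq:opt_a1}. Substituting back yields \eqref{eq:R1}. The hypothesis $r>2\sqrt2$ ensures $r>2r^{1/3}$, so that the leading coefficient and the denominator $r+1-2r^{1/3}$ are positive. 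The sign of the square root is fixed by requiring $a\in[s-1,r]$. One also has to check that the case conditions used along the way ($T^\circ$ versus $T^+$, and which agent is farthest) hold for these values.

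The main obstacle is the case analysis: identifying exactly which configurations are worst cases and confirming that the case hypotheses are consistent with the chosen $(a,q)$ for all $r>2\sqrt2$, or at least on the relevant range. I would first carry this out numerically at $r=6.833921$, to see which scenarios are tight, and then prove the inequalities symbolically. The final numeric claim is direct evaluation of \eqref{eq:R1} (and of $a$, $q$) at $r=6.833921$, giving $a\approx1.699557$, $q\approx1.518949$, and $R_1\approx7.437011$.
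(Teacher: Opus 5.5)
Your template matches the paper's: reduce to targets just beyond turning points, evaluate a few configurations, and balance them. The gap is that the configurations and the mechanism that actually produce the constants are misidentified.

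\textbf{Scenario A.} The last evacuee is agent $i$, not agent $i+2$. Agent $i+2$ is the announcer, reaching $d_{i,j}$ at $T^\circ_{i,j,2}(1)=(1+2qr^{1/3})|d_{i,j}|$; this is where $a\ge r^{-1/3}$ is needed. Relation \eqref{eq:opt_s} is not obtained by equating two scenario ratios. In the paper it is the timing condition $t^{(2)}_{i,j}=T^\circ_{i,j,2}(1)$, i.e.\ $q(r+1)-r+a=1+2qr^{1/3}$. This puts agent $i$ exactly at $d^{(2)}_{i,j}$ at the announcement, having just completed its leg from $d^{(1)}_{i,j}$ toward $x$, so it simply continues. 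Hence $E=t^{(1)}_{i,j}+|d^{(1)}_{i,j}|+|d_{i,j}|$ and $R^A_1=1+2(q+a)$. For smaller $q$, agent $i$ would already be heading away and Scenario A would give $3+2r-2q(r-2r^{1/3})$, which is decreasing in $q$ because $r>2\sqrt2$. So \eqref{eq:opt_s} is exactly where these two regimes of Scenario A meet, not a balance of A against B.

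\textbf{Scenario B.} Yours is not the binding one. At the stated parameters $s\approx2.03>a$, so $d^{(2)}_{i,j}=(s-a)d_{i,j}\approx0.33\,d_{i,j}$ lies on the same side as $d_{i,j}$, and a target just beyond it gives only about $6.76$. The binding configuration puts the target just beyond a \emph{first} sub-turning point, $x\approx d^{(1)}_{i+1,j-1}=ar^{-1/3}d_{i,j}$, with agent $i$ faulty (it passes $x$ first, on its way to $d_{i,j}$). Agent $i+1$ announces only after its excursion to $d^{(2)}_{i+1,j-1}$. Meanwhile agent $i+2$ is already travelling from $d_{i+2,j-1}$ toward $x$, so the announcement does not alter its path. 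This gives $E=t_{i+2,j-1}+|d_{i+2,j-1}|+|x|$ and $R^B_1=1+2qr^{2/3}/a$. Your Scenario C is dominated by A.

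\textbf{The quadratic.} The only balancing of ratios is $R^A_1=R^B_1$, i.e.\ $a(q+a)=qr^{2/3}$. Substituting \eqref{eq:opt_s} gives
\[(r-2r^{1/3})a^2+(r+1+r^{2/3})a-r^{2/3}(r+1)=0,\]
with a plus sign on the linear term. Your quadratic has the opposite sign, matching the displayed \eqref{eq:opt_a1}, which itself contains a sign slip. As printed, \eqref{eq:opt_a1} evaluates to about $-1.70$ at $r=6.833921$. The positive root of your quadratic is about $5.46$, which would force $q<1$ in \eqref{eq:opt_s}. So your final numerical check would not come out. The value $a\approx1.699557$ and formula \eqref{eq:R1} correspond to $a=\bigl[\sqrt{(r+1+r^{2/3})^2+4r^{2/3}(r+1)(r-2r^{1/3})}-(r+1+r^{2/3})\bigr]/\bigl(2(r-2r^{1/3})\bigr)$.

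\textbf{Range check.} What must be verified for this root is not $a\in[s-1,r]$ but $r^{-1/3}\le a\le r^{1/3}$. The lower bound gives the $T^\circ$ case used for \eqref{eq:opt_s} and $R^A_1$. The upper bound is needed for Scenario B to occur at all. Both follow from the sign of the quadratic at $a=r^{\pm1/3}$.
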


        Recall the potential worst case Scenario A for the proportional schedule -- the target is placed just beyond a turning point $d_{i,j}$ and agent $i+1$ (who would be the first to reach the target) is faulty. Thus, agent $i+2$ will be the first reliable agent to reach the target and at this time agent $i = i+3 \mod 3$ will still need to evacuate. The problem with the normal proportional schedule is that agent $i$ might be relatively far from location $d_{i,j}$ when it hears the announcement. We will thus use the extra turning points of the generalized algorithm in order to position agent $i$ relatively close to $d_{i,j}$ at the time agent $i+2$ reaches $d_{i,j}$. In particular, if we choose $q$ according to \eqref{eq:opt_s}, then agent $i$ will be located at its sub-turning point $d^{(2)}_{i,j}$ at exactly the same time agent $i+2$ reaches location $d_{i,j}$ (see Lemma~\ref{lm:opt_s}). However, in fixing the value of $q$ in this way we will introduce a new potentially worst-case location of the target and, choosing $a$ according to \eqref{eq:opt_a1}, will ensure this new potential worst case is no worse than Scenario A (see Lemma~\ref{lm:opt_a1}).

        \begin{figure}[!h]
                \centering
                \includegraphics[scale=0.8]{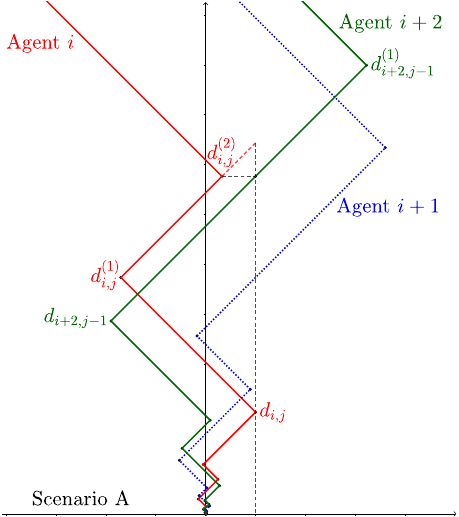}
                \hspace{1cm}
                \includegraphics[scale=0.8]{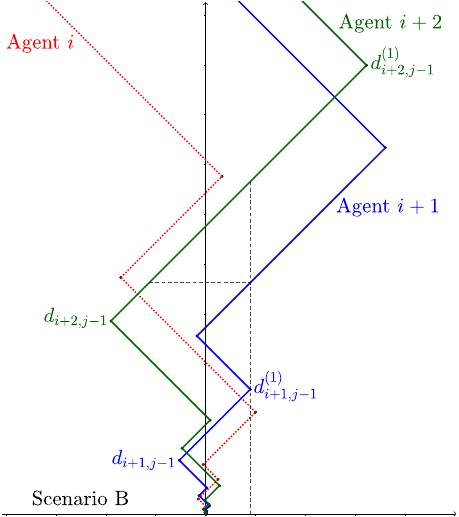}
                \caption{Illustrating the two worst case scenarios for the generalized proportional schedule for $n=3$ and $f=1$ when the parameters $a$ and $q$ are chosen according to \eqref{eq:opt_a1} and \eqref{eq:opt_s}. Scenario A on the left occurs when the target is just beyond the turning point $d_{i,j}$ and agent $i+1$ is faulty. Scenario B on the right occurs when the target is just beyond the turning point $d^{(1)}_{i+1,j-1}$ and agent $i$ is faulty.\label{fig:31_crash}}
        \end{figure}

        Figure~\ref{fig:31_crash} illustrates the trajectories of the three agents when $a$ and $q$ are chosen according to \eqref{eq:opt_a1} and \eqref{eq:opt_s}. The left side of the figure displays Scenario A. One can observe that at the instant agent $i+2$ reaches location $d_{i,j}$ agent $i$ will be at its turning point $d^{(2)}_{i,j}$. The right side of Figure~\ref{fig:31_crash}, which will be referred to as Scenario B, represents the new potentially worst case situation. In this scenario the target is placed at location $d^{(1)}_{i+1,j-1}$ and it is agent $i$ that is faulty. Agent $i+1$ will be the first reliable agent to reach the target and agent $i+2$ will be the final agent to evacuate. Observe that for this scenario the announcement by agent $i+1$ will not affect the trajectory of agent $i+2$ since this agent was already on its way to $d^{(1)}_{i+1,j-1}$ at the time of the announcement. We will choose $a$ in order to ensure that the competitive ratios resulting from Scenarios A and B will be equal.

        \paragraph*{Scenario A: } This is the case depicted on the left of Figure~\ref{fig:31_crash}. The target is just beyond the turning point $d_{i,j}$ and agent $i+2$ is the first reliable agent to reach the target. The next lemma demonstrates that if we choose $s$ according to \eqref{eq:opt_s} then agent $i$ will be at its turning point $d^{(2)}_{i,j}$ at the moment agent $i+2$ reaches $d_{i,j}$.
        \begin{lemma}\label{lm:opt_s}
                If $q$ is given by \eqref{eq:opt_s} and $a \geq \frac{1}{r^{1/3}}$ then agent $i$ will reach its turning point $d_{i,j+2}$ at the same time agent $i+2$ reaches location $d_{i,j}$.
        \end{lemma}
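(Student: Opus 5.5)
The plan is to compute two arrival times separately, each from a lemma already stated, and then check that they are equal exactly when $q$ satisfies \eqref{eq:opt_s}. Throughout I index agent $i$'s generalized trajectory by its consecutive turning points. Starting from $d_{i,j}=d^{(0)}_{i,j}$, the next turning point is $d^{(1)}_{i,j}$ and the one after it, the turning point two steps later, is $d^{(2)}_{i,j}$. This is the point the statement calls $d_{i,j+2}$; it is the point agent $i$ occupies in Scenario A, as described in the paragraph before the lemma and shown in Figure~\ref{fig:31_crash}.

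I would take the following view of that notation and test it numerically. Under the original-schedule convention \eqref{eq:dmod}, $d_{i,j+2}=r^2 d_{i,j}$, and Lemma~\ref{lm:tijell} gives agent $i$ arrival time $(2q-1)r^2|d_{i,j}|$ there. That time is far larger than the arrival time of agent $i+2$ computed below. So the coincidence can only hold under the refined-sequence reading, and the proof is carried out under that reading.

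\textbf{Agent $i$.} Lemma~\ref{lm:tijell} with $\ell=2$ gives that agent $i$ reaches this turning point at time
\[
t^{(2)}_{i,j}=\bigl(q(r+1)-r+a\bigr)|d_{i,j}|.
\]

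\textbf{Agent $i+2$.} I apply Lemma~\ref{lm:tijk_gen} with $n=3$, $k=2$ and $z=1$, using the limit $z\to 1^+$ within $I_{i,j}$. The threshold in that lemma is $zr/r^{2k/n}=r/r^{4/3}=r^{-1/3}$. The hypothesis $a\ge r^{-1/3}$ therefore puts us in the $T^\circ$ branch, and agent $i+2$ first reaches $d_{i,j}$ at time
\[
T^\circ_{i,j,2}(1)=\bigl(1+2qr^{1/3}\bigr)|d_{i,j}|.
\]
This branch check is the one step that needs care, since it is the only place the hypothesis on $a$ enters. The point is that agent $i+2$ arrives while moving on its direct outward leg, not after an extra excursion through its own sub-turning points. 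I would confirm this by recalling how the case split in Lemma~\ref{lm:tijk_gen} arises: $a$ compared against $zr/r^{2k/n}$ decides whether $d^{(1)}_{i+2,j-1}$ already lies beyond $x$.

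\textbf{Equating the times.} Setting the two times equal gives
\[
q(r+1)-r+a=1+2qr^{1/3}
\quad\Longleftrightarrow\quad
q\bigl(r+1-2r^{1/3}\bigr)=r+1-a,
\]
which is exactly \eqref{eq:opt_s}. Dividing by $r+1-2r^{1/3}$ is legitimate because this quantity is positive for $r>1$: with $u=r^{1/3}>1$ it equals $u^3-2u+1=(u-1)(u^2+u-1)>0$. Since every step is an equivalence, the choice \eqref{eq:opt_s} forces the two times to coincide.
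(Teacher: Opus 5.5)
Your proof is correct and follows the paper's argument essentially line for line. Like you, the paper reads the target point as $d^{(2)}_{i,j}$ and takes $t^{(2)}_{i,j}$ from Lemma~\ref{lm:tijell}; it then uses $a \ge r^{-1/3}$ to place agent $i+2$ in the $T^\circ_{i,j,2}(1)=(1+2qr^{1/3})|d_{i,j}|$ branch of Lemma~\ref{lm:tijk_gen} and solves the resulting linear equation for $q$, with your check that $r+1-2r^{1/3}=(u-1)(u^2+u-1)>0$ being a small justification the paper omits.
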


        We now let $R^A_1$ represent the competitive ratio for Scenario A. In the next lemma we derive an expression for this competitive ratio.
        \begin{lemma}\label{lm:RA}
                Suppose that $a \geq \frac{1}{r^{1/3}}$. Then the competitive ratio of Scenario A is
                \[R^A_1 = 1 + \frac{t^{(1)}_{i,j} + |d^{(1)}_{i,j}|}{|d_{i,j}|} = 1+2(q+a).\]
        \end{lemma}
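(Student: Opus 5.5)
We will prove Lemma~\ref{lm:RA} by evaluating the evacuation time $E = S_1^x + \Delta_1^x$ from \eqref{eq:evac_gen} with the target at $x = z d_{i,j}$ and $z \downarrow 1$. By Lemma~\ref{lm:wc} it suffices to treat targets at (or just beyond) turning points.

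\textbf{Step 1: the search time.} In Scenario A agent $i+1$ is faulty, so the first reliable agent to reach $x$ is agent $i+2$. For $n=3$, $f=1$ and $a \geq r^{-1/3}$, Lemma~\ref{lm:S_cond} gives that agent $i+2$ reaches $x$ at $T^\circ_{i,j,2}(z) = [z + 2qr^{1/3}]|d_{i,j}|$. We check that agent $i$ itself (that is, $k=3$) arrives later, at time $[z + 2qr]|d_{i,j}|$, so it cannot be the first reliable visitor. Hence $S_1^x \to (1+2qr^{1/3})|d_{i,j}|$ as $z\downarrow 1$.

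\textbf{Step 2: position of agent $i$.} By Lemma~\ref{lm:opt_s}, agent $i$ is at a turning point of its trajectory at exactly this moment. Identifying that turning point is the delicate step. Lemma~\ref{lm:opt_s} literally says ``$d_{i,j+2}$'', but the surrounding text says agent $i$ is at $d^{(2)}_{i,j}$. We would verify directly from Lemma~\ref{lm:tijell} and \eqref{eq:opt_s} that
\[
t^{(2)}_{i,j} = \bigl(q(r+1)-r+a\bigr)|d_{i,j}| = (1+2qr^{1/3})|d_{i,j}|.
\]
This is equivalent to $q(r+1-2r^{1/3}) = r+1-a$, which is exactly \eqref{eq:opt_s}. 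So agent $i$ sits at $d^{(2)}_{i,j} = (q(r-1)-r-a)\,d_{i,j}$. Since $d^{(2)}_{i,j}$ lies on the same side of the origin as $d_{i,j}$ (or near the origin), agent $i$ is close to the target.

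\textbf{Step 3: the farthest agent.} We must confirm that agent $i$, rather than agent $i+1$, is the agent most distant from $x$ at time $S_1^x$. Agent $i+1$ has already passed $d_{i,j}$ going outward and is heading back, and a faulty agent in any case need not evacuate. So the last reliable agent to arrive is agent $i$.

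\textbf{Step 4: the evacuation time.} The distance from agent $i$ to $x$ is
\[
|d_{i,j}| - |d^{(2)}_{i,j}| = \bigl(1 - q(r-1) + r + a\bigr)|d_{i,j}|,
\]
assuming $d^{(2)}_{i,j}$ lies between the origin and $d_{i,j}$; the other sign case is handled analogously. Adding the search time gives
\[
E = (1 + 2qr^{1/3})|d_{i,j}| + \bigl(1 - q(r-1) + r + a\bigr)|d_{i,j}|.
\]
Substituting $qr^{1/3}$ from \eqref{eq:opt_s}, namely $2qr^{1/3} = q(r+1) - r - 1 + a$, simplifies this to $E = (1 + 2q + 2a)|d_{i,j}|$.

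Alternatively, and more cleanly, note that agent $i$ is at $d^{(2)}_{i,j}$ and then moves straight to $d_{i,j}$. Its arrival time therefore equals the time to reach $d^{(1)}_{i,j}$ plus the distance from $d^{(1)}_{i,j}$ to $d_{i,j}$, because it travels $d^{(1)}\to d^{(2)}\to$ target with no net saving. This gives $t^{(1)}_{i,j} + |d^{(1)}_{i,j}| + |d_{i,j}| = (2q + a + a + 1)|d_{i,j}|$, which yields the stated formula directly.

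Dividing by $|x| \to |d_{i,j}|$ gives $R^A_1 = 1 + 2(q+a)$. Because $E/|x|$ is continuous in $z$ from the right, the supremum over this scenario is attained in the limit $z\downarrow 1$.

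The main obstacle is Step 3: justifying that agent $i$ really is the last evacuee and that it heads toward $x$ without detour. This needs $a$ bounded so that $d^{(2)}_{i,j}$ lies on the correct side, which we would verify using $a \ge r^{-1/3}$ and the parameter ranges.
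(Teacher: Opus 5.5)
Your proof is correct and takes essentially the paper's route. Agent $i$ sits at $d^{(2)}_{i,j}$ exactly when agent $i+2$ reaches $d_{i,j}$, and its path $d^{(1)}_{i,j}\to d^{(2)}_{i,j}\to x$ is monotone, so the evacuation time is $t^{(1)}_{i,j}+|d^{(1)}_{i,j}|+|d_{i,j}|$, which gives $1+2(q+a)$. The sign case for $d^{(2)}_{i,j}$ that you flag as the main obstacle is not actually an issue: the distance to $d_{i,j}$ is $(1-s+a)|d_{i,j}|$ either way, since $s-a\le 1$.

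One citation needs fixing. The fact that agent $i+2$ arrives at time $T^\circ_{i,j,2}(z)$ should come from Lemma~\ref{lm:tijk_gen}, under the condition $a \geq z r^{-1/3}$. Lemma~\ref{lm:S_cond} does not apply here: its hypothesis $a \geq r^{1/3}$ fails for the optimal parameters of Theorem~\ref{thm:31_crash}.
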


        \paragraph*{Scenario B: } This is the case depicted on the right of Figure~\ref{fig:31_crash}. The target is placed just beyond the turning point $d^{(1)}_{i+1,j-1}$, agent $i+1$ is the first reliable agent to reach the target, and agent $i+2$ will evacuate last. We note that for this case to occur it must be that $|d^{(1)}_{i+1,j-1}| \leq |d_{i,j}|$ which implies that $a \leq r^{1/3}$. Let $R^B_1$ represent the competitive ratio for this scenario.
        \begin{lemma}\label{lm:RB}
                Suppose that $a \leq r^{1/3}$. Then the competitive ratio of Scenario B is
                \[R^B_1 = 1 + \frac{t_{i+2,j-1}+|d_{i+2,j-1}|}{d^{(1)}_{i+1,j-1}} = 1+\frac{2qr^{2/3}}{a}.\]
        \end{lemma}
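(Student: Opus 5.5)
The plan is to evaluate the evacuation time directly from \eqref{eq:evac_gen} with $n=3$, writing every quantity as a multiple of $|d_{i,j}|$ via \eqref{eq:dij_p1}, \eqref{eq:dij_gen} and Lemma~\ref{lm:tijell}.

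\textbf{Locating the target and agent $i+2$.} The target location is
\[d^{(1)}_{i+1,j-1} = -a\,d_{i+1,j-1} = -a\,r^{2/3}\frac{d_{i,j}}{-r} = a\,r^{-1/3}d_{i,j}.\]
So the target lies on the same side of the origin as $d_{i,j}$, at distance $|x| = a r^{-1/3}|d_{i,j}|$. The hypothesis $a\le r^{1/3}$ is exactly the condition that $|x|\le |d_{i,j}|$, so the scenario is consistent.

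For agent $i+2$ we have $d_{i+2,j-1} = r^{4/3}d_{i,j}/(-r) = -r^{1/3}d_{i,j}$. This turning point lies on the opposite side, at distance $r^{1/3}|d_{i,j}|$. By Lemma~\ref{lm:tijell} with $\ell=0$, agent $i+2$ reaches it at time
\[t_{i+2,j-1} = (2q-1)r^{1/3}|d_{i,j}|.\]
After this turn it heads toward its next sub-turning point $d^{(1)}_{i+2,j-1} = a r^{1/3}d_{i,j}$. Since $a r^{1/3} \ge a r^{-1/3}$, this sub-turning point lies beyond $x$, so agent $i+2$ sweeps straight through $x$ along its own trajectory.

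\textbf{The evacuation time.} Agent $i$ is faulty, so the reliable agents are $i+1$ and $i+2$. Agent $i+1$ reaches $x$ at its sub-turning point, at time
\[t^{(1)}_{i+1,j-1} = (2q+a)r^{-1/3}|d_{i,j}|,\]
and announces. The evacuation ends when agent $i+2$ arrives. If agent $i+2$ is already travelling toward $x$ when the announcement is made, it arrives at
\[t_{i+2,j-1} + |d_{i+2,j-1}| + |x| = \bigl(2q r^{1/3} + a r^{-1/3}\bigr)|d_{i,j}|.\]
Dividing by $|x| = a r^{-1/3}|d_{i,j}|$ gives
\[1 + \frac{t_{i+2,j-1}+|d_{i+2,j-1}|}{|d^{(1)}_{i+1,j-1}|} = 1 + \frac{2q r^{2/3}}{a},\]
which is the claimed formula. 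Taking the target just beyond the sub-turning point and invoking Lemma~\ref{lm:wc} justifies this placement as the worst case for this configuration.

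\textbf{Timing checks.} Two ordering facts must be verified.
\begin{enumerate}
\item The announcement precedes agent $i+2$'s own arrival. This reduces to $(2q+a)r^{-1/3} \le 2q r^{1/3} + a r^{-1/3}$, i.e. $2q r^{-1/3} \le 2q r^{1/3}$, which is immediate since $r>1$.
\item At the moment of the announcement, agent $i+2$ has already turned at $d_{i+2,j-1}$. This is the inequality $(2q+a)r^{-1/3} \ge (2q-1)r^{1/3}$.
\end{enumerate}
The second check is the step I expect to be the main obstacle, because it does not follow from $a \le r^{1/3}$ alone and must use the parameter regime. If it failed, agent $i+2$ would turn back early and arrive sooner, so the expression above would still be an upper bound, but equality would need the inequality. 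I would therefore first try to verify it using $q \ge \frac{r}{r-1}$ together with the admissible range of $a$, or else state it as the standing assumption under which Scenario B is realized, as Figure~\ref{fig:31_crash} depicts. Finally, I would note that agent $i$'s position is irrelevant: since it is faulty, it need not evacuate and its silence only delays the announcement to agent $i+1$.
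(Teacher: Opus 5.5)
\textbf{There is a genuine gap: you use the wrong announcement time.} Your placement of the target and of agent $i+2$, the value $t_{i+2,j-1}=(2q-1)r^{1/3}|d_{i,j}|$, and the final algebra all match the paper. The paper simply reads off Figure~\ref{fig:31_crash} that the announcement does not change agent $i+2$'s trajectory. The problem is in how you try to justify that.

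A target \emph{just beyond} $d^{(1)}_{i+1,j-1}$ is not reached by agent $i+1$ at time $t^{(1)}_{i+1,j-1}$. Agent $i+1$ turns at $d^{(1)}_{i+1,j-1}$, goes out to $d^{(2)}_{i+1,j-1}$ and back, and only reaches the target (as $\epsilon\to 0$) at
\[
t^{(1)}_{i+1,j-1}+2s\,|d_{i+1,j-1}|=\bigl(ar^{-1/3}+2(q-1)r^{2/3}\bigr)|d_{i,j}|,
\]
which is the $T^+$ branch of Lemma~\ref{lm:tijk_gen}. This delay is the whole reason Scenario B is a bad case.

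With your time $t^{(1)}_{i+1,j-1}$, your second check $(2q+a)r^{-1/3}\ge(2q-1)r^{1/3}$ is false at the paper's parameters. With $r=6.833921$, $a=1.699557$, $q=1.518949$ the two sides are roughly $2.50$ and $3.87$. So this inequality can be neither proved nor adopted as a standing assumption. Indeed, for a target exactly at $d^{(1)}_{i+1,j-1}$, agent $i+2$ is still heading toward $d_{i+2,j-1}$ when it hears the announcement, and the ratio is only about $4.4$. Your fallback therefore gives at best a non-tight upper bound, not the equality the lemma claims.

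\textbf{The repair is to use the delayed announcement time.} Your two checks then become:
\begin{itemize}
\item $ar^{-1/3}+2(q-1)r^{2/3}\ge(2q-1)r^{1/3}$. This says agent $i+2$ has already turned at $d_{i+2,j-1}$ and is heading to $x$. At the paper's parameters it reads about $4.63\ge 3.87$.
\item $(q-1)r^{1/3}\le q$. This says agent $i+1$ reaches $x$ before agent $i+2$, so $i+1$ is the first reliable finder and $i+2$ the last evacuee.
\end{itemize}
The second condition is a genuine restriction on $q$, not a consequence of $r>1$ as your first check suggested. Both conditions hold for the parameters of Theorem~\ref{thm:31_crash}. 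With them, your formula $1+\bigl(t_{i+2,j-1}+|d_{i+2,j-1}|\bigr)/|d^{(1)}_{i+1,j-1}|=1+2qr^{2/3}/a$ follows exactly.
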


        \begin{lemma}\label{lm:opt_a1}
                Suppose that $\frac{1}{r^{1/3}} \leq a \leq r^{1/3}$, and $r > 2\sqrt{2}$. Then the competitive ratio of Scenarios A and B will be equal when $q$ is given by \eqref{eq:opt_s} and $a$ is given by \eqref{eq:opt_a1}.
        \end{lemma}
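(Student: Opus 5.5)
The plan is to reduce the equality $R^A_1=R^B_1$ to a quadratic equation in $a$ once $q$ is eliminated with \eqref{eq:opt_s}, and then check that \eqref{eq:opt_a1} picks out its relevant root. The hypothesis $\frac{1}{r^{1/3}}\le a\le r^{1/3}$ is there so that both Lemma~\ref{lm:RA} and Lemma~\ref{lm:RB} apply. They give $R^A_1=1+2(q+a)$ and $R^B_1=1+\frac{2qr^{2/3}}{a}$. Since $a>0$, equality of the two ratios is equivalent to
\[
a(q+a)=q\,r^{2/3}.
\]

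Next I will write $D:=r+1-2r^{1/3}$, so that \eqref{eq:opt_s} reads $q=(r+1-a)/D$. Substituting this into $aq+a^2=qr^{2/3}$ and multiplying through by $D$ gives $a(r+1-a)+a^2D=r^{2/3}(r+1-a)$. Collecting powers of $a$ and using $D-1=r-2r^{1/3}$, this becomes
\[
(r-2r^{1/3})\,a^2+(r+1+r^{2/3})\,a-r^{2/3}(r+1)=0 .
\]
We also need $D>0$ so that $q$ is well defined, and this follows from the same condition on $r$ used in the next step.

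The step needing care is choosing the correct root. The condition $r>2\sqrt2$ is exactly $r^{2/3}>2$, i.e. $r-2r^{1/3}>0$; in particular it also gives $D=(r-2r^{1/3})+1>0$. So the quadratic has positive leading coefficient and negative constant term. Hence it has exactly one positive root and one negative root, and only the positive root can serve as $a$. By the quadratic formula this root is
\[
a=\frac{1}{2(r-2r^{1/3})}\left[-(r+1+r^{2/3})+\sqrt{(r+r^{2/3}+1)^2+4r^{2/3}(r+1)(r-2r^{1/3})}\right].
\]
This is the expression \eqref{eq:opt_a1}, provided the sign in front of $r+1+r^{2/3}$ there is read as minus. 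As printed, with a plus sign in front of $r+1+r^{2/3}$ and a minus sign in front of the root, the bracket is negative, because the square root exceeds $r+1+r^{2/3}$. With the minus sign, the numerical check at $r=6.833921$ gives $a\approx(-11.43+21.76)/6.08\approx1.6996$, which matches the value quoted in Theorem~\ref{thm:31_crash}. I therefore expect the only real subtlety to be this sign convention: the positive root is the one intended.

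Finally, substituting this $a$ back into \eqref{eq:opt_s} gives a $q$ with $aq+a^2=qr^{2/3}$. Since every step above was an equivalence when $a>0$ and $D>0$, this yields $R^A_1=R^B_1$, which proves the lemma. As a by-product, substituting into $R^A_1=1+2(q+a)$ and simplifying with the quadratic relation should give the closed form \eqref{eq:R1}, which will be needed for Theorem~\ref{thm:31_crash}.
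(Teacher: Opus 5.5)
Your algebra and root selection are the paper's own argument. You equate $1+2(q+a)$ with $1+2qr^{2/3}/a$, substitute \eqref{eq:opt_s}, and reach $P(a)=(r-2r^{1/3})a^2+(r+1+r^{2/3})a-r^{2/3}(r+1)=0$. You then use $r>2\sqrt2$, i.e.\ $r^{2/3}>2$, to make the leading coefficient positive and take the positive root. You are also right that \eqref{eq:opt_a1} as printed has a sign error. The paper's proof itself writes the root as $\bigl[-(r+1+r^{2/3})+\sqrt{\cdots}\,\bigr]/(2(r-2r^{1/3}))$, and your numerical check at $r=6.833921$ confirms this. One small correction to your wording: the fix is to negate the whole bracket. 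Flipping only the sign in front of $r+1+r^{2/3}$ would give the negative root.

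The step you omit, and which the paper's proof includes, is checking that this root actually satisfies $r^{-1/3}\le a\le r^{1/3}$. You treat that range as a standing hypothesis, which formally covers the lemma as worded. But once $a$ is fixed by \eqref{eq:opt_a1}, the hypothesis becomes a claim about one specific number, and if it failed the lemma would be vacuous. Theorem~\ref{thm:31_crash} assumes only $r>2\sqrt2$, and it needs this range to invoke Lemmas~\ref{lm:opt_s}, \ref{lm:RA} and \ref{lm:RB} for the chosen $a$. The paper closes the gap with the polynomial. First, $P(r^{1/3})=(r-r^{1/3})(r^{1/3}-1)>0$. Second, $P(r^{-1/3})=2r^{1/3}-r^{-1/3}-r^{5/3}<0$, because $r^{4/3}>4$ gives $r^{5/3}>4r^{1/3}$. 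You already know the constant term is negative, so there is exactly one positive root, and it must lie in $(r^{-1/3},r^{1/3})$. Adding these two evaluations completes your proof.
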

        To derive the expression for $R_1$ in Theorem~\ref{thm:31_crash} we substitute \eqref{eq:opt_a1} and \eqref{eq:opt_s} into the expression for $R^A_1$ in Lemma~\ref{lm:RA}. 
        Numerically optimizing \eqref{eq:R1} with respect to $r > 2\sqrt{2}$ yields the specific results quoted in Theorem~\ref{thm:31_crash}.

\section{Byzantine search with one fault}\label{sec:search}

        In this section we use our evacuation algorithm for $(n,f)=(3,1)$ to improve upon the best known upper bound for {\em search} by three agents, with at most one byzantine fault. For this problem we only require the agents to reliably identify the target and, in particular, they do not need to evacuate. A reliable identification of the target occurs at the first time a provably reliable agent reaches the target. We will prove the following result.
        \begin{theorem}\label{thm:31_search}
                Byzantine search for three agents, at most one of which is faulty, can be completed with a competitive ratio no more than $7.43701137$.
        \end{theorem}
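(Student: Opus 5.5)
We will run the generalized schedule of Theorem~\ref{thm:31_crash} with $r = 6.833921$, $a = 1.699557$ and $q = 1.518949$, and argue that the byzantine search time is bounded by the crash evacuation time of the same trajectories. The key observation concerns when the target at $x$ becomes reliably identified. With three agents and one byzantine fault, an announcement of the target is trustworthy once two agents agree, that is, once two distinct agents claim it. Alternatively, it is trustworthy once one agent claims it and the other two agents' trajectories are consistent only with that claim. In either case, identification is certain at the latest when all three agents have visited $x$. Concretely, suppose two agents have visited $x$ and both report the target. Then at least one of them is reliable, so the target is identified. Suppose instead they disagree. Then one of the two is faulty, so the third agent is reliable, and the target is confirmed once that agent visits $x$ (or once it reports).

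The main step is to show that the agents, without altering their trajectories, identify the target no later than the time the last reliable agent reaches $x$ in the crash evacuation analysis. We split into cases by the byzantine agent's behaviour: it stays silent, it falsely claims the target at some other point, or it denies the target at $x$. False claims at other points $y$ cannot produce a confirmed identification. The reason is that the reliable agents who later visit $y$ deny it, and the agents never deviate from their trajectories, so a lie only delays nothing. The agents simply continue their paths. Confirmation at $x$ requires two agents' agreeing reports. In the worst case, one of the first two visitors is faulty, and confirmation occurs at the time the third agent visits $x$ along its \emph{original} trajectory.

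So the search time is at most the maximum, over agents, of the first visit time to $x$ along the unmodified trajectories. We then show this is at most the evacuation time computed in Lemmas~\ref{lm:RA} and~\ref{lm:RB}. For Scenario A, agent $i$ reaches $d_{i,j}$-side locations at time $t^{(1)}_{i,j}+|d^{(1)}_{i,j}|$, which is exactly the quantity in Lemma~\ref{lm:RA}. In Scenario B, as the paper notes, agent $i+2$ was already heading to the target undisturbed. For general target positions, we would reuse the worst-case reduction of Lemma~\ref{lm:wc} and the case analysis behind Theorem~\ref{thm:31_crash}. The point to check is that in every case, the last agent's evacuation route after the announcement coincides with, or is no faster than, its own trajectory to $x$, because at the chosen parameters the last agent is already moving toward $x$.

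The main obstacle is this last check. In the crash analysis, the last agent may turn around upon hearing the announcement. In the byzantine setting, it cannot trust a single announcement, so it must keep following its trajectory, which could take longer. I would first verify that in Scenario A (Lemma~\ref{lm:opt_s}), agent $i$ sits at its sub-turning point at the announcement and is about to head toward $d_{i,j}$ anyway. That makes the trajectory time equal the evacuation time. I would then confirm numerically, over $z\in(1,r^{2/3}]$ and all sub-intervals, that the third visit time divided by $|x|$ never exceeds $7.43701137$.
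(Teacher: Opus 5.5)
\paragraph{A genuine gap: the passive strategy fails.} Your reduction breaks at its central step. That step claims that, with unaltered trajectories, the third agent's first visit to $x$ is no later than the crash evacuation time. In fact the inequality runs the other way, because crash evacuation lets the last agent abandon its schedule and run straight to $x$.

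Take Scenario A: $x=(1+\eps)d_{i,j}$, agent $i+1$ faulty and silent, and agent $i+2$ announcing at essentially time $t^{(2)}_{i,j}$. At that moment agent $i$ is (up to $\eps$) at $d^{(2)}_{i,j}=(s-a)d_{i,j}$, where $s-a\approx 0.328>0$. Its next turning point is $d_{i,j+1}=-r\,d_{i,j}$, on the \emph{opposite} side of the origin. So it is about to move away from $x$, not toward it, and your claim that it ``is about to head toward $d_{i,j}$ anyway'' is false.

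Agent $i$ is now known to be reliable, but the announcer is not. Hence the target is certified only when agent $i$ itself reaches $x$. On its own trajectory this happens after the full swing to $d_{i,j+1}$, at time $t_{i,j+1}+|d_{i,j+1}|+|x|=(2qr+1+\eps)|d_{i,j}|$, a ratio of $1+2qr\approx 21.76$. More generally, take $x=z\,d_{i,j}$ with $1<z<ar^{1/3}$, i.e.\ $x\in(d_{i,j},d^{(1)}_{i+2,j-1})$. There the unmodified schedule gives ratio $1+2qr/z>\alpha$, where $\alpha=7.43701137$; it drops to $R^B_1=\alpha$ only at $z=ar^{1/3}$.

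The bound $1+(t^{(1)}_{i,j}+|d^{(1)}_{i,j}|)/|d_{i,j}|$ of Lemma~\ref{lm:RA} holds only because the crash-model agent trusts the announcement and skips its turn at $d^{(2)}_{i,j}$. No numerical check over $z$ can rescue the passive strategy.

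\paragraph{The missing idea: bounding the cost of a false claim.} The agents must act on an announcement they cannot yet verify. The real content of the proof is then bounding the cost of a \emph{false} announcement, which your proposal never addresses. Your remark that ``a lie only delays nothing'' holds only because the passive strategy also gains nothing from a true claim. The paper's algorithm reacts as follows.

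If the first announcement is at $x\in(d_{i,j},d^{(1)}_{i+2,j-1}]$, every agent that has not yet visited $x$ heads there at once. A true claim then costs at most the crash evacuation time. After checking $x$, the reliable agents sweep in opposite directions. There are two false-claim cases:
\begin{itemize}
\item If $i+1$ lied, the real target lies beyond $d_{i+2,j-1}$ on the far side, and the detour ratio is exactly $R^A_1$.
\item If $i+2$ lied, the real target lies beyond $d^{(1)}_{i,j}$, and the ratio is exactly $R^B_1$.
\end{itemize}
Both ratios equal $\alpha$ by Lemma~\ref{lm:opt_a1}.

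If the announcement is at $x\in(d^{(1)}_{i+2,j-1},d_{i+1,j}]$, agent $i$ first completes its swing to $d_{i,j+1}$; this is the only region where following the schedule is harmless. This gives ratio $R^B_1$ for a true claim and $1+2(q+r^{1/3})/r^{2/3}\approx 2.90$ for a false one. This trade-off, between reacting to a claim and being misled by it, is exactly what separates byzantine search from crash evacuation.
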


        We first give a quick discussion of the model. A byzantine faulty agent is similar to a crash faulty agent except that byzantine agents can also lie about finding the target, in addition to failing to announce the target. Therefore, in this case, an announcement that the target has been found cannot necessarily be trusted. In order to confirm whether or not the target is at an announced location $x$ at least one provably reliable agent must confirm that the target is at $x$. With $n=2f+1$, it can be required that all agents reach $x$ before this confirmation occurs (which is, of course, just evacuation in disguise). However, in this case it is possible that the agents do not find the target at $x$ and must continue their search. Nevertheless, in this process the reliable agents will be able to identify at least one of the faulty agents (the one who lied) and thus, if the search continues, it will be from a better standpoint in terms of the ratio of faulty agents to reliable agents. In the specific case of $(n,f)=(3,1)$, the moment the single faulty agent is identified, both of the remaining agents will know each other is reliable.

        The proof of Theorem~\ref{thm:31_search} has been moved to the appendix, however, briefly, we demonstrate that the agents can simply use the generalized schedule of Theorem~\ref{thm:31_crash} to achieve the desired competitive ratio. The proof simply checks that the faulty agent cannot achieve a worse competitive ratio by lying about the target's location.

\section{Conclusions}\label{sec:conclusions}

        We have studied the problem of evacuation by $n=2f+1$ agents when at most $f$ of the agents are crash faulty. We introduced a novel type of search algorithm which gives an improvement for the evacuation as compared to the trajectories used for optimal crash search. These trajectories also gave an improvement on the best known upper bound for the problem of search by three agents at most one of which is byzantine faulty.

        Since we did not carry out a pure optimization of the parameters for our new algorithm it is possible that there are better choices of these parameters than the ones described here. Furthermore, it is possible to further generalize the trajectories of the agents with the addition of more sub-turning points, however, this seems rather unlikely to improve the results. It is also an open problem to improve lower bounds for the evacuation problem since the state of the art derives from the optimality of the search problem, i.e. evacuation has at least the same competitive ratio as search since the agents must first find the target in order to evacuate. As a result, there is a rather large gap between the upper bounds presented here and the best known lower bounds. The complexity of the trajectories described here hints that achieving a tight lower bound for this problem may be a difficult task indeed.

        We have focused on the particular case that $n=2f+1$. When $n > 2f+1$ the problem is trivial since we can split the agents into two groups of at least $f+1$ agents each and send them in opposite directions. This strategy easily leads to a competitive ratio of 3, which is tight. The problem is interesting when $n < 2f+1$ and the analysis of these cases offers an extension of this research. There are some simple observations that one can make for these cases. An upper bound of $9$ is clear since we can have all agents stick together and use the doubling strategy for single agent search. However, in most cases it is expected that one can improve upon this upper bound. Indeed, for pairs $(n,f) = (3k, 2k-1)$, $k \geq 1$, we can form 3 groups of $k$ agents and, with at most $2k-1$ faults, two out of three of these groups are guaranteed to contain a reliable agent. We can therefore have each group act as a single agent and perform the generalized schedule algorithm of Theorem~\ref{thm:31_crash} to achieve an upper bound of $\sim 7.437$. A similar trick can be used to achieve the same competitive ratio as achieved for the case $(2f+1,f)$ for all pairs $((2f+1)k, (f+1)k-1)$, $k \geq 1$ and $f \geq 1$. As was the case when $n=2f+1$, lower bounds for $n < 2f+1$ are all due to the optimality of the proportional schedules for crash faulty search.

\bibliographystyle{plain}
\bibliography{refs.bib}

@inproceedings{kupavskii2018lower,
    title={Lower bounds for searching robots, some faulty},
    author={Kupavskii, A. and Welzl, E.},
    booktitle={PODC 2018},
    pages={447--453},
    year={2018},
    address={Egham, United Kingdom},
    publisher={ACM}
}

@article{AH00,
  title={Exploring unknown environments},
  author={Albers, S. and Henzinger, M. R.},
  journal={SIAM Journal on Computing},
  volume={29},
  number={4},
  pages={1164--1188},
  year={2000},
  publisher={SIAM}
}

@article{albers2002exploring,
  title={Exploring unknown environments with obstacles},
  author={Albers, S. and Kursawe, K. and Schuierer, S.},
  journal={Algorithmica},
  volume={32},
  number={1},
  pages={123--143},
  year={2002},
  publisher={Springer}
}

@inproceedings{DKP91,
  title={How to learn an unknown environment},
  author={Deng, X. and Kameda, T. and Papadimitriou, C.},
  booktitle={FOCS 1991},
  pages={298--303},
  year={1991},
  organization={IEEE}
}

@article{HIKK01,
  title={The polygon exploration problem},
  author={Hoffmann, F. and Icking, C. and Klein, R. and Kriegel, K.},
  journal={SIAM Journal on Computing},
  volume={31},
  number={2},
  pages={577--600},
  year={2001},
  publisher={SIAM}
}

@article{bellman1963optimal,
    title={An optimal search},
    author={Bellman, R.},
    journal={SIAM Review},
    volume={5},
    number={3},
    pages={274--274},
    year={1963},
    publisher={SIAM}
}

@article{beck1964linear,
    title={On the linear search problem},
    author={Beck, A.},
    journal={Israel J. of Mathematics},
    volume={2},
    number={4},
    pages={221--228},
    year={1964},
    publisher={Springer}
}

@article{beck1965more,
  title={More on the linear search problem},
  author={Beck, A.},
  journal={Israel J. of Mathematics},
  volume={3},
  number={2},
  pages={61--70},
  year={1965},
  publisher={Springer}
}

@article{beck1970yet,
    title={Yet more on the linear search problem},
    author={Beck, A. and Newman, D.},
    journal={Israel J. of Mathematics},
    volume={8},
    number={4},
    pages={419--429},
    year={1970},
    publisher={Springer}
}

@article{fristedt_heath_1974,
    title={Searching for a particle on the real line}, 
    volume={6}, 
    DOI={10.2307/1426208}, 
    number={1}, 
    journal={Advances in Applied Probability}, 
    publisher={Cambridge University Press}, 
    author={Fristedt, B. and Heath, D.}, 
    year={1974}, 
    pages={79–102}
}

@article{fristedt1977hide,
    title={Hide and seek in a subset of the real line},
    author={Fristedt, B.},
    journal={International Journal of Game Theory},
    volume={6},
    number={3},
    pages={135--165},
    year={1977},
    publisher={Springer}
}

@article{gal1972general,
    title={A general search game},
    author={Gal, S.},
    journal={Israel Journal of Mathematics},
    volume={12},
    number={1},
    pages={32--45},
    year={1972},
    publisher={Springer}
}

@article{baezayates1993searching,
    title={Searching in the plane},
    author={Baeza-Yates, R. and Culberson, J. and Rawlins, G.},
    journal={Information and Computation},
    volume={106},
    number={2},
    pages={234--252},
    year={1993},
    publisher={Elsevier}
}

@article{baezayates1995parallel,
    title={Parallel searching in the plane},
    author={Baeza-Yates, R. and Schott, R.},
    journal={Computational Geometry},
    volume={5},
    number={3},
    pages={143--154},
    year={1995},
    publisher={Elsevier}
}

@inproceedings{chrobak2015group,
    title       ={Group Search on the Line},
    author      ={Chrobak, M. and G{\k{a}}sieniec, L. and Gorry T. and R. Martin},
    booktitle   ={SOFSEM 2015},
    address     ={Sn\v{e}\v{z}kou, Czech Republic},
    year        ={2015},
    publisher   ={Springer},
    pages       ={164--176}
}

@article{bampas2019linear,
    title={Linear search by a pair of distinct-speed robots},
    author={Bampas, E. and Czyzowicz, J. and G{\k{a}}sieniec, L. and Ilcinkas, D. and Klasing, R. and Kociumaka, T. and Paj{\k{a}}k, D.},
    journal={Algorithmica},
    volume={81},
    number={1},
    pages={317--342},
    year={2019},
    publisher={Springer}
}

@article{demaine2006online,
  title={Online searching with turn cost},
  author={Demaine, E.D. and Fekete, S.P. and Gal, S.},
  journal={Theoretical Computer Science},
  volume={361},
  number={2},
  pages={342--355},
  year={2006},
  publisher={Elsevier}
}

@inproceedings{CGKKKLNOS19wireless,
    author      ={Czyzowicz, J. and Georgiou, K. and Killick, R. and Kranakis, E. and Krizanc, D. and Lafond, M. and Narayanan, L. and Opatrny, J. and Shende, S.},
    title       ={Energy Consumption of Group Search on a Line},
    booktitle   ={ICALP 2019},
    pages       ={137:1--137:15},
    year        ={2019},
    address     ={Patras, Greece},
    publisher   ={LIPIcs}
}

@article{czyzowicz2021time,
  title={Time-energy tradeoffs for evacuation by two robots in the wireless model},
  author={Czyzowicz, J. and Georgiou, K. and Killick, R. and Kranakis, E. and Krizanc, D. and Lafond, M. and Narayanan, L. and Opatrny, J. and Shende, S.},
  journal={Theoretical Computer Science},
  volume={852},
  pages={61--72},
  year={2021},
  publisher={Elsevier}
}

@inproceedings{PODC16,
    title       ={Search on a Line with Faulty Robots},
    author      ={Czyzowicz, J. and Kranakis, E. and Krizanc, D. and Narayanan, L. and Opatrny J.},
    booktitle   ={PODC 2016},
    address     ={Chicago, Illinois},
    pages       ={405--414},
    year        ={2016},
    publisher   ={ACM}
}

@inproceedings{isaacCzyzowiczGKKNOS16,
    author    = {Czyzowicz, J. and
                Georgiou, K. and
                Kranakis, E. and
                Krizanc, D. and
                Narayanan, L. and
                Opatrny, J. and
                S. Shende},
    title     = {Search on a Line by Byzantine Robots},
    booktitle = {ISAAC 2016},
    address   = {Toronto, Canada},
    pages     = {27:1--27:12},
    year      = {2016},
    publisher = {LIPIcs}
}

@incollection{Sun2020,
  title={Better Upper Bounds for Searching on a Line with Byzantine Robots},
  author={Sun, X. and Sun, Y. and Zhang, J.},
  booktitle={Complexity and Approximation},
  pages={151--171},
  year={2020},
  publisher={Springer}
}

@inproceedings{inbookryan,
author = {Czyzowicz, J. and Killick, R. and Kranakis, E. and Stachowiak, G.},
year = {2021},
month = {January},
pages = {217-227},
booktitle = {SIAM Conference on Applied and Computational Discrete Algorithms (ACDA21)},
title = {Search and evacuation with a near majority of faulty agents},
publisher = {SIAM},
isbn = {978-1-61197-683-0},
doi = {10.1137/1.9781611976830.20}
}

\appendix

\section{Proofs missing from Subsection~\ref{sec:prelim}}
\begin{proof}(Lemma~\ref{lm:wc})
        Consider an evacuation algorithm defined by turning point trajectories. Suppose that the competitive ratio of this algorithm is $\alpha$ and consider a location $x$ that defines this ratio. We will make use of the work of \cite{kupavskii2018lower} which states that we have $\alpha > 3$.
        
        Without loss of generality assume that $x>0$, and, for the sake of deriving a contradiction, suppose that $x$ is not a turning point of one of the agents. Since $x$ is not at a turning point there exists $u>0$ such that the interval $[x-u,x]$ does not contain any turning points. 
        
        Let $S$ represent the first time the target at $x$ is announced; let $i_*$ represent the identity of the agent most distant from $x$ at time $S$; and let $\Delta$ represent the distance between agent $i_*$ and $x$ at the time $S$. Similarly define $S'$, $i_*'$, and $\Delta'$ for the situation that the target is at $x'=x-u$ instead of $x$. By assumption we have
        \[\alpha = \frac{S+\Delta}{x} > \frac{S'+\Delta'}{x-u}.\]
        Consider how $\Delta$ and $S$ would change if the target were instead placed at $x'=x-u$. The agent that announces the location $x$ must have been moving in the positive direction when it reached $x$. Since agents move at constant unit speed between turning points it is clear that $S' = S - u$. For $\Delta'$ there are a few cases to consider. If we have $i_*=i_*'$ and agent $i_*$ was moving towards $x$ at time $S$ then it is clear that $\Delta' = \Delta$. If $i_*=i_*'$ and agent $i_*$ is moving away from $x$ at time $S$ then $\Delta' = \Delta-2u$. If $i_* \neq i_*'$ agents $i_*$ and $i_*'$ crossed paths during the time interval $[S',S]$. In any case, it is simple to see that we will have $\Delta-2u \leq \Delta' \leq \Delta$. We can thus conclude that
        \begin{multline*}
                \frac{S'+\Delta'}{x-u} \geq \frac{S+\Delta-3u}{x-u} = \frac{\alpha x - 3u}{x-u}\\
                 = \frac{\alpha(x-u) + (\alpha-3)u}{x-u} = \alpha + \frac{(\alpha-3)}{x-u} > \alpha
        \end{multline*}
        where the last step results from the fact that $\alpha > 3$. We have therefore arrived at a contradiction and must conclude that the lemma holds.
\end{proof}

\section{Proofs and figures missing from Section~\ref{sec:crash_evac}}

\begin{figure}[tbhp]
        \centering
        \includegraphics[scale=0.19]{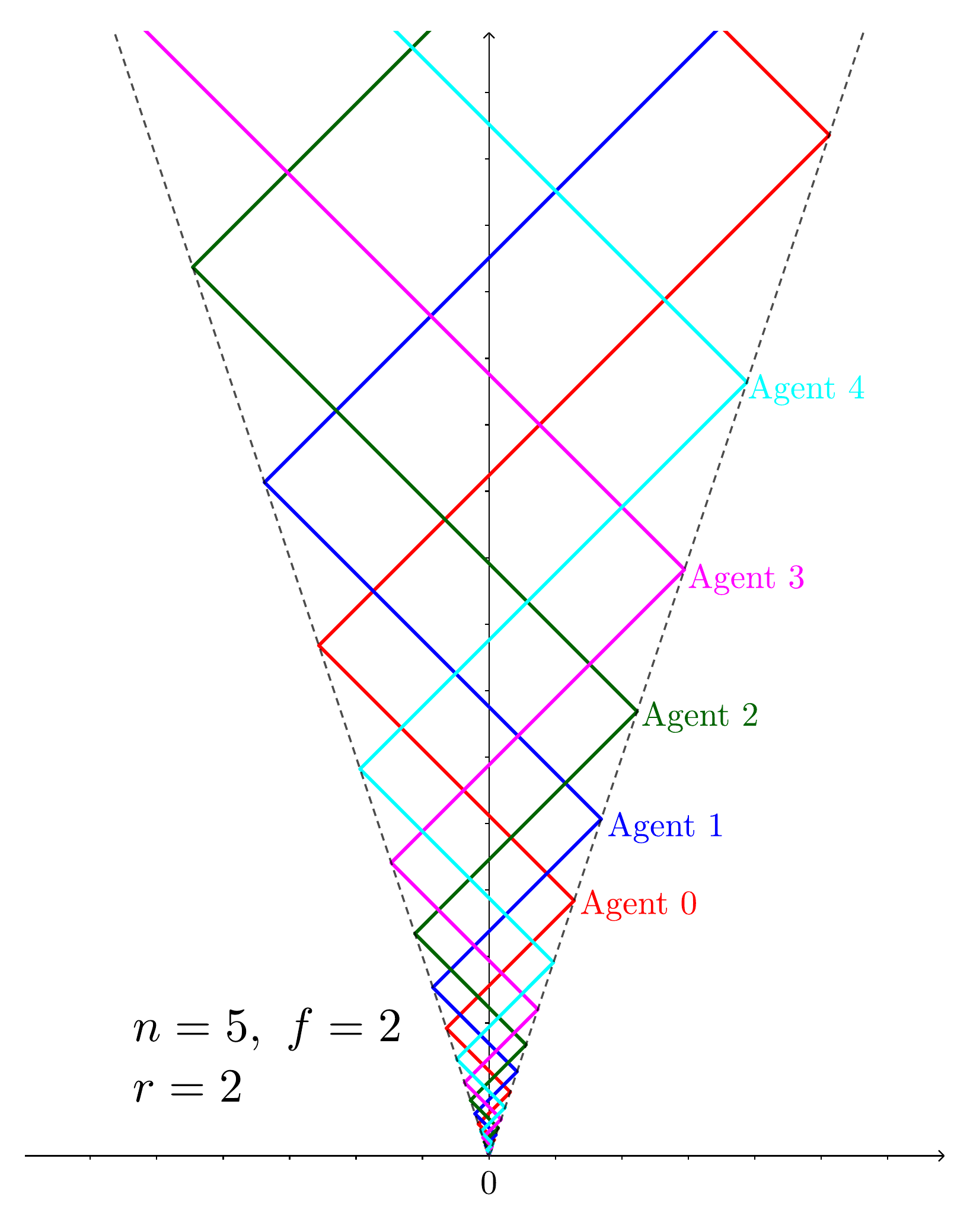}\\
        \includegraphics[scale=0.19]{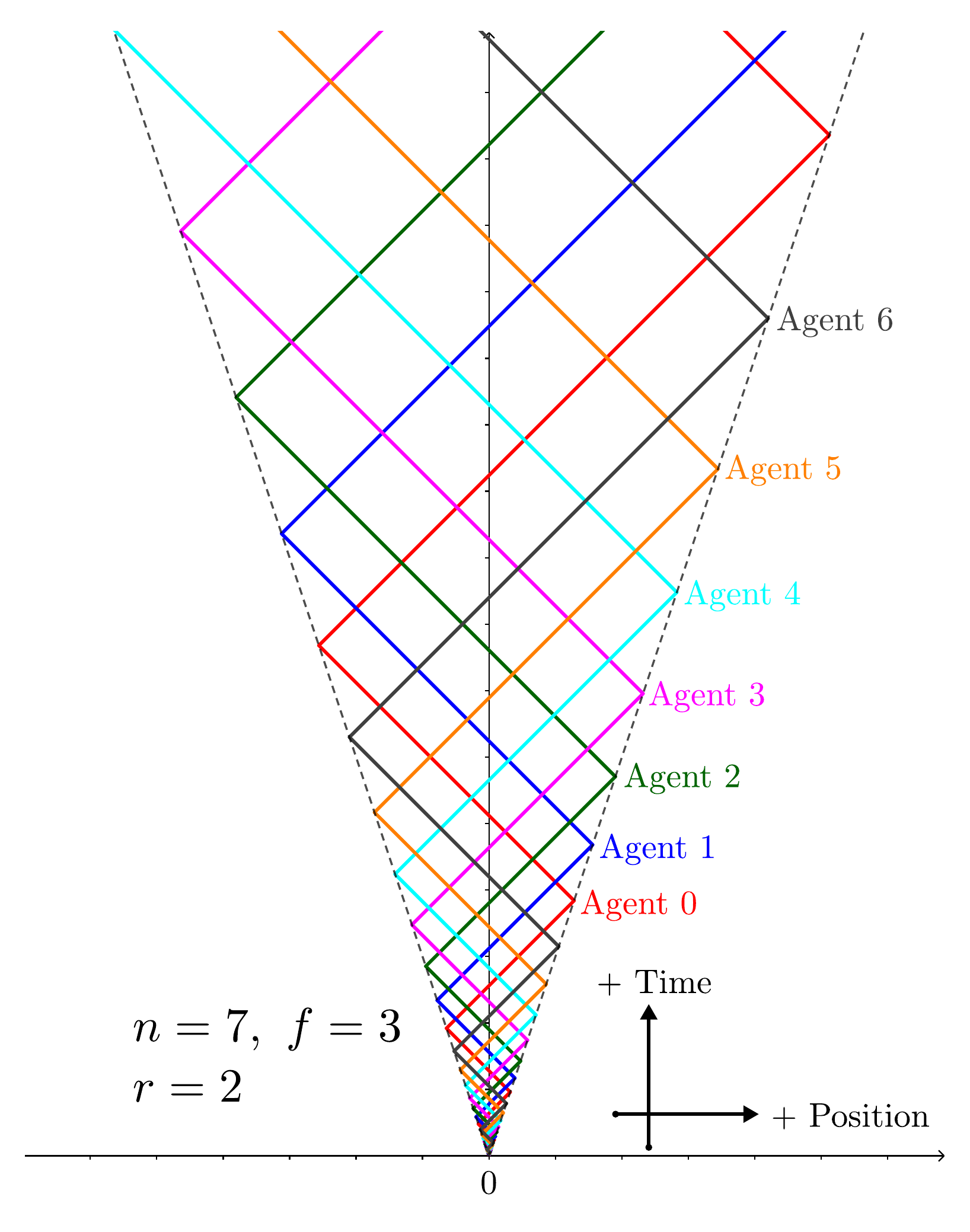}\\
        \includegraphics[scale=0.19]{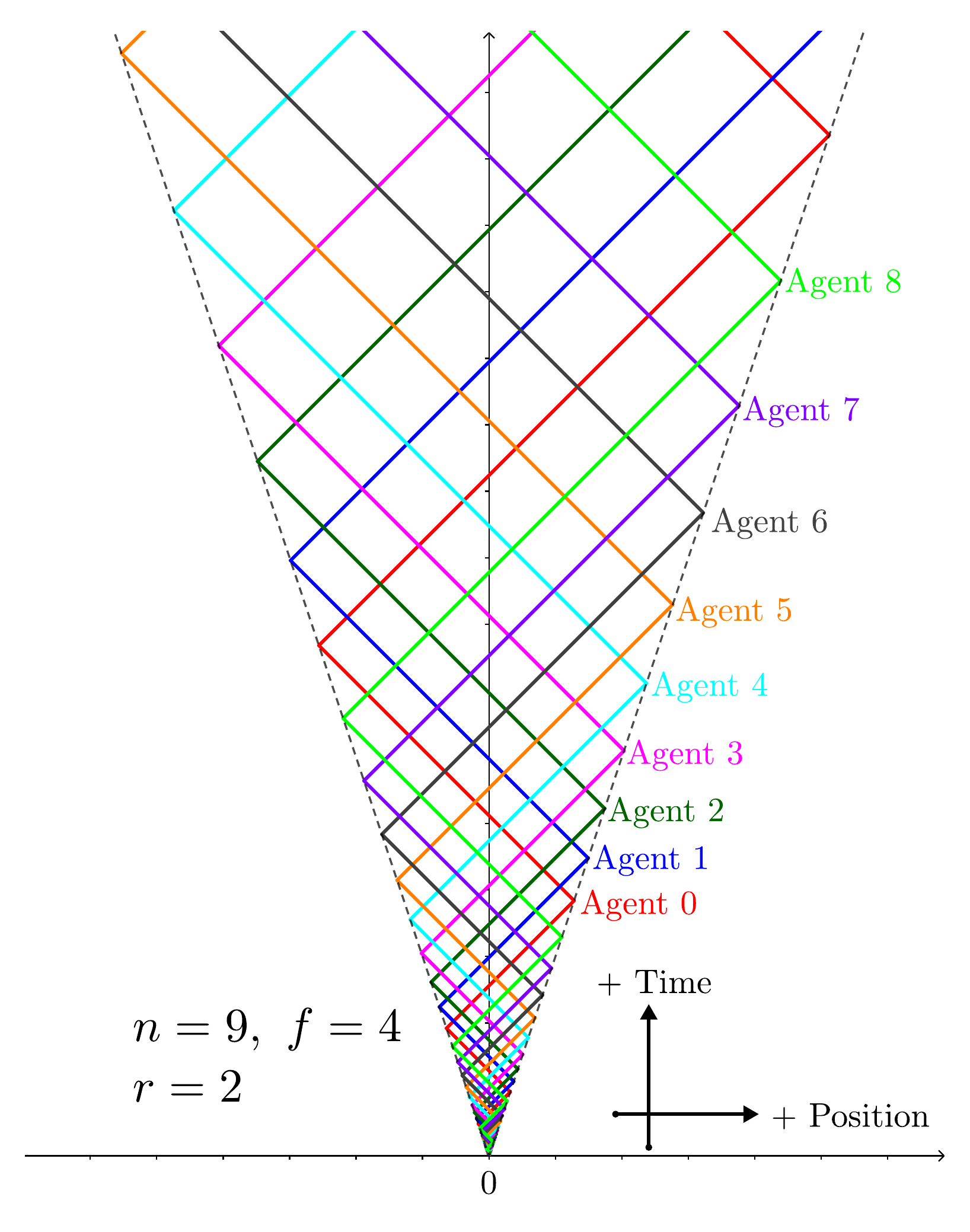}        
        \caption{Example space-time trajectories when $r=2$ and $n=5,7,9$. Note that each of the turning-points lie along a cone with slope $\pm \frac{r+1}{r-1}$.}\label{fig:example_prop}
\end{figure}

\begin{proof}(Lemma~\ref{lm:tij})
        Agent $i$ will reach $d_{i,j-1}$ at time $t_{i,j-1}$ and will then travel distance $|d_{i,j}-d_{i,j-1}|$ at unit speed to reach location $d_{i,j}$. We thus have the recursion
        \[t_{i,j} = t_{i,j-1} + |d_{i,j}-d_{i,j-1}| = t_{i,j-1} + (r+1)|d_{i,j-1}|.\]
        Unrolling this recursion leads to
        \begin{align*}
                t_{i,j} &= (r+1)\sum_{k=-\infty}^{j-1}|d_{i,k}|
                = (r+1)r^{2i/n}\sum_{k=-\infty}^{j-1}r^k\\
                &= \frac{r+1}{r-1} r^{2i/n + j}
                = \frac{r+1}{r-1} |d_{i,j}|.
        \end{align*}
\end{proof}

\begin{proof}(Lemma~\ref{lm:Tijk})
        Since $d_{i,j} \not \in I_{i,j}$ agent $i$ will not reach $x$ until after its turning point $d_{i,j+1}$. On the other hand, agent $i+1$ will visit $x$ while traveling towards $d_{i+1,j}$ and it is clear that this will be the first visit of agent $i+1$ to $x$. More generally, agent $i+k$, $k>0$, will reach $x$ for the first time while traveling towards $d_{i+k,j}$. We can thus conclude that
        \begin{align*}
                T_{i,j,k}(z) &= t_{i+k,j} - |d_{i+k,j}-z d_{i,j}|\\
                &= t_{i+k,j} - |d_{i+k,j}| + z|d_{i,j}|\\
                &= \frac{r+1}{r-1}|d_{i+k,j}| - |d_{i+k,j}| + z|d_{i,j}|\\
                &= \frac{2}{r-1}|d_{i+k,j}| + z|d_{i,j}|\\
                &= \left(z + \frac{2r^{2k/n}}{r-1}\right)|d_{i,j}|.
        \end{align*}
\end{proof}

\begin{proof}(Lemma~\ref{lm:rho_tau})
        One can observe that while travelling away from its turning point $j$ agent $i$ will be moving in the direction $-(-1)^j$ along a line with equation
        \[t = t_{i,j}-(-1)^j(x-d_{i,j})\]
        On the other hand, agent $i+k$ will be traveling in the direction $(-1)^j$ along the line
        \[t = t_{i+k,j} + (-1)^j(x-d_{i+k,j})\]
        while moving towards its turning point $d_{i+k,j}$. The position $\rho_{i,j,k}$ at which agent $i$ and $i+k$ meet can thus be determined by subtracting one of these lines from the other and solving for $x$. We find that
        \begin{align*}
                &\rho_{i,j,k} = \frac{(-1)^j}{2}\left[(-1)^j(d_{i,j}+d_{i+k,j}) + (t_{i,j} - t_{i+k,j})\right]\\
                &= \frac{(-1)^j}{2}\left[(|d_{i,j}|+|d_{i+k,j}|) + \frac{r+1}{r-1}(|d_{i,j}| - |d_{i+k,j}|)\right]\\
                &= \frac{(-1)^j|d_{i,j}|}{2}\left[(1+r^{2k/n}) + \frac{r+1}{r-1}(1-r^{2k/n})\right]\\
                &= \frac{d_{i,j}}{2(r-1)}\left[(r-1)(1+r^{2k/n}) + (r+1)(1-r^{2k/n})\right]\\
                &= \frac{d_{i,j}}{2(r-1)}\left[2r - 2r^{2k/n}\right]
                = \frac{r-r^{2k/n}}{r-1}d_{i,j}.
        \end{align*}
        The time $\tau_{i,j}$ is then
        \begin{align*}
                \tau_{i,j,k} &= -(-1)^j(\rho_{i,j,k}-d_{i,j})+t_{i,j}\\
                &= -|d_{i,j}|\left(\frac{r-r^{2k/n}}{r-1}-1\right)+\frac{r+1}{r-1}|d_{i,j}|\\
                &= \frac{r+1+r-1-r+r^{2k/n}}{r-1}|d_{i,j}|\\
                &= \frac{r+r^{2k/n}}{r-1}|d_{i,j}|.
        \end{align*}
        Note that when $k=0$ we have $\rho_{i,j,0}=d_{i,j}$ and $\tau_{i,j,0} = t_{i,j}$ and thus this ``self intersection'' point is just a turning point.
\end{proof}

\begin{figure}[tbhp]
        \centering
        \includegraphics[scale=0.2]{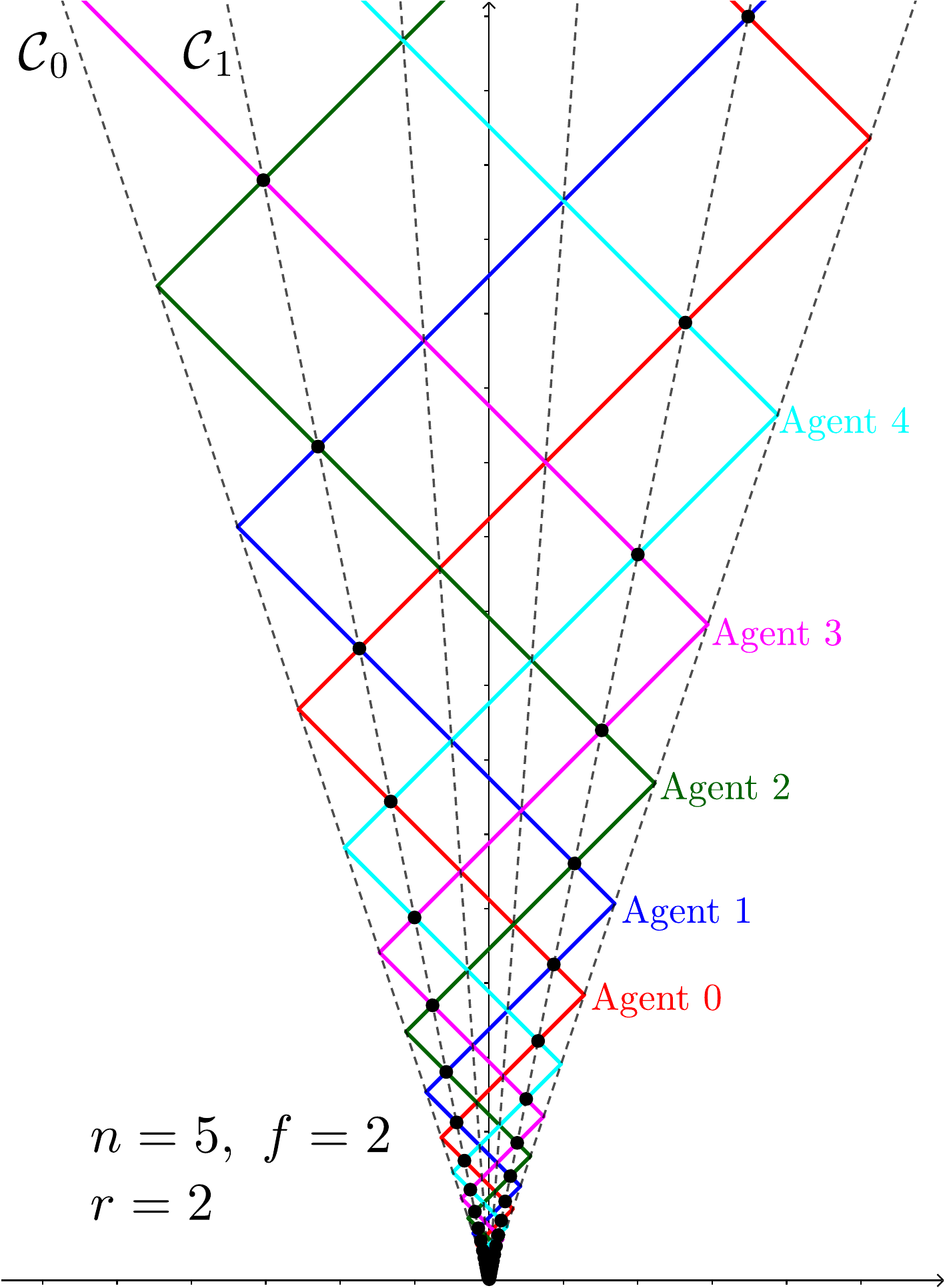}
        \includegraphics[scale=0.2]{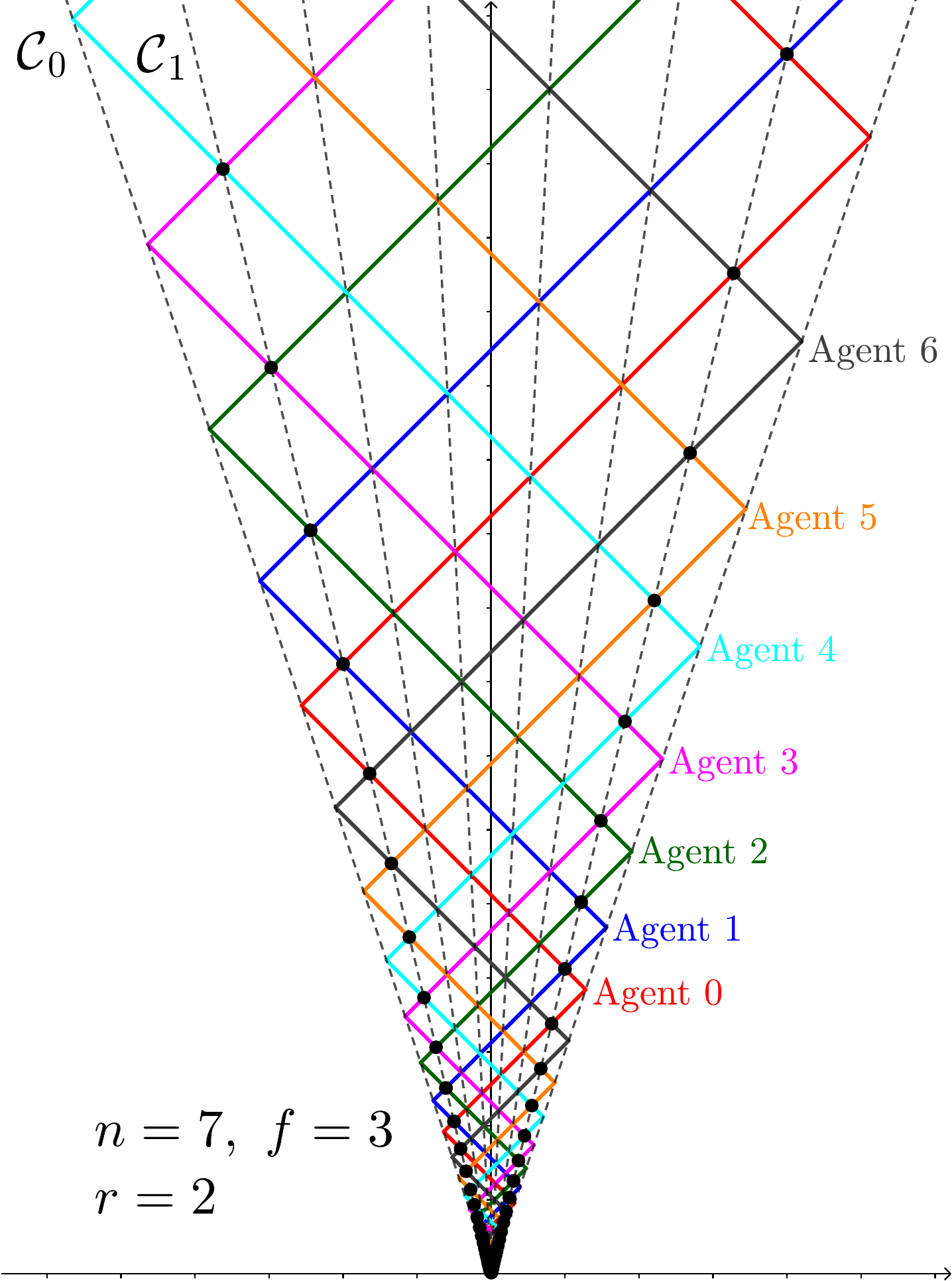}
        \includegraphics[scale=0.2]{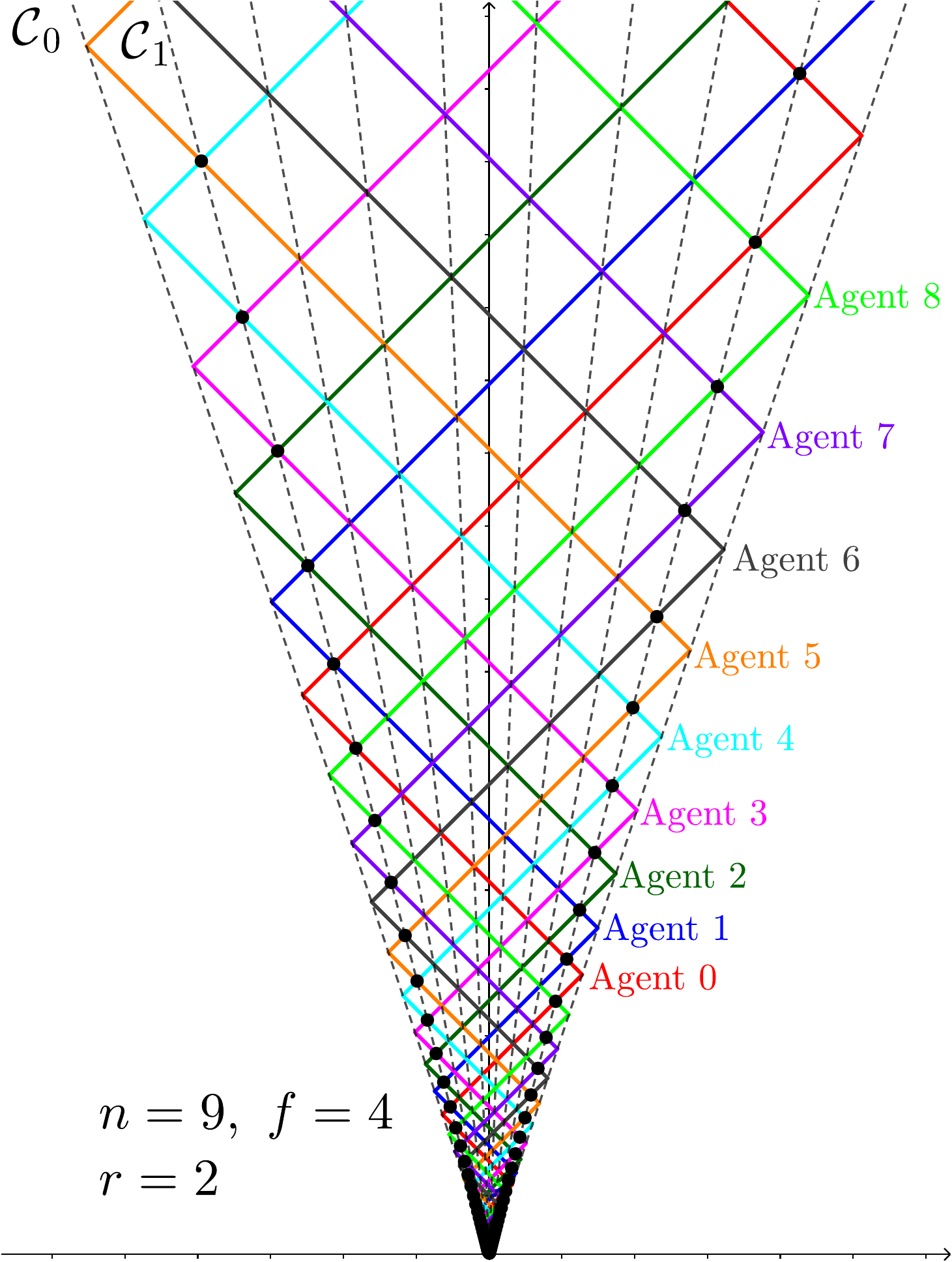}
        \caption{Illustrating the points $(\rho_{i,j,k},\tau_{i,j,k})$ and their respective cones. Only $(\rho_{i,j,1},\tau_{i,j,1})$ are indicated.\label{fig:intersects}}
\end{figure}

\begin{proof}(Theorem~\ref{thm:crash_evac_prop})
        In the worst case the target is just beyond a turning point and so we assume that the target is at location $x = (1+\eps) d_{i,j}$, with $\eps > 0$ arbitrarily small. Also suppose that it is agent $k \in \{1,\ldots,f+1\}$ that is the first reliable agent to reach the target. Then by Lemma~\ref{lm:Tijk}, Lemma~\ref{lm:delta}, and equation \eqref{eq:evac_gen} we have
        \begin{align*}
                E_f^x &\geq |x| + T_{i,j,k}(1+\eps)\left(1 + \frac{1}{\beta_1}\right)\\
                E_f^x &\leq |x| + T_{i,j,k}(1+\eps)\left(1 + \frac{1}{\beta_0}\right)
        \end{align*}
        Dividing by $|x|=(1+\eps)|d_{i,j}|$, substituting in the expressions for $T_{i,j,k}(1)$, $\beta_0$, and $\beta_1$ then yields
        \begin{align*}
                R_f &\geq 1 + \frac{2r}{r+r^{2/n}} + \frac{4r^{1+2k/n}}{(1+\eps)(r+r^{2/n})(r-1)}\\
                R_f &\leq 1 + \frac{2r}{r+1} + \frac{4r^{1+2k/n}}{(1+\eps)(r+1)(r-1)}
        \end{align*}           
        the right hand side of both inequalities increases with $k$ and so taking $k=f+1$ yields  
        \begin{align*}
                R_f &\geq 1 + \frac{2r}{r+r^{2/n}} + \frac{4r^{2+1/n}}{(1+\eps) (r+r^{2/n})(r-1)}\\
                R_f &\leq 1 + \frac{2r}{r+1} + \frac{4r^{2+1/n}}{(1+\eps)(r+1)(r-1)}.
        \end{align*}       
        The theorem then follows by taking $\eps \rightarrow 0$ on the right hand side.     
\end{proof}

\begin{proof}(Theorem~\ref{thm:asymp})
        Following from the discussion preceding Theorem~\ref{thm:asymp} we have
        \begin{align*}
                \hat{R} &= \lim_{f \rightarrow \infty}  \left[1+\frac{2r}{r+1} + \frac{4r^{2+1/n}}{(r+1)(r-1)}\right]\\
                &= 1+\frac{2r}{r+1} + \frac{4r^2}{(r+1)(r-1)}\\
                &= 1 + \frac{6r^2-2r}{r^2-1}
                = 7 - \frac{2(r-3)}{r^2-1}.
        \end{align*}
        The second part of the lemma follows by optimizing the asymptotic competitive ratio with respect to $r$. Observe that
        \[\frac{d\hat{R}}{dr} = \frac{4r(r-3)}{(r^2-1)^2} - \frac{2}{r^2-1}.\]
        Setting this equal to zero and rearranging yields the quadratic equation
        \[r^2-6r+1=0\]
        which can be solved to find $r = 3 \pm 2\sqrt{2}$. Taking the positive root (since $r>1$) and substituting this into the expression for $\hat{R}$ yields
        \begin{align*}
                \hat{R} &= 7 - \frac{4\sqrt{2}}{(3+2\sqrt{2})^2-1} = 7 - \frac{4\sqrt{2}}{16+12\sqrt{2}}\\
                &= 7 - \frac{1}{2\sqrt{2}+3}
                = 7 - \frac{3-2\sqrt{2}}{(2\sqrt{2}+3)(3-2\sqrt{2})}\\
                &= 7 - (3-2\sqrt{2}) = 4+2\sqrt{2}.
        \end{align*}
\end{proof}

\begin{figure}[tbhp]
        \centering
        \includegraphics[width=\linewidth,keepaspectratio]{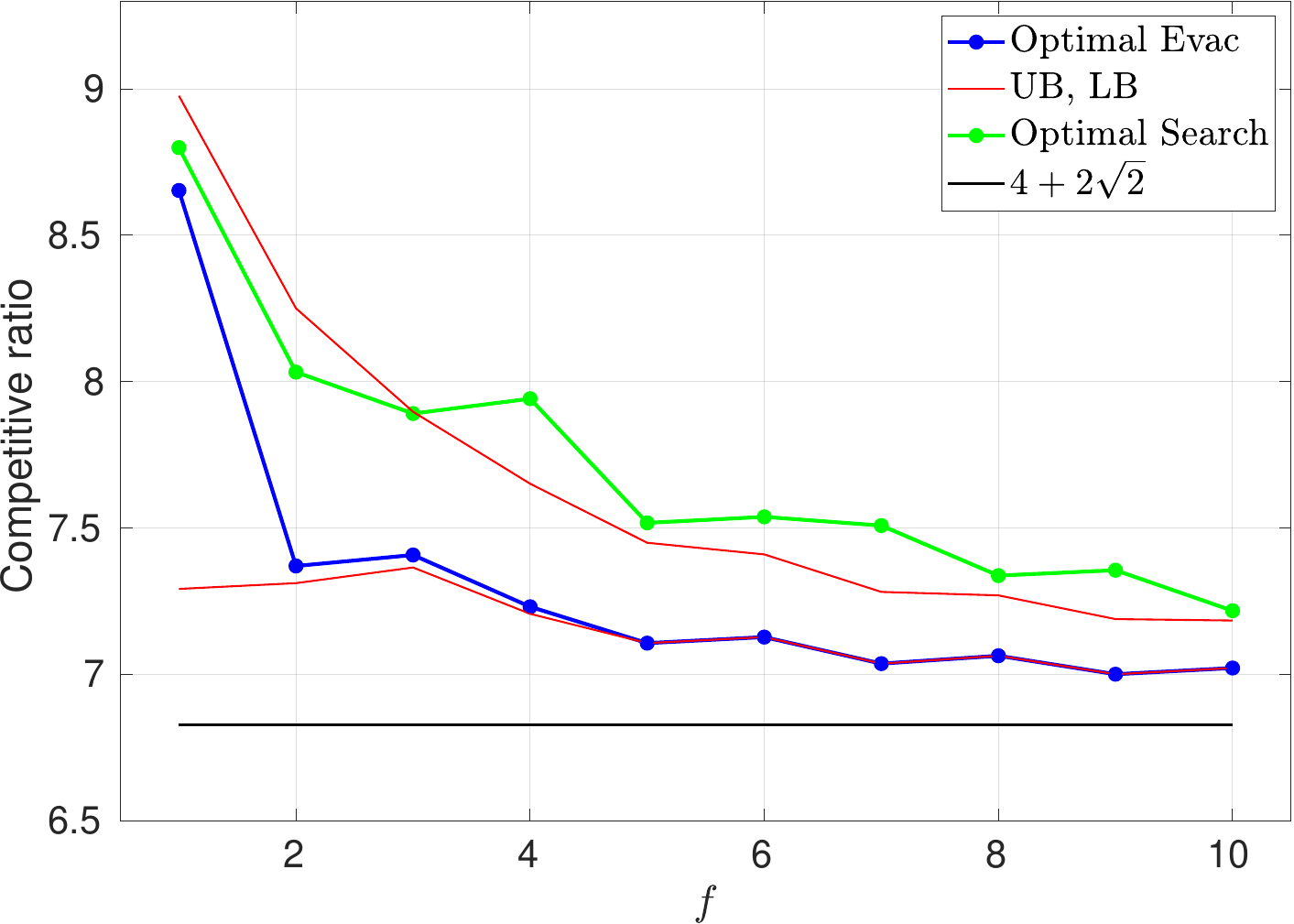}
        \caption{The competitive ratio as a function of the number of faults $f$ for the proportional schedule algorithm. The blue line indicates the optimized competitive ratio; the red lines indicate the bounds of Theorem~\ref{thm:crash_evac_prop}; the green line indicates the competitive ratio of evacuation when we choose $r$ to optimize only $S_f^x$; the lower black line indicates the asymptotic limit of $4+2\sqrt{2}$.\label{fig:cr_compare}}
\end{figure}

\section{Proofs and figures missing from Section~\ref{sec:gen_prop_schedules}}
\subsection{Proofs and figures missing from Subsection~\ref{sec:many_faults}}
\begin{proof}(Lemma~\ref{eq:tij_ell})
        While traveling from $d_{i,j-1}$ to $d_{i,j}$ agent $i$ will travel distance $|d_{i,j}-d_{i,j-1}|$ plus twice the distance between $d^{(1)}_{i,j-1}$ and $d^{(2)}_{i,j-1}$. We thus have
        \begin{align*}
                t_{i,j} &= t_{i,j-1} + |d_{i,j}-d_{i,j-1}|+2|d^{(2)}_{i,j-1}-d^{(1)}_{i,j-1}|\\
                &= t_{i,j-1} + (r+1)|d_{i,j-1}| + 2s|d_{i,j-1}|\\
                &= t_{i,j-1} + (r+1+2s)|d_{i,j-1}|\\
                &= (r+1+2s)\sum_{k=-\infty}^{j-1}|d_{i,k}|\\
                &= \frac{r+1+2s}{r-1}|d_{i,j}| = (2q-1)|d_{i,j}|.
        \end{align*}
        After reaching $d_{i,j}$ at time $t_{i,j}$ agent $i$ must travel distance $|d^{(1)}_{i,j}-d_{i,j}| = (1+a)|d_{i,j}|$ to reach $d^{(1)}_{i,j}$ and thus $t^{(1)}_{i,j} = t_{i,j}+(1+a)|d_{i,j}| = (2q+a)|d_{i,j}|$. Similarly, after reaching $d^{(1)}_{i,j}$ at time $t^{(1)}_{i,j}$ agent $i$ must travel distance $|d^{(2)}_{i,j}-d^{(1)}_{i,j}| = s |d_{i,j}|$ to reach $d^{(2)}_{i,j}$. Thus $t^{(2)}_{i,j} = t^{(1)}_{i,j} + s|d_{i,j}| = t_{i,j} + (1+s+a)|d_{i,j}| = (2q+s+a)|d_{i,j}| = [q(r+1)-r+a]|d_{i,j}|$.
\end{proof}

\begin{proof}(Lemma~\ref{lm:tijk_gen})
        As was the case in Lemma~\ref{lm:Tijk}, agent $i+k$, $k=1,\ldots,n$, will arrive to location $x$ while traveling between its turning points $d_{i+k,j-1}$ and $d_{i+k,j}$. However, for the generalized algorithm, the expression for $T_{i,j,k}(z)$ will depend on whether or not agent $i+k$ reaches $x$ before or after its sub-turning point $d^{(1)}_{i+k,j-1}$, i.e. whether or not we have $x \in [d_{i+k,j-1},d^{(1)}_{i+k,j-1}]$ or $x \in (d^{(1)}_{i+k,j-1},d_{i+k,j}]$. We consider first the case that $x \in [d_{i+k,j-1},d^{(1)}_{i+k,j-1}]$. We have
        \begin{align*}
                T_{i,j,k}(z) &= t_{i+k,j-1} + |z d_{i,j}-d_{i+k,j-1}|\\
                &= t_{i+k,j-1} + z|d_{i,j}|+|d_{i+k,j-1}|\\
                &= 2q|d_{i+k,j-1}| + z|d_{i,j}|\\
                &= \left[z+2qr^{2k/n-1}\right]|d_{i,j}|
        \end{align*}
        If $x \in (d^{(1)}_{i+k,j-1},d_{i+k,j}]$ then agent $i+k$ will travel an extra distance equal to $2|d^{(2)}_{i+k,j-1}-d^{(1)}_{i+k,j-1}| = 2s|d_{i+k,j-1}|$ to reach $x$ as compared to the previous case. We thus have
        \begin{align*}
                T_{i,j,k}(z) &= \left[z + 2qr^{2k/n-1}\right]|d_{i,j}| + 2s|d_{i+k,j-1}|\\
                &= \left[z + 2(q+s)r^{2k/n-1}\right]|d_{i,j}|\\
                &= \left[z + 2(q-1)r^{2k/n}\right]|d_{i,j}|.
        \end{align*}
        We now establish under what conditions we find ourselves in each of these two cases. To do this we need to compare $x$ to $d^{(1)}_{i+k,j-1}$. To have $x \in [d_{i+k,j-1},d^{(1)}_{i+k,j-1}]$ when $j$ is even we need $x \leq d_{i+k,j-1}$ and when $j$ is odd we need $x \geq d_{i+k,j-1}$. We thus need to consider the inequality
        \begin{align*}
                &(-1)^j x \leq (-1)^j d^{(1)}_{i+k,j-1}\\
                \rightarrow \quad& z |d_{i,j}| \leq a r^{2k/n-1}|d_{i,j}|\\
                \rightarrow \quad& a \geq \frac{z\,r}{r^{2k/n}}.
        \end{align*}
\end{proof}

\begin{proof}(Lemma~\ref{lm:S_cond})
        By Lemma~\ref{lm:tijk_gen} we will have $T_{i,j,f+1}(z) = T^\circ_{i,j,f+1}(z)$ when $a \geq \frac{zr}{r^{2(f+1)/n}} = \frac{z}{r^{1/n}}$. For $x \in I_{i,j}$ we have $z \in (1,r^{2/n}]$ and thus agent $i$ will reach any position $x \in I_{i,j}$ provided that $a \geq \frac{r^{2/n}}{r^{1/n}}=r^{1/n}$.

        Now suppose that $a \geq r^{1/n}$. Since the condition $a \geq \frac{zr}{r^{2k/n}}$ gets easier to satisfy for larger $k$ we can conclude that agents $i+k$, $k=f+2,\ldots,n$, will reach $x$ at times $T_{i,j,k}(z) = T^\circ_{i,j,k}(z)$. Let $\dagger=\circ,+$ and observe that for fixed $i$, $j$, and $z$ we have
        \[T^\dagger_{i,j,1}(z) < T^\dagger_{i,j,2}(z) < \ldots < T^\dagger_{i,j,f+1}(z).\]
        In particular, agents $i+f+2,\ldots,i+n$ will arrive to $x$ after agent $i+f+1$. Thus, in order to be the $(f+1)^{st}$ agent to reach $x$, agents $i+1,i+2,\ldots,i+f$ must reach $x$ before agent $i+f+1$.

        Define $k_*$ as the largest integer such that $a < \frac{zr}{r^{2k_*/n}}$. Then for each $k \leq k_*$ we have $T_{i,j,k}(z) = T^+_{i,j,k}(z)$ and agent $i+k_*$ will arrive to $x$ after agents $i+1,\ldots,i+k_*-1$. Moreover, agents $i+k_*+1,\ldots,i+f$ will arrive to $x$ before agent $i+f+1$. Thus, to make sure agent $i+f+1$ is the $(f+1)^{st}$ agent to reach $x$ we must make sure that $T^\circ_{i,j,f+1}(z) \geq T^+_{i,j,k_*}(z)$. We have
        \begin{align*}
                &T^\circ_{i,j,f+1}(z) \geq T^+_{i,j,k_*}(z)\\
                \rightarrow \quad& z + 2qr^{2(f+1)/n-1} \geq z + 2(q-1)r^{2k_*/n}\\
                \rightarrow \quad& qr^{2(f+1)/n-1} \geq (q-1)r^{2k_*/n}\\
                \rightarrow \quad& \frac{q}{q-1}r^{1/n} \geq r^{2k_*/n}.
        \end{align*}
        We can further rearrange this to get
        \[q \leq \frac{r}{r-r^{2(f+1-k_*)/n}}.\]
        Taking $k_* \leq f$ demonstrates that agents $i+1,\ldots,i+f$ will all reach $x$ before agent $i+f+1$ when $q \leq \frac{r}{r-r^{2/n}}$. Similarly, taking $k_* \leq f-1$ demonstrates that agents $i+1,\ldots,i+f-1$ will all reach $x$ before agent $i+f+1$ when $q \leq \frac{r}{r-r^{4/n}}$. Now suppose that $\frac{r}{r-r^{2/n}} < q \leq \frac{r}{r-r^{4/n}}$. Then we claim that agent $i+f$ reaches its turning point $d^{(1)}_{i+f,j-1}$ somewhere in $I_{i,j}$ and agent $i+f+1$ will be the $(f+1)^{st}$ agent to reach only the points $(d_{i,j},d^{(1)}_{i+f,j-1}]$. Indeed, we can observe that
        \[d^{(1)}_{i+f,j-1} = ar^{2f/n-1}d_{i,j} = \frac{a}{r^{1/n}}d_{i,j}\]
        and since we are assuming that $a \geq r^{1/n}$ we have $d^{(1)}_{i+f,j-1} \in I_{i,j}$. We can also conclude from this that agent $i+f$ will arrive before agent $i+f+1$ to exactly those positions $x = z d_{i,j}$ for which $z \leq \frac{a}{r^{1/n}}$, i.e. $a \geq z r^{1/n}$.
\end{proof}

\begin{proof}(Lemma~\ref{lm:intersects_gen})
        To determine a general expression for $\rho_{i,j,k}$ and $\tau_{i,j,k}$ one must consider multiple cases depending on where the intersection of the trajectories of agents $i$ and $i+k$ takes place relative to their sub-turning points. Agent $i$ will either be located in the interval $[d_{i,j},d^{(1)}_{i,j}]$ or the interval $(d^{(1)}_{i,j},d_{i,j+1}]$ and agent $i+k$ will either be in $[d_{i+k,j-1},d^{(1)}_{i+k,j-1}]$ or $(d^{(1)}_{i+k,j-1},d_{i+k,j}]$. We will only be interested in the two cases that agent $i$ is in $[d_{i,j},d^{(1)}_{i,j}]$ and agent $i+1$ is in $[d_{i+k,j-1},d^{(1)}_{i+k,j-1}]$ or $(d^{(1)}_{i+k,j-1},d_{i+k,j}]$. One can refer to Figure~\ref{fig:intersections} for an illustration of each of these two cases.

        \begin{figure}
                \centering
                \includegraphics[width=\linewidth,keepaspectratio]{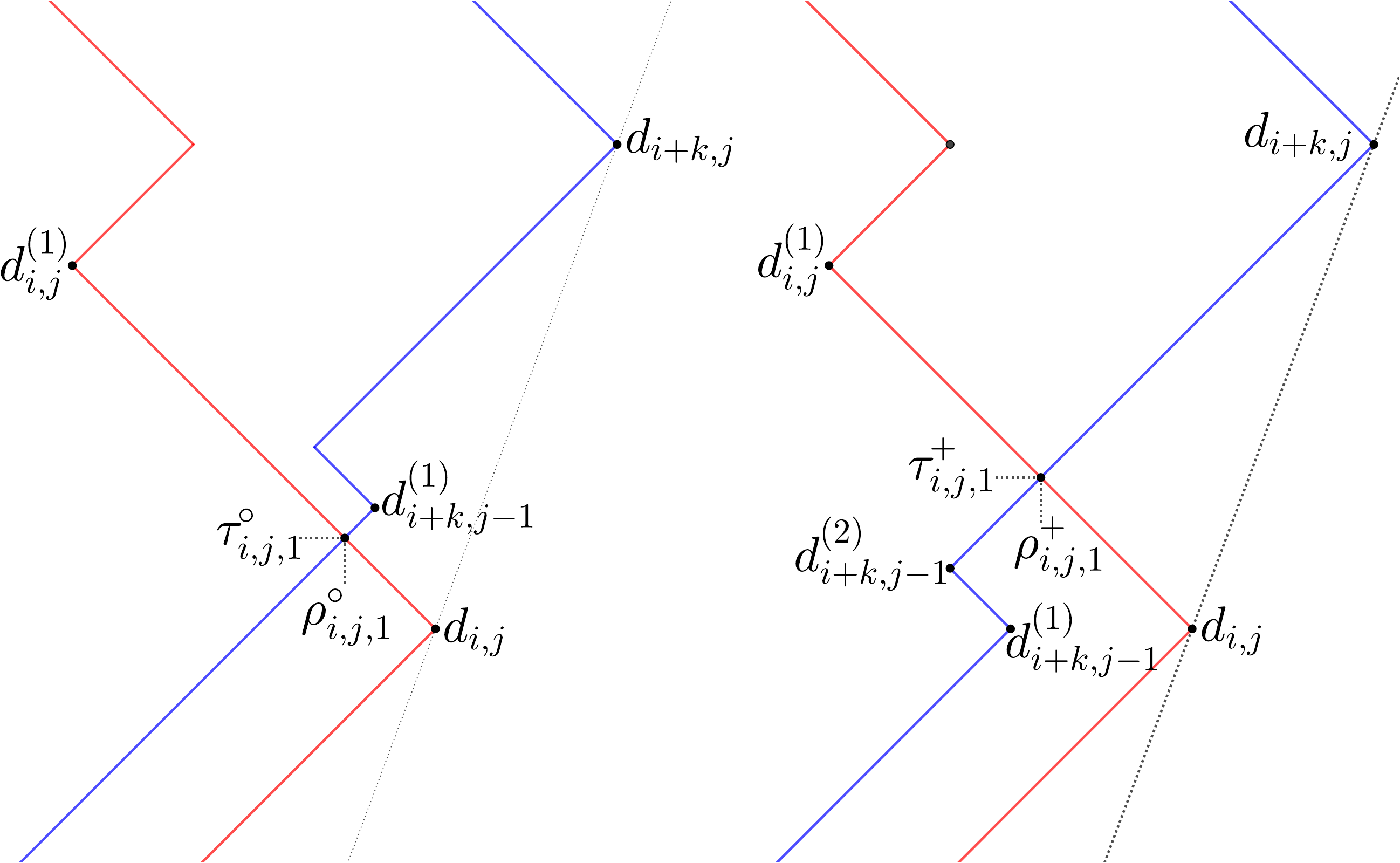}
                \caption{Illustration of the intersection points $(\rho^\circ_{i,j,k},\tau^\circ_{i,j,k})$ (left side) and $(\rho^+_{i,j,k},\tau^+_{i,j,k})$ (right side) of agents $i$ and $i+k$. In this example $j$ is even.\label{fig:intersections}}
        \end{figure}

        Let $\rho^\circ_{i,j,k}$ and $\tau^\circ_{i,j,k}$ correspond to the intersection points when agent $i$ is moving between $[d_{i,j},d^{(1)}_{i,j}]$ and agent $i+1$ is moving between $[d_{i+k,j-1},d^{(1)}_{i+k,j-1}]$ (this situation is depicted on the left side of Figure~\ref{fig:intersections}). Let $\rho^+_{i,j,k}$ and $\tau^+_{i,j,k}$ correspond to the intersection points when agent $i$ is moving between $[d_{i,j},d^{(1)}_{i,j}]$ and agent $i+1$ is moving between $(d^{(1)}_{i+k,j-1},d_{i+k,j}]$ (this situation is depicted on the right side of Figure~\ref{fig:intersections}).

        If agent $i$ is moving between $[d_{i,j},d^{(1)}_{i,j}]$ then it will be traveling in the direction $-(-1)^j$ along the line with equation
        \begin{equation}\label{eq:line_1}
                t = t_{i,j} - (-1)^j(x-d_{i,j}) = t_{i,j} + |d_{i,j}| - (-1)^j x.
        \end{equation}

        Consider first $\rho^\circ_{i,j,k}$ and $\tau^\circ_{i,j,k}$. In this case agent $i+k$ will be traveling in the direction $(-1)^j$ along the line with equation
        \begin{align*}
                t &= t_{i+k,j-1} + (-1)^j(x-d_{i+k,j-1})\\
                &= t_{i+k,j-1} + |d_{i+k,j-1}| + (-1)^j x\\
                &= r^{2k/n-1}(t_{i,j} + |d_{i,j}|) + (-1)^j x.
        \end{align*}
        Subtracting this equation from \eqref{eq:line_1} and solving for $x$ yields
        \begin{align*}                
                \rho^\circ_{i,j,k} &= \frac{(-1)^j(1-r^{2k/n-1})}{2}(t_{i,j} + |d_{i,j}|)\\
                &= \frac{(-1)^j(1-r^{2k/n-1})}{2}((2q-1)|d_{i,j}| + |d_{i,j}|)\\
                &= q(1-r^{2k/n-1})d_{i,j}.
        \end{align*}             
        Substituting this into \eqref{eq:line_1} then yields
        \begin{align*}
                \tau^\circ_{i,j,k} &= t_{i,j} + |d_{i,j}| - (-1)^j \rho_{i,j,k}\\
                &= 2q|d_{i,j}| - q(1-r^{2k/n-1})|d_{i,j}|\\
                &= q(1+r^{2k/n-1})|d_{i,j}|
        \end{align*}
        This case will occur provided that $\tau^\circ_{i,j,k} \leq t^{(1)}_{i,j}$ and $\tau^\circ_{i,j,k} \leq t^{(1)}_{i+k,j-1}$. We get from the first inequality
        \[\tau^\circ_{i,j,k} \leq t^{(1)}_{i,j}\]
        \begin{align*}
                \rightarrow \quad q(1+r^{2k/n-1})|d_{i,j}| &\leq t_{i,j} + (1+a)|d_{i,j}|\\
                &= (2q+a)|d_{i,j}|
        \end{align*}
        and finally
        \[a \geq q(r^{2k/n-1}-1).\]                
        The right hand side of this inequality is negative for all $k \leq f$ and the inequality is therefore satisfied for any $a\geq r^{1/n}$. From the inequality $\tau^\circ_{i,j,k} \leq t^{(1)}_{i+k,j-1}$ we get
        \begin{align*}
                q(1+r^{2k/n-1})|d_{i,j}| &\leq (2q+a)|d_{i+k,j-1}|\\
                &= (2q+a)r^{2k/n-1}|d_{i,j}|.
        \end{align*}
        This can be manipulated to yield
        \[a \geq q(r^{1-2k/n}-1)\]
        which is of course the condition of \eqref{eq:rho_gen}.

        Now consider $\rho^+_{i,j,k}$ and $\tau^+_{i,j,k}$. If we refer to Figure~\ref{fig:intersections} one can observe that the point $(\rho^+_{i,j,k},\tau^+_{i,j,k})$ is shifted to the left (i.e. in the direction $-(-1)^j$) and up by an amount $|d^{(2)}_{i+k,j-1}-d^{(1)}_{i+k,j-1}|$ as compared to the point $(\rho^0_{i,j,k},\tau^0_{i,j,k})$. Since $|d^{(2)}_{i+k,j-1}-d^{(1)}_{i+k,j-1}| = s|d_{i+k,j-1}| = sr^{2k/n-1}|d_{i,j}|$ we immediately find that
        \begin{align*}
                \rho^+_{i,j,k} &= \rho^\circ_{i,j,k} - (-1)^j|d^{(1)}_{i+k,j-1}-d^{(1)}_{i+k,j-1}|\\
                &= q(1-r^{2k/n-1})d_{i,j} - sr^{2k/n-1}d_{i,j}\\
                &= [q-(q+s)r^{2k/n-1}]d_{i,j}\\
                &= [q-(q-1)r^{2k/n}]d_{i,j}
        \end{align*}
        and, similarly,
        \begin{align*}
                \tau^+_{i,j,k} &= [q+(q-1)r^{2k/n}]|d_{i,j}|.
        \end{align*}

\end{proof}   

\begin{proof}(Lemma~\ref{lm:intersects_cond})
        We will only be considering values of $a$ satisfying $q(r^{1-4/n}-1) \leq a \leq q(r^{1-2/n}-1)$. Referring to Lemma~\ref{lm:intersects_gen} we can see that this implies that the trajectories of agents $i$ and $i+1$ will intersect at the point $(\rho_{i,j,1},\tau_{i,j,1}) = (\rho^+_{i,j,1},\tau^+_{i,j,1})$. On the other hand we will have $(\rho_{i,j,k},\tau_{i,j,k}) = (\rho^\circ_{i,j,k},\tau^\circ_{i,j,k})$ for all $k=2,3,\ldots,f$. We observe that $\beta^\circ_1 > \beta^\circ_2 > \ldots > \beta^\circ_f$ implying that, for odd/even $j$, agent $i$ is further to the left/right than all agents $i+2,i+3,\ldots,i+f$ during the time interval $[\tau_{i-1,j,2},\tau_{i,j,2}] = [\tau^\circ_{i-1,j,2},\tau^\circ_{i,j,2}]$. Of course, we want to show that agent $i$ is further to the left/right (for odd/even $j$) than all other agents during the time interval $[\tau_{i-1,j,1},\tau_{i,j,1}] = [\tau^+_{i-1,j,1},\tau^+_{i,j,1}]$. Since agent $i+2$ will be further to the left/right of agent $i$ after the time $\tau^\circ_{i,j,2}$ we need to either ensure that $\tau^\circ_{i,j,2} \geq \tau^+_{i,j,1}$, or, in the case that $\tau^\circ_{i,j,2} < \tau^+_{i,j,1}$, agent $i+2$ must turn around at exactly the time $\tau^\circ_{i,j,2}$. In other words, the intersection point must coincide with a turning point. This latter condition will occur precisely when $\tau_{i,j,2}$ changes from $\tau^\circ_{i,j,2}$ to $\tau^+_{i,j,2}$, i.e. when $a = q(r^{1-4/n}-1)$.

        For $a > q(r^{1-4/n}-1)$ we will need $\tau^\circ_{i,j,2} \geq \tau^+_{i,j,1}$. We have
        \[\tau^\circ_{i,j,2} \geq \tau^+_{i,j,1}\]
        \begin{align*}
                \rightarrow \quad& q(1+r^{4/n-1}) \geq q+(q-1)r^{2/n}\\
                \rightarrow \quad& q \leq \frac{r}{r-r^{2/n}}.
        \end{align*}
        This completes the proof.             
\end{proof}

\begin{proof}(Lemma~\ref{lm:s_choice})
        Agents $i+k-1$ and $i+k$ will both be located on the cone $\cone_1^+$ on the opposite side of the origin from $d_{i,j}$ at the time $\tau^+_{i+k-1,j-1,1}$. We thus need to solve the equation $\tau^+_{i+k-1,j-1,1} = T^\circ_{i,j,f+1}(1)$ for $q$. We have
        \[\tau^+_{i+k-1,j-1,1} = T^\circ_{i,j,f+1}(1)\]
        \begin{align*}
                \rightarrow \quad& [q(1+r^{2/n})-r^{2/n}] r^{2(k-1)/n-1} = 1 + 2qr^{1/n}\\
                \rightarrow \quad& q[(1+r^{2/n})r^{2(k-1)/n-1}-2r^{1/n}] = r^{2k/n-1}+1
        \end{align*}
        and finally 
        \begin{align*}
                q &= \frac{r^{2k/n-1}+1}{(1+r^{2/n})r^{2(k-1)/n-1}-2r^{1/n}}\\
                &= \frac{r^{2k/n}+r}{(1+r^{-2/n})r^{2k/n}-2r^{1+1/n}} = \hat{q}(r,k).
        \end{align*}
\end{proof}

\begin{proof}(Lemma~\ref{lm:abound})
        We need to consider the two cases corresponding to the definition of $\hat{a}(r,q)$. First consider the case that $q \leq \frac{r}{r-r^{2/n}}$. In this case we need to show that $q(r^{1-2/n}-1) > r^{1/n}$. Since $q \geq \frac{r}{r-1}$ we will demonstrate the stronger condition that 
        \[\frac{r(r^{1-2/n}-1)}{r-1} > r^{1/n} \ \  \rightarrow \ \  r^2 > (r-1)r^{3/n} + r^{1+2/n}\]
        which is clearly satisfied for $r>1$ and $n \geq 5$. Now consider the case that $q > \frac{r}{r-r^{2/n}}$. Then we need to show that $q(r^{1-4/n}-1) > r^{1/n}$. Since $q > \frac{r}{r-r^{2/n}}$ we get the stronger condition
        \[\frac{r(r^{1-4/n}-1)}{r-r^{2/n}} > 1  \quad \rightarrow \quad r^2 > r+r^{1-2/n}(r^{6/n}-1)\]
        or
        \[r > 1+r^{4/n}-\frac{1}{r^{2/n}}.\]
        For $n \geq 5$ both sides of this inequality grow with $r>1$, however, the left hand side grows faster. Moreover, when $r=1$ the two sides are equal. We can thus conclude that $\hat{a}(r,q) > r^{1/n}$.
\end{proof}

\begin{proof}(Theorem~\ref{thm:f_crash})
        Suppose without loss of generality that the target is at location $x = z d_{i,j} \in I_{i,j}$. In the worst case the first $f$ agents that reach the target are faulty and $x$ is just beyond a turning point. For the generalized schedules we will need to consider two two separate turning points. 
        
        Consider pairs $(r,u) \in \parset$ and suppose that $q = \hat{q}(r,u)$ and $a = \hat{a}(r,q)$. By definition of $\parset$ we have $r>1$ and $\frac{r}{r-1} \leq q \leq \min\{\frac{r}{r-r^{1-2/n}},\ \frac{r}{r-r^{4/n}}\}$. Then, by Lemma~\ref{lm:abound}, we also have $a > r^{1/n}$. By Lemma~\ref{lm:S_cond} we know that agent $i+f+1$ will thus reach location $x$ at the time
        \begin{align*}
                T_{i,j,f+1}(z) &= T^\circ_{i,j,f+1}(z) = (z + qr^{2(f+1)/n-1})|d_{i,j}|\\
                &= (z + qr^{1/n})|d_{i,j}|.
        \end{align*}
        Agents $i+1,\ldots,i+f-1$ will all reach $x$ before agent $i+f+1$, and agents $i+f+2,\ldots,i+n$ will reach $x$ after agent $i+f+1$. Agent $i+f$ will reach $x$ before agent $i+f+1$ provided that $q \leq \frac{r}{r-r^{2/n}}$ or $q > \frac{r}{r-r^{2/n}}$ and $a \geq zr^{1/n}$. We first consider the case that agent $i+f$ reaches $x$ before agent $i+f+1$.

        With $q \leq \frac{r}{r-r^{2/n}}$ or $q > \frac{r}{r-r^{2/n}}$ and $a \geq zr^{1/n}$ we know that agent $i+f+1$ will be the $(f+1)^{st}$ agent to reach the target. In the worst case the target is at position $x = (1+\eps)d_{i,j}$, i.e. $z = 1+\eps$. The search time is therefore
        \[S_f^x = T^\circ_{i,j,f+1}(1+\eps) = (1 + \eps + 2qr^{1/n})|d_{i,j}|.\]
        Since $q = \hat{q}(r,u)$, we know by Lemma~\ref{lm:s_choice} that agent $i+u$ will be (one of) the agents most distant from $x$ at time $S_f^x$. Moreover, this agent will be located on the cone $\cone_1^+$ at time $T^\circ_{i,j,f+1}(1)$. We can thus conclude that
        \begin{align*}
                \Delta_x^f = |x| + \frac{T^\circ_{i,j,f+1}(1)}{\beta^+_1} + \eps |d_{i,j}|.
        \end{align*}
        and the evacuation time is therefore
        \begin{align*}
                E_f^x &= S_f^x+\Delta_f^x\\
                &= T^\circ_{i,j,f+1}(1+\eps) + |x| + \frac{T^\circ_{i,j,f+1}(1)}{\beta^+_1} + \eps |d_{i,j}|\\
                &= (1+3\eps)|d_{i,j}| + \left(1+\frac{1}{\beta^+_1}\right)T^\circ_{i,j,f+1}(1)\\
                &= (1+3\eps)|d_{i,j}| + \left(1+\frac{1}{\beta^+_1}\right)(1 + 2qr^{1/n})|d_{i,j}|.
        \end{align*}
        Dividing by $|x| = (1+\eps)|d_{i,j}|$ and taking the limit $\eps \rightarrow 0$ yields the competitive ratio to be
        \begin{align*}
                R^A_f = 1 + \left(1+\frac{1}{\beta^+_1}\right)(1 + 2qr^{1/n})
        \end{align*}   
        where we have included a superscript $A$ to indicate that this is the competitive ratio of Scenario A. We have from equation \eqref{eq:betas} that
        \[\beta_1^+ = \frac{q+(q-1)r^{2/n}}{|q-(q-1)r^{2/n}|}.\]
        We claim that $q-(q-1)r^{2/n} \geq 0$ as a result of the requirement that $q \leq \frac{r}{r-r^{1-2/n}}$. Indeed, we have
        \[q-(q-1)r^{2k/n} = q(1-r^{2/n})+r^{2/n} \geq 0\]
        or
        \[q \leq \frac{r^{2/n}}{r^{2/n}-1} = \frac{r}{r-r^{1-2/n}}.\]
        Now observe that
        \begin{align*}
                1+\frac{1}{\beta^+_1} &= 1 + \frac{q-(q-1)r^{2/n}}{q+(q-1)r^{2/n}} = \frac{2q}{q+(q-1)r^{2/n}}
        \end{align*}
        and we can conclude that the competitive ratio of Scenario A is
        \begin{align*}
                R^A_f = 1 + \frac{2q(1 + 2qr^{1/n})}{q + (q-1)r^{2/n}}
        \end{align*}
        as required.

        Now consider the case that $q > \frac{r}{r-r^{2/n}}$ and $a < zr^{1/n}$. Recall that the condition $a < zr^{1/n}$ derived from the requirement that agent $i+f$ reaches its turning point $d^{(1)}_{i+f,j-1}$ before reaching $x$. The time at which agent $i+f$ reaches $x$ is $T^+_{i,j,f}(z)$ and at this time agents $i+1,\ldots,i+f-1$ and $i+f+1$ have already visited $x$. The worst case for this scenario places the target just beyond the turning point $d^{(1)}_{i+f,j-1}$ and so we take $x = (1+\eps)d^{(1)}_{i+f,j-1} = (1+\eps)\frac{a}{r^{1/n}}|d_{i,j}|$. Thus, with $z = (1+\eps)\frac{a}{r^{1/n}}$, the search time for this case is at most $T^+_{i,j,f}(z)$, i.e.
        \begin{align*}
                S_f^x &\leq T^+_{i,j,f}(z) = (z+2(q-1)r^{2f/n})|d_{i,j}|\\
                &= (z+2(q-1)r^{1-1/n})|d_{i,j}|.
        \end{align*}
        At the time $T^\circ_{i,j,f+1}(1)$ the agent most distant from $x$ was at distance $|x|+T^\circ_{i,j,f+1}(1)/\beta_1^+$. Thus, at time $T^+_{i,j,f}(z)$ this agent will be further away from $x$ by at most the distance $T^+_{i,j,f}(z)-T^\circ_{i,j,f+1}(1)$. We therefore have that
        \[\Delta_f^x \leq |x|+\frac{T^\circ_{i,j,f+1}(1)}{\beta_1^+} + T^+_{i,j,f}(z)-T^\circ_{i,j,f+1}(1).\]
        For the evacuation time we find
        \begin{align*}
                &E_f^x = S_f^x+\Delta_f^x\\
                &\leq T^+_{i,j,f}(z) + |x|+\frac{T^\circ_{i,j,f+1}(1)}{\beta_1^+} + T^+_{i,j,f}(z)-T^\circ_{i,j,f+1}(1)\\
                &= |x|-\left(1-\frac{1}{\beta_1^+}\right)T^\circ_{i,j,f+1}(1) + 2T^+_{i,j,f}(z)\\
                &= z|d_{i,j}|-\left(1-\frac{1}{\beta_1^+}\right)(1+2qr^{1/n})|d_{i,j}|\\
                &\qquad\qquad+ 2(z+2(q-1)r^{1-1/n})|d_{i,j}|\\
                &= \Biggl[3z + 4(q-1)r^{1-1/n}\\
                &\qquad\qquad- \left(1-\frac{1}{\beta_1^+}\right)(1 + 2qr^{1/n})\Biggr]|d_{i,j}|.
        \end{align*}     
        We have
        \begin{align*}
                1-\frac{1}{\beta^+_1} &= 1 - \frac{q-(q-1)r^{2/n}}{q+(q-1)r^{2/n}} = \frac{2(q-1)r^{2/n}}{q + (q-1)r^{2/n}}.
        \end{align*}
        and thus
        \begin{align*}
                E_f^x &\leq \Biggl[3z + 4(q-1)r^{1-1/n}\\
                &\qquad\qquad - \frac{2(q-1)r^{2/n}(1 + 2qr^{1/n})}{q + (q-1)r^{2/n}}\Biggr]|d_{i,j}|\\
                &= \left[3z + \frac{2(q-1)}{r^{1/n}}\left(2r - \frac{r^{3/n}(1 + 2qr^{1/n})}{q + (q-1)r^{2/n}}\right)\right]|d_{i,j}|.
        \end{align*}    
        Dividing through by $z|d_{i,j}|$ with $z = (1+\eps)\frac{a}{r^{1/n}}$ and taking the limit $\eps \rightarrow 0$ yields the competitive ratio
        \begin{align*}
                R^B_f &\leq 3 + \frac{2(q-1)}{a}\left(2r - \frac{r^{3/n}(1 + 2qr^{1/n})}{q + (q-1)r^{2/n}}\right).
        \end{align*} 
        Since we are assuming that $q > \frac{r}{r-r^{2/n}}$ we have $a = \hat{a}(r,q) = q(r^{1-4/n}-1)$ and we can finally conclude that
        \begin{align*}
                R^B_f &\leq 3 + \frac{2(q-1)}{q(r^{1-4/n}-1)}\left(2r - \frac{r^{3/n}(1 + 2qr^{1/n})}{q + (q-1)r^{2/n}}\right)
        \end{align*}      
        as required.
\end{proof}

\begin{figure}[tbhp!]
        \centering
        \includegraphics[width=\linewidth,keepaspectratio]{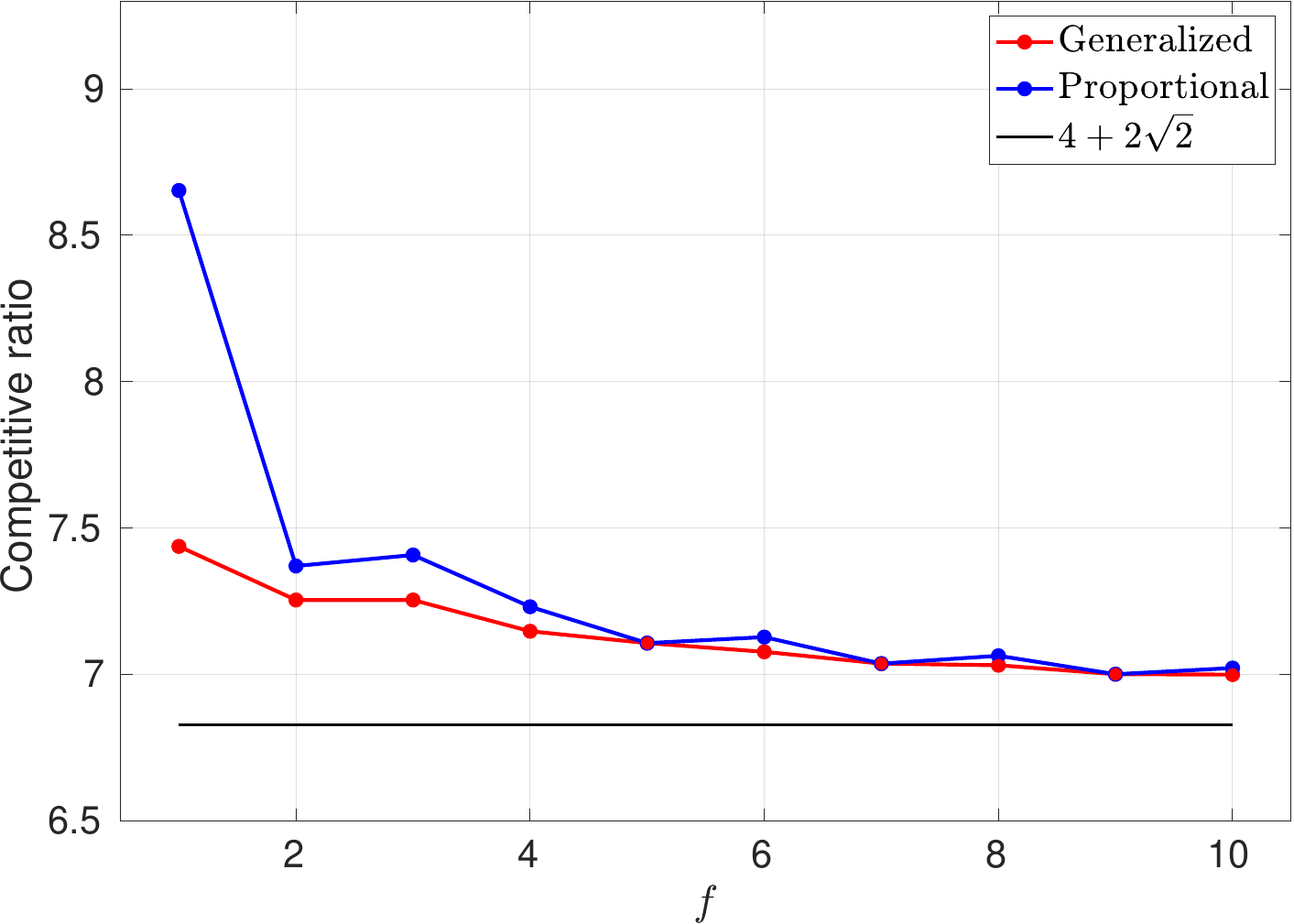}
        \caption{The optimized competitive ratio of the generalized schedules as a function of the number of faults.\label{fig:gen}}
\end{figure}            
\begin{figure}[tbhp]
        \centering
        \includegraphics[scale=0.08]{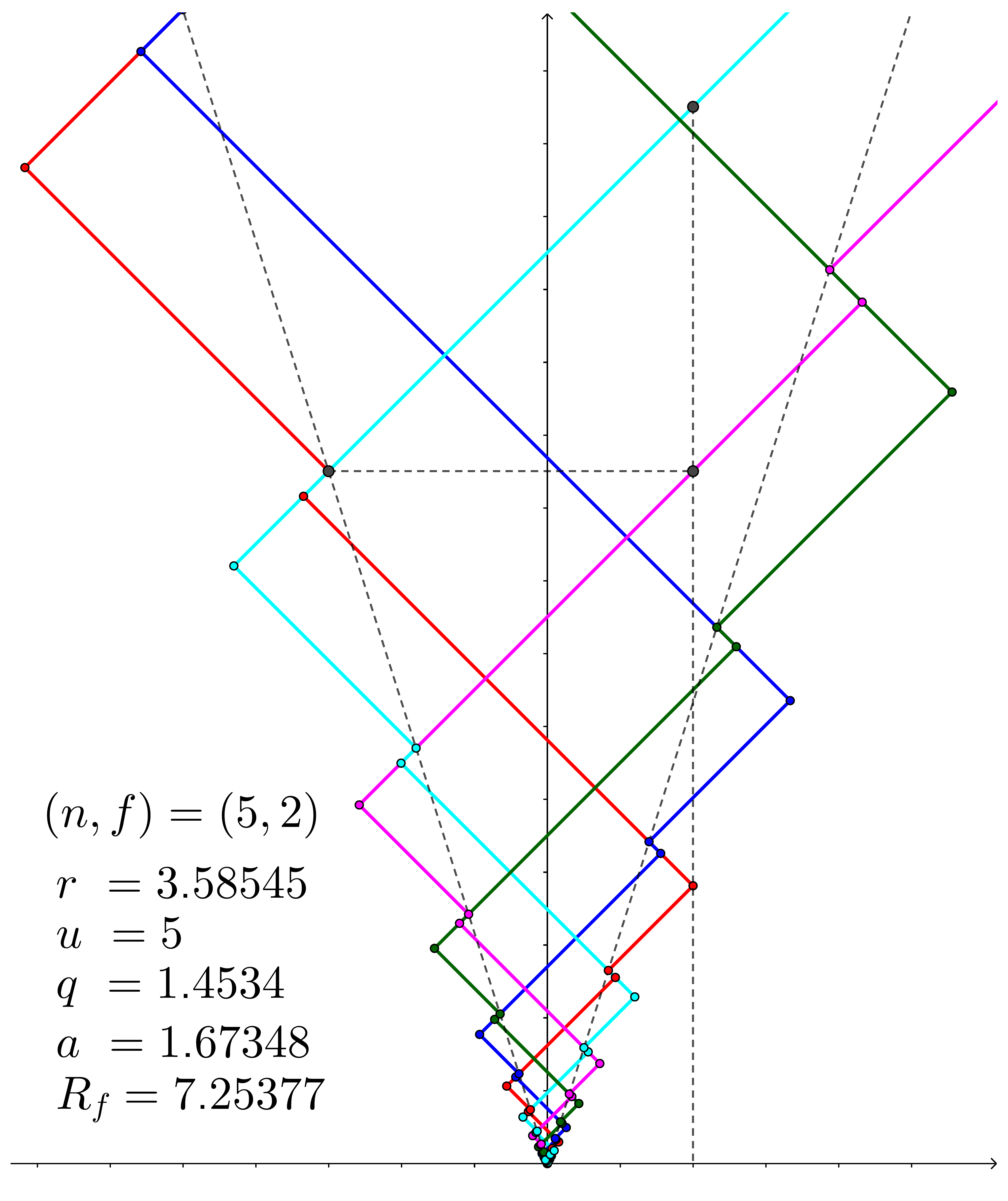}
        \includegraphics[scale=0.08]{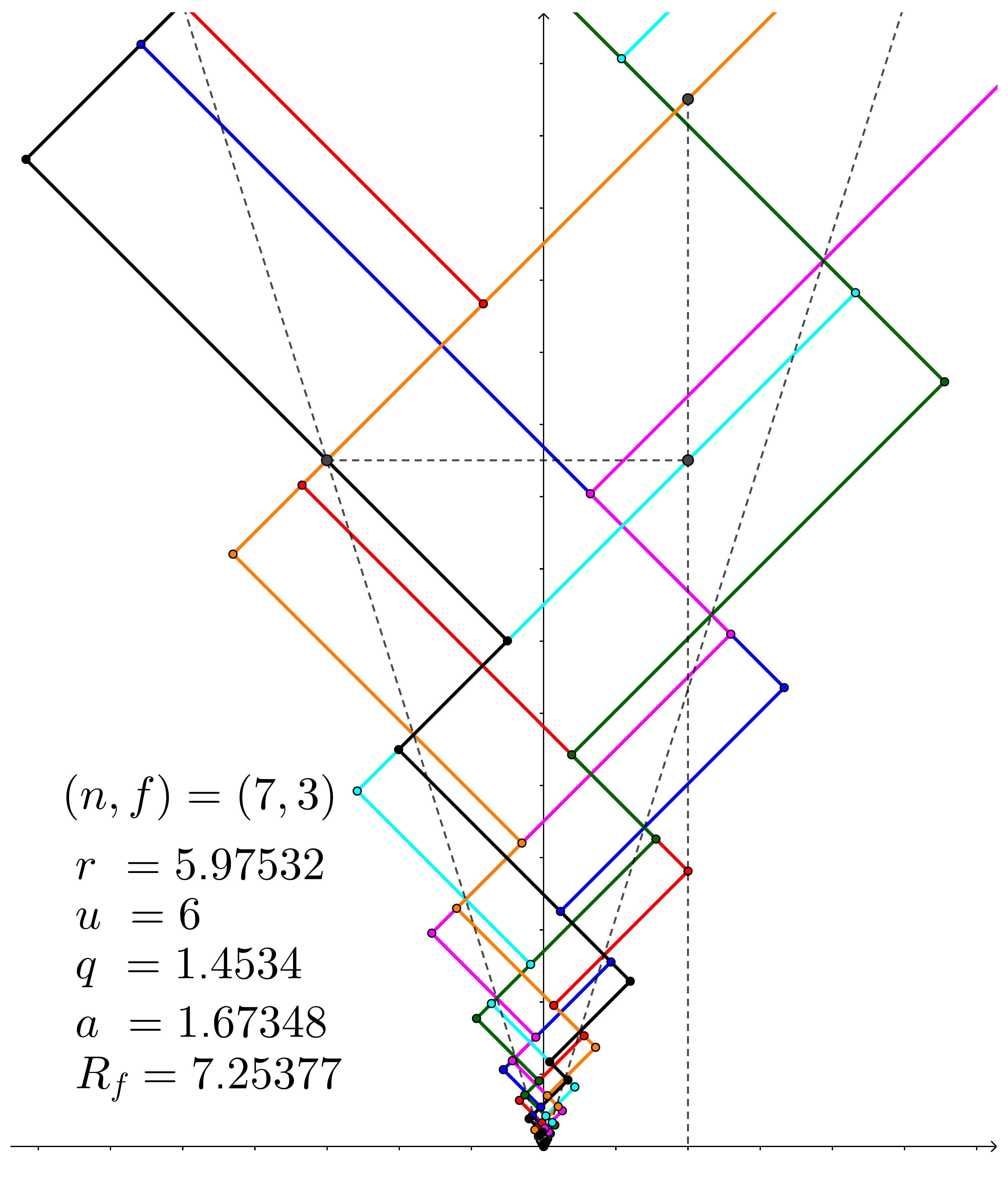}
        \includegraphics[scale=0.08]{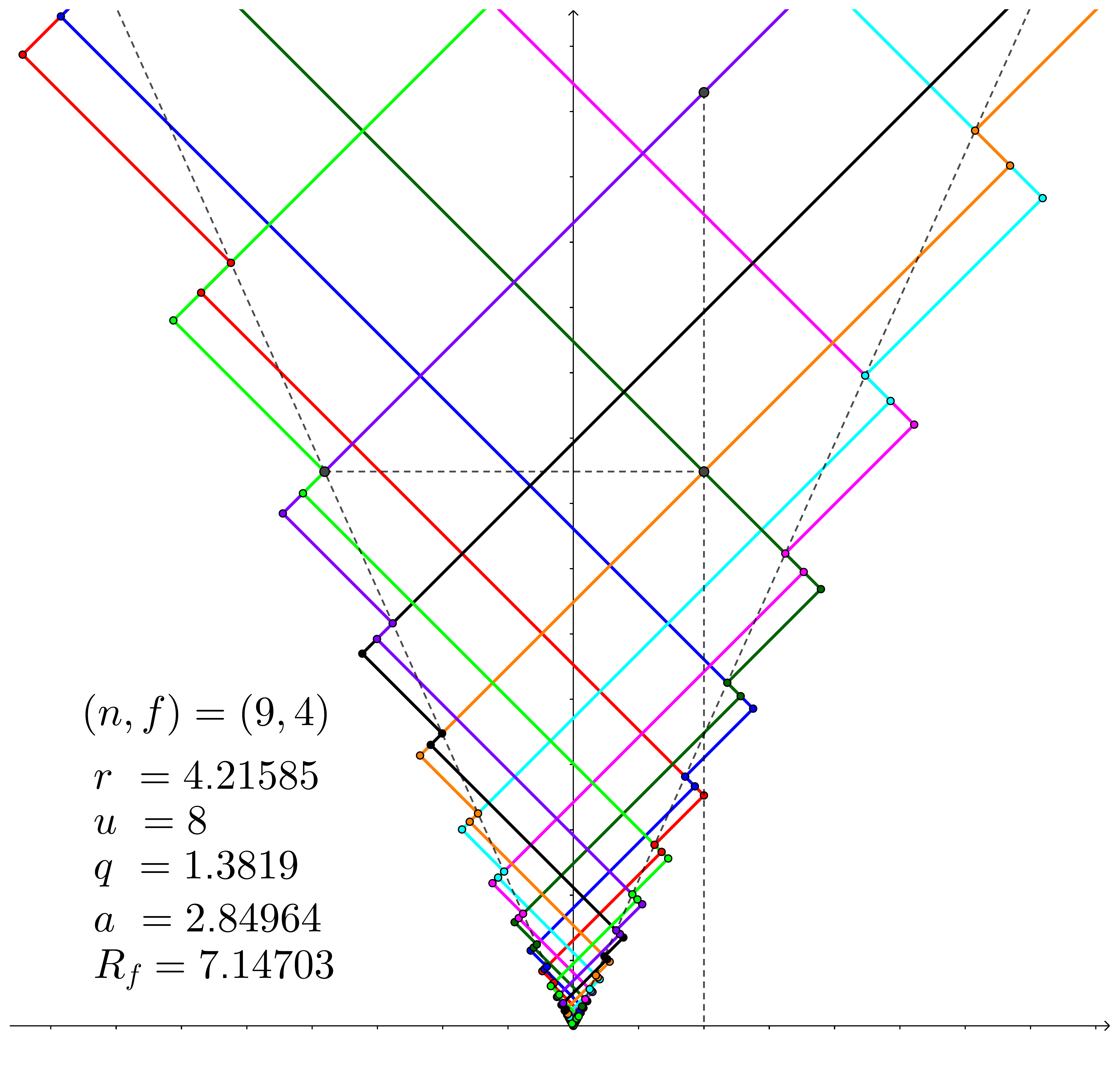}        
        \caption{The optimized trajectories of the agents when $f=2,3,4$.}\label{fig:traj}
\end{figure}
\begin{table}[tbhp]
        \centering
        \caption{Optimal parameter choices for the generalized schedules and the corresponding competitive ratios. The competitive ratio of the optimized proportional schedules is included in the last row for comparison.\label{tbl:gen}}
        \begin{tabular}{c|c|c|c|c}
                $(n,f)$& (5,2) & (7,3) & (9,4) & (11,5)\\ \hline\hline
                $r$& 3.58545 & 5.97532 & 4.21585 & 3.22306\\\hline
                $u$& 5 & 6 & 8 & 10\\\hline
                $q$& 1.45340 & 1.45340 & 1.38190 & 1.44983\\\hline
                $s$& 0.17225 & 1.25582 & 0.22813 & 0\\\hline
                $a$& 1.67348 & 1.67348 & 2.84964 & 2.32740\\\hline
                $R_f$& 7.25377 & 7.25377 & 7.14703 & 7.10648\\\hline
                $R_f$, {\footnotesize P.S.}& 7.37001 & 7.40756 & 7.23077 & 7.10648
        \end{tabular}                                
\end{table}

\subsection{Proofs missing from Subsection~\ref{sec:one_fault}}

\begin{proof}(Lemma~\ref{lm:opt_s})
        Agent $i$ will reach $d^{(2)}_{i,j}$ at time $t^{(2)}_{i,j}$ and agent $i+2$ will reach location $d_{i,j}$ at the time $t^{(2)}_{i,j}(1)$. We will thus need to solve the equation $t^{(2)}_{i,j} = T_{i,j,2}(1)$ for the parameter $q$. Since $a \geq \frac{1}{r^{1/3}}$ Lemma~\ref{lm:tijk_gen} states that agent $i+2$ will reach $d_{i,j}$ at the time $T_{i,j,2}(1) = T^{\circ}_{i,j,2}(1)$. We therefore need to solve
        \begin{align*}
                t^{(2)}_{i,j} &= T_{i,j,2}(1)\\
                \rightarrow \quad& q(r+1)-r+a = 1 + 2q r^{2(2)/n-1}\\
                \rightarrow \quad& q(r+1-2r^{1/3}) = r+1-a
        \end{align*}
        and finally
        \[q = \frac{r+1-a}{r+1-2r^{1/3}}\]
        as required.       
\end{proof}

\begin{proof}(Lemma~\ref{lm:RA})
        With $q$ given by \eqref{eq:opt_s} we know that agent $i+2$ will reach $d_{i,j}$ at the time $T_{i,j,2}(1) = t^{(2)}_{i,j}$ and at this time agent $i$ will be located at $d^{(2)}_{i,j}$. In the worst case the target is just beyond $d_{i,j}$ and the competitive ratio is therefore
        \begin{equation*}
                R^A_1 = 1 + \frac{t^{(2)}_{i,j} - |d^{(2)}_{i,j}|}{|d_{i,j}|} = 1 + \frac{t^{(1)}_{i,j} + |d^{(1)}_{i,j}|}{|d_{i,j}|}.
        \end{equation*}
        With $t^{(1)}_{i,j} = (2q+a)|d_{i,j}|$ and $|d^{(1)}_{i,j}| = a|d_{i,j}|$ we get
        \begin{align*}
                R^A_1 = 1+2(q+a).
        \end{align*}
\end{proof}

\begin{proof}(Lemma~\ref{lm:RB})
        In the worst case the target is at location $x$ just beyond $d^{(1)}_{i+1,j-1}$. Referring to Figure~\ref{fig:31_crash} one can observe that the announcement in this case will not change the trajectory of agent $i+2$ and the evacuation will complete at the time agent $i+2$ would normally reach $x$. The last turning point agent $i+2$ visits before visiting $x$ is the turning point $d_{i+2,j-1}$ and thus the competitive ratio is
        \begin{align*}
                R^B_1 = 1 + \frac{t_{i+2,j-1}+|d_{i+2,j-1}|}{d^{(1)}_{i+1,j-1}}.
        \end{align*}
        With $t_{i+2,j-1} = (2q-1)r^{1/3}|d_{i,j}|$, $|d_{i+2,j-1}| = r^{1/3} |d_{i,j}|$,  and $|d^{(1)}_{i+1,j-1}| = ar^{-1/3} |d_{i,j}|$ we get
        \begin{align*}
                R^B_1 = 1 + \frac{(2q-1)r^{1/3} + r^{1/3}}{ar^{-1/3}} = 1 + \frac{2qr^{2/3}}{a}.
        \end{align*}     
\end{proof}

\begin{proof}(Lemma~\ref{lm:opt_a1})
        We need to solve the equation $R^A_1 = R^B_1$ for $a$. We have
        \[1+2(q+a) = 1 + \frac{2qr^{2/3}}{a}\]
        or
        \[q(a-r^{2/3})+a^2 = 0.\]    
        Substituting in the expression \eqref{eq:opt_s} gives
        \begin{align*}
                &\left(\frac{r+1-a}{r+1-2r^{1/3}}\right)(a-r^{2/3}) + a^2 = 0\\
                \rightarrow \quad& (r+1-a)(a-r^{2/3}) + a^2(r+1-2r^{1/3}) = 0
        \end{align*}
        and after a little more manipulation we arrive at
        \begin{align*}
                a^2(r-2r^{1/3}) + a(r+1+r^{2/3}) - r^{2/3}(r+1) = 0.
        \end{align*}
        Let $P(a)$ represent the polynomial in $a$ on the left of this equation. There are two possible solutions to $P(a) = 0$:
        \begin{multline*}
                a = \frac{1}{2(r-2r^{1/3})}\Biggl[-(r+1+r^{2/3})\\ \pm \sqrt{(r+1+r^{2/3})^2+4r^{2/3}(r+1)(r-2r^{1/3})}\Biggr].
        \end{multline*}
        Note that with our assumption that $r > 2\sqrt{2}$ the denominator of $a$ will be positive. Clearly, then, only the positive root can result in $a>0$. To show that this value of $a$ indeed lies within the range $[\frac{1}{r^{1/3}},r^{1/3}]$ we make use of the polynomial $P(a)$. When $r > 2\sqrt{2}$ the coefficient of $a^2$ is positive and so the parabola curves upward. Moreover, when $a = r^{1/3}$ we have
        \begin{align*}
                P(r^{1/3}) &= r^{2/3}(r-2r^{1/3}) + r^{1/3}(r+r^{2/3}+1)\\
                &\qquad\qquad - r^{2/3}(r+1)\\
                &= r^{5/3}-2r + r^{4/3}+r+r^{1/3}-r^{5/3}-r^{2/3}\\
                &= -r + r^{4/3}+r^{1/3}-r^{2/3}\\
                &= r(r^{1/3}-1) - r^{1/3}(r^{1/3}-1)\\
                &= (r-r^{1/3})(r^{1/3}-1)
        \end{align*}
        which is clearly positive. On the other hand, when $a = 1/r^{1/3}$ we have
        \begin{align*}
                &P(r^{-1/3}) = r^{-2/3}(r-2r^{1/3}) + r^{-1/3}(r+r^{2/3}+1)\\
                &\qquad\qquad - r^{2/3}(r+1)\\
                &= r^{1/3}-2r^{-1/3} + r^{2/3} + r^{1/3} + r^{-1/3} - r^{5/3} - r^{2/3}\\
                &= 2r^{1/3}-r^{-1/3} - r^{5/3}
        \end{align*}
        and it is simple to confirm that this is negative for $r > 2\sqrt{2}$.
\end{proof}

\section{Proof of Theorem~\ref{thm:31_search}}
\begin{proof}(Theorem~\ref{thm:31_search})
        We base our search algorithm off of the generalized schedule of Theorem~\ref{thm:31_crash}. In particular, we will only consider the generalized trajectory corresponding to the optimized choices of the parameters $(r,q,a)$, i.e. $r = 6.833921$, $a = 1.699557$, and $q = 1.518949$. We will represent by $\alpha$ the competitive ratio for evacuation corresponding to these parameter choices, i.e. $\alpha = 7.43701137$.
                       
        At the beginning of our search algorithm we let the agents follow their generalized schedule trajectories until the moment an announcement is made. Suppose that this announcement claims that the target is at location $x \in I_{i,j}$. Then the behavior of the agents will depend on whether or not $x \in (d_{i,j},d^{(1)}_{i+2,j-1}]$ or $x \in (d^{(1)}_{i+2,j-1},d_{i+1,j}]$. We consider first the case that $x \in (d_{i,j},d^{(1)}_{i+2,j-1}]$.

        \paragraph*{Case 1, $\mathbf{x \in (d_{i,j},d^{(1)}_{i+2,j-1}]}$:} Let us assume without loss of generality that $x>0$. An illustration of this case is provided in Figure~\ref{fig:search1} for reference. When the first announcement claims that the target is at location $x \in (d_{i,j},d^{(1)}_{i+2,j-1}]$ all agents that have not visited $x$ at the time of the announcement will immediately move to $x$ to check the claim. One can observe that the agents will arrive to $x$ in the order $i+1$, $i+2$, $i$. If agent $i$ is the one who made the announcement then agents $i+1$ and $i+2$, having already visited $x$, will immediately know that agent $i$ is faulty. They will thus immediately proceed to move in opposite directions at full speed until the target is found. This will clearly not be a worst case and so we will suppose that it is one of agents $i+1$ or $i+2$ that made the announcement. 
        
        \begin{figure}
                \centering
                \includegraphics[scale=0.65]{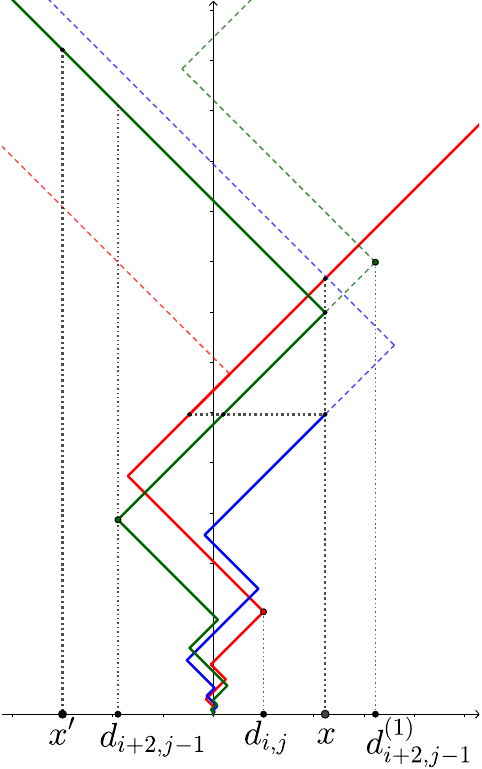}\hspace{2cm}
                \includegraphics[scale=0.65]{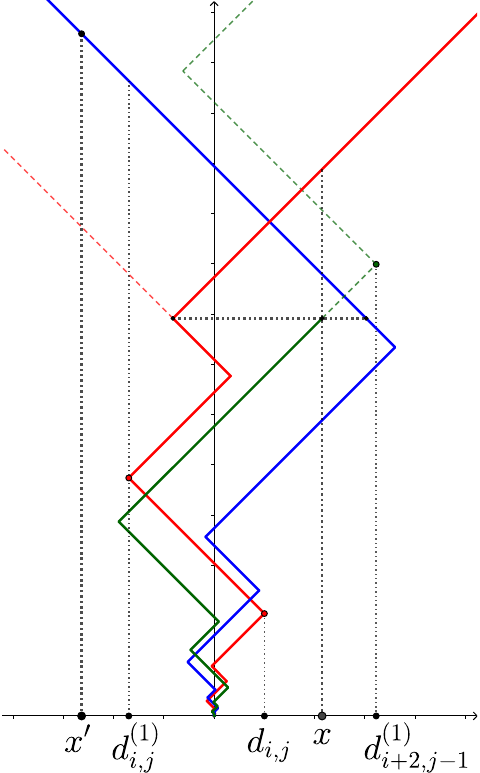}
                \caption{Setup for the case that the first announcement claims that the target is at location $x \in (d_{i,j},d^{(1)}_{i+2,j-1}]$ and the actual target is on the left side at location $x'$. Agent $i$ is in red, agent $i+1$ in blue, and agent $i+2$ in green. On the left side it is agent $i+1$ that falsely announces the target and on the right side it is agent $i+2$ that falsely announces the target. In either case the dashed lines show where the agents would be if the announcement were not made and the solid lines show the actual trajectories of the agents.\label{fig:search1}}
        \end{figure}

        In this case the first reliable agent (either $i+1$ or $i+2$) that reaches $x$ will immediately move to the left after reaching $x$ and continue in this direction until the real target is found. Agent $i$, who will be moving to the right when it reaches $x$, will continue moving to the right after $x$ until either it, or the other reliable agent, finds the real target. It is clear from the results on evacuation that agent $i$ will be able to find a target at any position $x'$ to the right of $x$ in time less than $\alpha x'$. Thus, we can assume that the target is on the left side of the origin. 
        
        Assume first that it is agent $i+1$ that made the erroneous announcement. This situation is depicted on the left side of Figure~\ref{fig:search1}. Since agent $i+2$ will reach $x$ after its turning point $d_{i+2,j-1}$ and since $x \leq d^{(1)}_{i+2,j-1}$ the exact time agent $i+2$ will reach $x$ is $t_{i+2,j-1}+|d_{i+2,j-1}|+x$. Moreover, the real target must be somewhere to the left of $d_{i+2,j-1}$, since, otherwise, agent $i+2$ would have announced it already. Thus, the competitive ratio will be 
        \begin{align*}
                R = 1+\sup \frac{t_{i+2,j-1}+|d_{i+2,j-1}|+2x}{|x'|}
        \end{align*}
        where the supremum is over $x \in (d_{i,j},d^{(1)}_{i+2,j-1}]$ and $x' < d_{i+2,j-1}$. Since the ratio increases with $x$ and decreases with $x'$ we have
        \begin{align*}
                R &= 1+\frac{t_{i+2,j-1}+|d_{i+2,j-1}|+2|d^{(1)}_{i+2,j-1}|}{|d_{i+2,j-1}|}\\
                &= 1+\frac{t^{(1)}_{i+2,j-1}+|d^{(1)}_{i+2,j-1}|}{|d_{i+2,j-1}|}
        \end{align*}
        This ratio is independent of $i$ and $j$ and so we also have
        \begin{align*}
                R = 1+\frac{t^{(1)}_{i,j}+|d^{(1)}_{i,j}|}{|d_{i,j}|}.
        \end{align*}
        This, however, is exactly the expression for $R^A_1$ in Lemma~\ref{lm:RA} which equals $\alpha$. 

        Now consider the situation depicted on the right side of Figure~\ref{fig:search1} where agent $i+2$ makes the announcement. Then agent $i+1$ will have already visited $x$ when the announcement is made and will immediately proceed to move left until it finds the target at $x'$. Referring to Figure~\ref{fig:search1} one can observe that at the time of the announcement the leftmost point visited by a reliable agent will be the point $d^{(1)}_{i,j}$and we must therefore have $x' < d^{(1)}_{i,j}$. One can also observe that agent $i+1$ will reach $x'$ at the time $t_{i+1,j} + |d_{i+1,j}| + x'$ and so the competitive ratio in this situation will be
        \begin{align*}
                R = 1+\frac{t_{i+1,j} + |d_{i+1,j}|}{|d^{(1)}_{i,j}|} = 1+\frac{t_{i+2,j-1} + |d_{i+2,j-1}|}{|d^{(1)}_{i+1,j-1}|}
        \end{align*}
        and this is identical to the expression for $R^B_1$ in Lemma~\ref{lm:RB}. We can thus conclude that the competitive ratio of the search is $\alpha$ when $x \in (d_{i,j},d^{(1)}_{i+2,j-1}]$.

        \paragraph*{Case 2, $\mathbf{x \in (d^{(1)}_{i+2,j-1},d_{i+1,j}]}$:} This situation is depicted in Figure~\ref{fig:search2}. As before, we assume that $x>0$ and, since agent $i$ will be the last agent to reach $x$, we may assume that it is not agent $i$ that announces the target. Then, we will first send agent $i$ to its turning point $d_{i,j+1}$ before having it move to check the claim at $x$. The remaining reliable agent will move to the left after it visits $x$. If agent $i$ finds the target before reaching $x$ it will announce this and the search will end once the remaining reliable agent visits $x$. This will not be a worst case and so we assume that agent $i$ does not find the target before reaching $x$.

        \begin{figure}
                \centering
                \includegraphics[scale=0.3]{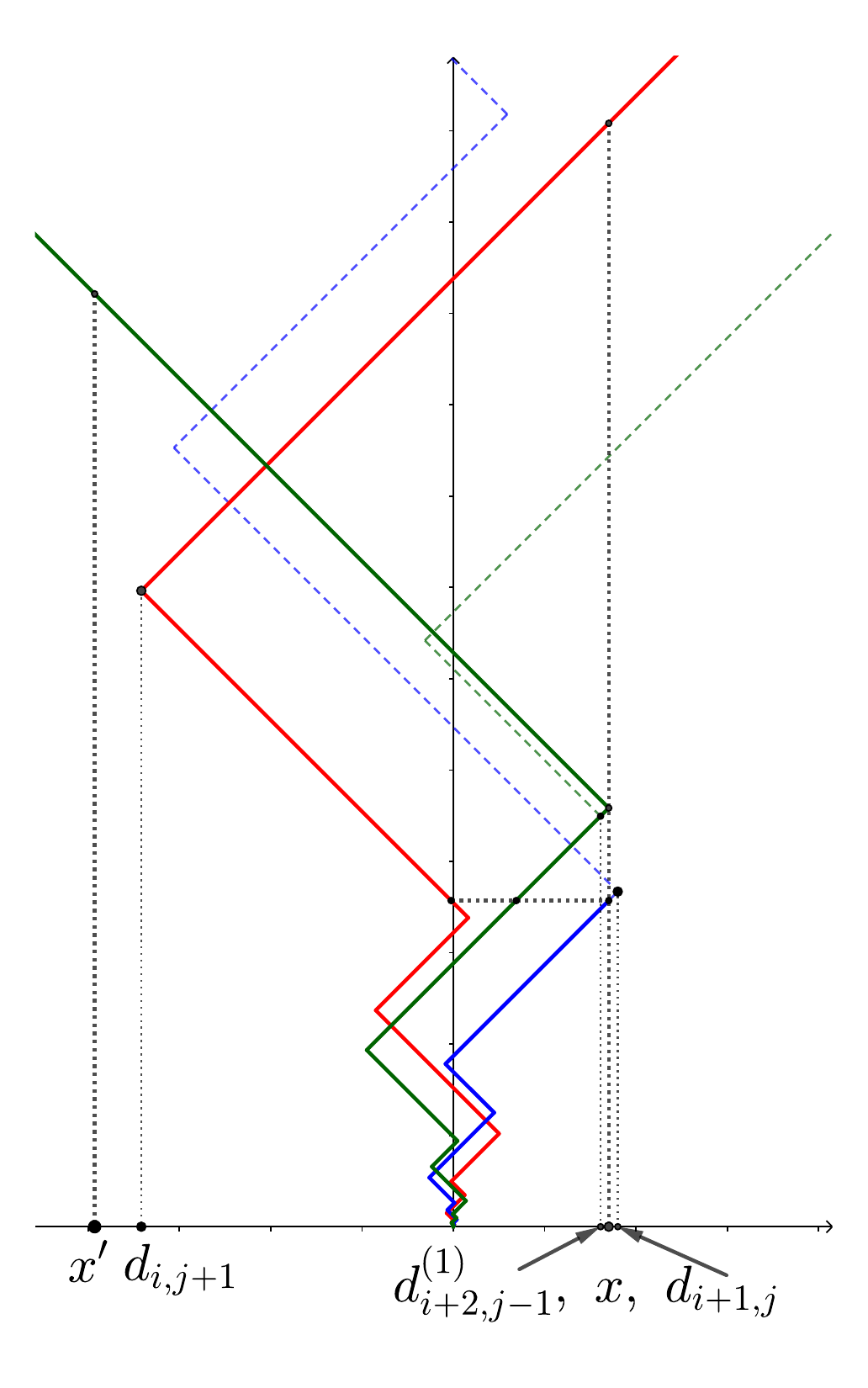}\hspace{2cm}
                \includegraphics[scale=0.3]{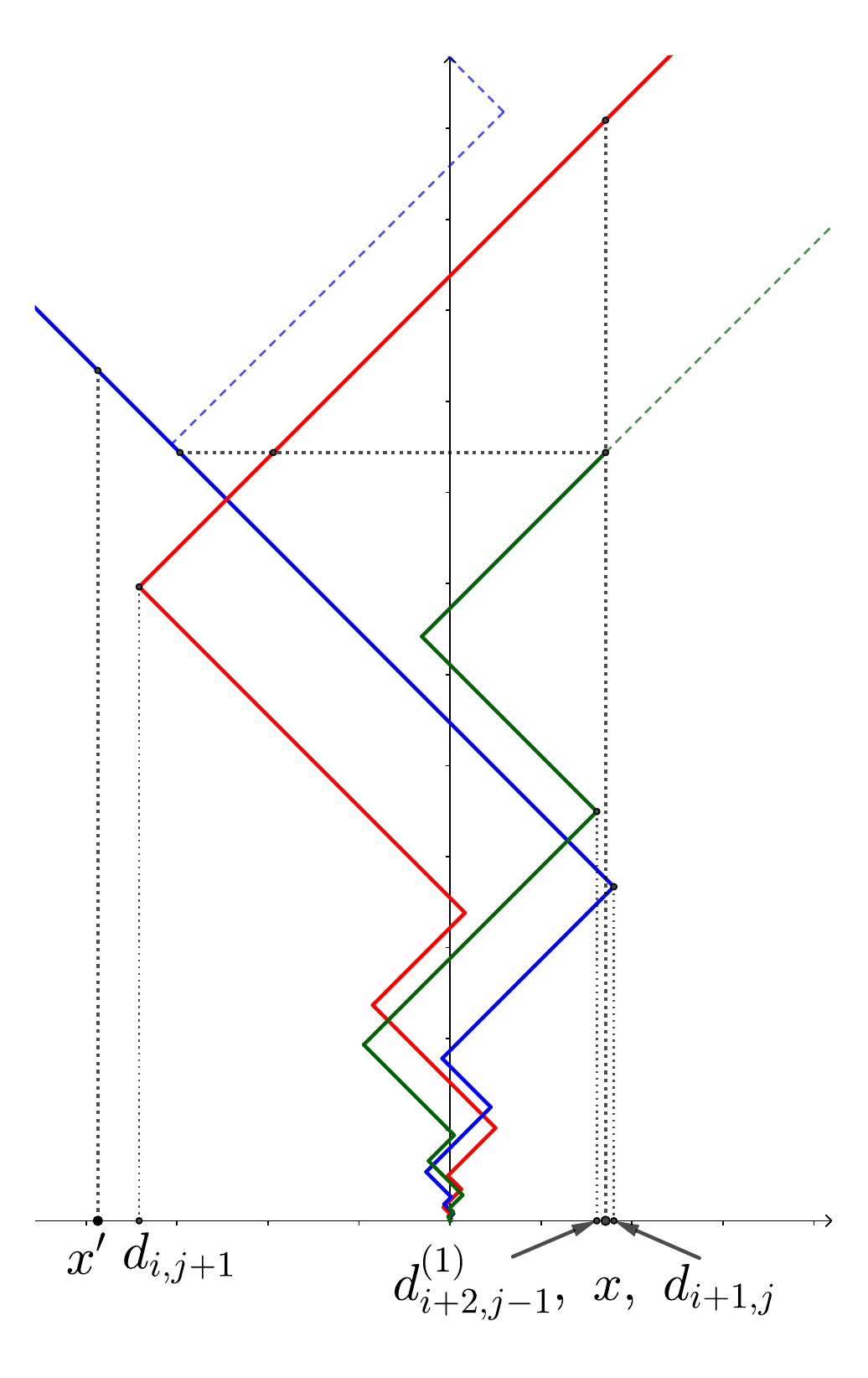}
                \caption{Setup for the case that the first announcement claims that the target is at location $x \in (d^{(1)}_{i+2,j-1},d_{i+1,j}]$ and the actual target is on the left side at location $x'$. Agent $i$ is in red, agent $i+1$ in blue, and agent $i+2$ in green. On the left side it is agent $i+1$ that falsely announces the target and on the right side it is agent $i+2$ that falsely announces the target. In either case the dashed lines show where the agents would be if the announcement were not made and the solid lines show the actual trajectories of the agents.\label{fig:search2}}
        \end{figure}

        Suppose that it is agent $i+1$ that announces the target and refer to the left side of Figure~\ref{fig:search2}. If the target is at location $x'$ to the right of $x$ then agent $i$ will reach the target after it visits its turning point $d_{i,j+1}$ and so the time at which agent $i$ reaches $x'$ is $t_{i,j+1}+|d_{i,j+1}|+x'$ and the competitive ratio will be
        \begin{align*}
                R &= 1 + \sup_{x'}  \frac{t_{i,j+1}+|d_{i,j+1}|}{x'} = 1+\frac{t_{i,j+1}+|d_{i,j+1}|}{d^{(1)}_{i+2,j-1}}\\
                &= 1+\frac{t_{i+2,j-1}+|d_{i+2,j-1}|}{d^{(1)}_{i+1,j-1}}
        \end{align*}
        where we have used our assumption that $x' > x > d^{(1)}_{i+2,j-1}$ and the fact that the ratio is independent of $i$ and $j$. This expression is, of course, the same as $R^B_1$ from Lemma~\ref{lm:RB} and so in this case the competitive ratio will be at most $\alpha$. Thus, we consider the case that the target is at location $x'$ to the left of the origin. In this case, the target must be at location $x' < d_{i,j+1}$ since otherwise agent $i$ would have already found it. Agent $i+2$ will receive the announcement as it is moving away from $d_{i+2,j-1}$ and so it will reach $x$ at the time $t_{i+2,j-1} + |d_{i+2,j-1}|+x$. The earliest it could reach the target at $x'$ will then be $t_{i+2,j-1} + |d_{i+2,j-1}|+2x+|x'|$. The competitive ratio is then
        \begin{align*}
                R = 1 + \sup \frac{t_{i+2,j-1} + |d_{i+2,j-1}|+2x}{|x'|}
        \end{align*}
        where the supremum is over $x \in (d^{(1)}_{i+2,j-1},d_{i+1,j}]$ and $x' < d_{i,j+1}$. We thus find that
        \begin{align*}
                R &= 1 + \frac{t_{i+2,j-1} + |d_{i+2,j-1}|+2|d_{i+1,j}|}{|d_{i,j+1}|}\\
                &= 1 + \frac{r^{1/3}(2q-1)|d_{i,j}| + r^{1/3}|d_{i,j}|+2r^{2/3}|d_{i,j}|}{r|d_{i,j}|}\\
                &= 1 + \frac{2(q+r^{1/3})}{r^{2/3}}.
        \end{align*}
        Substituting in the values $r = 6.833921$, and $q = 1.518949$ then yields $R = 2.897498 < \alpha$. Thus, when agent $i+1$ announces the target the competitive ratio is at most $\alpha$.

        Now suppose that it is agent $i+2$ that announces the target. Then we have the situation depicted on the right of Figure~\ref{fig:search2}. One can easily observe from this figure that this situation cannot be worse than the case when agent $i+1$ announces the target. Indeed, agent $i$ will reach a target $x' > x$ at the same time for either case, and agent $i+1$ will reach a target $x' < d_{i,j+1}$ earlier than agent $i+2$ would in the situation that $i+1$ was the announcing agent. Thus, the competitive ratio will be at most $\alpha$ when $x \in (d^{(1)}_{i+2,j-1},d_{i+1,j}]$. This completes the proof.
\end{proof}

\end{document}